\newtheorem{theorem}{Theorem}
\newtheorem{assump}{Assumption}
\newtheorem{proposition}{Proposition}
\newtheorem{sectheorem}{Theorem}[section]
\newtheorem{seclemma}{Lemma}[section]
\theoremstyle{definition}
\newtheorem{definition}{Definition}
\newtheorem{example}{Example}
\newtheorem{remark}{Remark}
\newtheorem{secdefinition}{Definition}[section]
\newtheorem{secremark}{Remark}[section]
\def\Snospace~{\S{}}
\def\thm@space@setup{
  \thm@preskip=15pt \thm@postskip=15pt 
}
\def\indep{\perp\!\!\!\perp}
\newcommand{\argmin}{\operatornamewithlimits{argmin}}
\newcommand{\cov}{\text{Cov}}
\newcommand{\var}{\text{Var}}
\newcommand{\E}{{\bf E}}
\newcommand{\R}{\mathbb{R}}
\newcommand{\F}{\mathcal{F}}
\newcommand{\N}{\mathcal{N}}
\newcommand{\M}{\mathcal{M}}
\newcommand{\C}{\mathcal{C}}
\newcommand{\prob}{{\bf P}}
\newcommand{\plimarrow}{\stackrel{p}\longrightarrow}
\newcommand{\dlimarrow}{\stackrel{d}\longrightarrow}
\newcommand{\ind}{\bm{1}}
\providecommand{\abs}[1]{\lvert#1\rvert} 
\providecommand{\norm}[1]{\lVert#1\rVert}
\newcommand*{\medcup}{\mathbin{\scalebox{1.5}{\ensuremath{\cup}}}}
\providecommand{\abs}[1]{\lvert#1\rvert} 
\providecommand{\norm}[1]{\lVert#1\rVert}
\renewcommand{\qed}{\hfill \mbox{\raggedright \rule{0.08in}{0.08in}}} 
\renewenvironment{proof}[1][\proofname]{{\noindent\sc#1. }}{\qed\vspace{15pt}} 
\title{\bf\sc Graph Neural Networks for Causal Inference Under Network Confounding\thanks{This paper was previously circulated under the title ``Unconfoundedness with Network Interference.'' We thank Ruonan Xu; seminar audiences at Duke, Singapore Management University, UCSD, Econometrics in Rio 2024, GraphEx2023, and the 2023 North American Summer Meetings; and the editor and anonymous referees for helpful comments.}}
\author{Michael P.\ Leung\thanks{Department of Economics, University of California, Santa Cruz. E-mail: leungm@ucsc.edu.} \and Pantelis Loupos\thanks{Graduate School of Management, University of California, Davis. E-mail: ploupos@ucdavis.edu.}}
\begin{document}
\maketitle
\onehalfspacing
 
\begin{abstract}

  {\sc Abstract.} This paper studies causal inference with observational data from a single large network. We consider a nonparametric model with interference in both potential outcomes and selection into treatment. Specifically, both stages may be the outcomes of simultaneous equations models, allowing for endogenous peer effects. This results in high-dimensional network confounding where the network and covariates of all units constitute sources of selection bias. In contrast, the existing literature assumes that confounding can be summarized by a known, low-dimensional function of these objects. We propose to use graph neural networks (GNNs) to adjust for network confounding. When interference decays with network distance, we argue that the model has low-dimensional structure that makes estimation feasible and justifies the use of shallow GNN architectures. 

  \bigskip

  \noindent {\sc JEL Codes}: C14, C31, C45

  \noindent {\sc Keywords}: causal inference, unconfoundedness, network interference, graph neural networks
 
\end{abstract}

\addcontentsline{toc}{part}{Main Paper}
\newpage

\section{Introduction}\label{sintro}

Treatment assignment is said to be unconfounded if it is as good as random within subpopulations of observationally equivalent units. When the stable unit treatment value assumption (SUTVA) is plausible, units with identical covariates are naturally considered observationally equivalent. However, when units are connected through a network, they may differ in other observed dimensions that may confound causal inference under network interference. These dimensions may include, for example, the number of type-$x$ neighbors, the number of type-$x$ neighbors with $m$ neighbors of type $y$, and so on through higher-order neighbors. 

Existing work typically adjusts for confounding only on a small subset of these dimensions such as the average covariates of neighbors. However, it may be difficult to justify a particular choice in practice. Neighbor covariates can influence selection into treatment in more complex ways not adequately captured by the mean or other statistic of convenience. 

This paper considers a more general formulation of unconfoundedness. Our objective is estimation of and inference on reduced-form measures of treatment and spillover effects using data from a single large network. We model selection as the outcome of a simultaneous equations model, which allows for peer effects. As a result, treatments are functions of the entire network $\bm{A}$ and all unit covariates $\bm{X}$, and it is generally not possible to summarize confounding by a simple low-dimensional function of these objects. Our unconfoundedness condition instead considers units observationally equivalent if they occupy identical positions in the network, meaning that they match on all observed neighborhood and higher-order neighborhood characteristics.

Methods that rule out interference in selection may result in biased estimates. For example, consider the causal effect of vaccine adoption on illness. With peer effects in adoption, vaccinated individuals tend to have more vaccinated direct and indirect social contacts, and a simple comparison of adopters and non-adopters may overstate vaccine efficacy, even after controlling for neighbors' covariates. 

Because peer effects in selection induce high-dimensional network confounding, they are challenging to accommodate in a nonparametric setting. Most of the literature on interference, including recent work using instrumental variables for identification, rules out selection peer effects \citep{ditraglia2023identifying,hoshino2023causal,kang2016peer}. Papers that allow for them typically rely on semiparametric, game-theoretic models of selection, which substantially reduce the dimensionality of the problem but may be subject to model misspecification \citep{hoshino2023treatment,jackson2020adjusting,kim2020analysis,lin2021selection}.\footnote{Two exceptions are \cite{balat2023multiple}, who study partial identification under strategic complementarities, and \cite{imai2021causal}, who study two-stage randomized trials with noncompliance. Both consider nonparametric selection models but require data consisting of many independent networks.}

To account for network confounding, an initial idea might be to apply double machine learning using the lasso for high-dimensional estimation \citep[e.g.][]{chernozhukov2018double}. Under SUTVA, implementation of the lasso requires the specification of a basis $\{P_k(X_i)\}_{k=1}^d$ for unit-level covariates $X_i$. However, in our setting, a unit $i$'s ``covariates'' correspond to its network position $(i,\bm{X},\bm{A})$ and it is unclear how to choose a basis $\{P_k(i,\bm{X},\bm{A})\}_{k=1}^d$ for such an object. Nothing in the standard toolbox for high-dimensional estimation suggests that it is possible to nonparametrically adjust for network confounding of this sort, which is presumably why it has been avoided in the literature.

\subsection{Contributions}

Our primary contribution is to propose a nonparametric model with endogenous peer effects and make the case that estimation of common reduced-form estimands remains feasible despite high-dimensional network confounding. We utilize the model of approximate neighborhood interference (ANI) proposed by \cite{leung2022causal}, which posits that interference decays with network path distance. Leung shows that ANI allows for endogenous peer effects but focuses on randomized assignment. In observational settings, there may also be peer effects in selection. We therefore relax his assumption of independent treatment assignment to allow for ANI in treatment selection. Since both stages are simultaneous equations models, this creates significant new complications in the form of high-dimensional network confounding.

We observe that graph neural networks (GNNs) can provide a flexible nonparametric basis for functions of network position. We propose to use doubly robust estimators with first-stage nuisance functions approximated by GNNs. Whereas a variety of conventional machine learners can be employed for the first stage under SUTVA, none are well suited to our setting, and the observation that GNNs can fill this gap is novel.

Since the first-stage nuisance functions are nonparametric, high-dimensional functions of network position, some form of low-dimensional structure is required for estimation to be feasible. Additionally, GNNs have been found empirically to perform best with shallow architectures, which correspond to relatively low-dimensional parameterizations, in contrast to deep architectures popular with convolutional neural networks \citep{alon2021bottleneck,li2018deeper}. Our key insight is to draw a novel connection between ANI and approximate sparsity conditions in the lasso literature. We argue that this connection justifies the use of shallow GNN architectures in our setting. 

To understand the idea, let $\N(i,K)$ denote $i$'s {\em $K$-neighborhood}, the set of units whose path distance from $i$ is at most $K$, and $(\bm{X}_{\N(i,K)}, \bm{A}_{\N(i,K)})$ denote the restriction of $(\bm{X},\bm{A})$ to $\N(i,K)$. The key parameter of a GNN is its {\em depth} or number of layers $L$, which determines the {\em receptive field} $(\bm{X}_{\N(i,L)}, \bm{A}_{\N(i,L)})$ used to predict $i$'s outcome. For example, a one-layer GNN only uses $i$'s 1-neighborhood $(\bm{X}_{\N(i,1)}, \bm{A}_{\N(i,1)})$ to predict its outcome, rather than the entirety of $(\bm{X},\bm{A})$. Accordingly, the choice of $L$ depends on prior information about the function being estimated. Under high-dimensional network confounding, it would seem that the propensity score depends nontrivially on the entirety of $(\bm{X},\bm{A})$, which would require a relatively large choice of $L$.

Under ANI, however, interference decays with distance, which implies that a unit $i$'s outcome and treatment are primarily determined by $(\bm{X}_{\N(i,L)}, \bm{A}_{\N(i,L)})$ for relatively small $L$. This is reminiscent of approximate sparsity, under which the regression function primarily depends on a small subset of regressors. As a result, our first-stage nuisance functions can be approximated by lower-dimensional analogs that only depend on the $L$-neighborhood, and these can be directly estimated with shallow $L$-layer GNNs. Our formal result provides primitive conditions on interference rationalizing small choices of $L$ of order $\log n$.

Existing work on network interference assumes the correlation structure is characterized by a dependency graph \citep{emmenegger2022treatment,ogburn2022causal}, which requires dependence between units to be zero after a known network distance. This is analogous to $m$-dependence and incompatible with endogenous peer effects. We establish that the data in our setting is $\psi$-dependent \citep{kojevnikov2021limit}, a correlation structure more similar to mixing and near-epoch dependence in that it allows correlation to decay smoothly with distance. Under high-level conditions on the GNNs, we show that the doubly robust estimator is approximately normally distributed in large networks. We propose a HAC-type variance estimator based on \cite{kojevnikov2021limit} and \cite{kojevnikov2021bootstrap} which we prove is non-negative and consistent. We also provide a bandwidth that adjusts for estimation error in the first-stage machine learners. 

We substantiate the theory in a simulation study and empirical illustration revisiting the microfinance diffusion application of \cite{he2024measuring}. We show how our estimands can capture different dimensions of diffusion complementary to their ``average diffusion at the margin'' measure. Our theoretical framework allows for more complex diffusion processes without requiring the econometrician to prespecify the maximum number of within-period rounds of diffusion. By including richer controls that account for network confounding, we find more attenuated diffusion effects.

\subsection{Related Literature}\label{srelit}

There is a large literature on interference primarily focusing on randomized control trials \citep[e.g.][]{athey2018exact,li2022random,toulis2013estimation}. We contribute to a growing recent literature on unconfoundedness, much of which operates in a partial interference setting where units are partitioned into disjoint groups with no interference across groups \citep[e.g.][]{liu2019doubly,qu2022efficient}. 

\cite{veitch2019using} consider network interference and propose to use as controls ``node embeddings,'' which are learned functions of the graph. Node embeddings can be obtained from a variety of methods, but it is unclear how to justify a particular choice. GNNs learn node embeddings in an end-to-end fashion (see \autoref{sgnn}), and our behavioral model provides justification for their use.

\cite{auerbach2022identification} studies identification conditions distinct from unconfoundedness but proposes a related strategy of ``matching'' on certain network statistics. He provides conditions under which pairwise differencing using unit pairs matched on a novel codegree statistic eliminates selection bias.

Prior to the GNN literature, graph kernels were the dominant method for graph learning tasks \citep{morris2021weisfeiler}. These are to kernel regression as GNNs are to sieve estimation, so graph kernels require a user-specified measure of similarity between regressors, in this case, between two graphs. \cite{auerbach2023local} propose a graph kernel estimator using a novel similarity measure based on graph isomorphism. Since there is no known algorithm for isomorphism testing with polynomial runtime in the network size (\autoref{sgnnprop} discusses GNNs' relationship to this problem), many graph kernel approaches amount to specifying an ``embedding,'' a mapping from networks to Euclidean space \citep{kriege2020survey}. As noted by \cite{wu2020comprehensive}, embeddings are predetermined functions of the network, whereas GNNs produce learnable embeddings.

Finally, our paper contributes to a recent econometric literature using neural networks as sieve estimators for regression functions \citep{athey2021using,farrell2021deep,kaji2020adversarial}. Whereas other machine learners can theoretically be employed in these settings, our problem cannot be solved with standard methods. In this respect, our work relates to \cite{pollmann2021causal} who employs convolutional neural networks to construct counterfactuals for spatial treatments.

The next section defines the model and estimators. We introduce GNNs in \autoref{sgnn} and characterize the asymptotic properties of our estimators in \autoref{slarge}. In \autoref{srfield}, we provide conditions under which the nuisance functions have low-dimensional structure. Section \ref{smc} reports results from a simulation study, and \autoref{sapp} presents the empirical application. Finally, \autoref{sconclude} concludes.

We represent an undirected network $\bm{A}$ as an $n\times n$ binary adjacency matrix with $ij$th entry $A_{ij} \in \{0,1\}$ representing a link between units $i$ and $j$. We assume no self-links, so $A_{ii}=0$. Let $\ell_{\bm{A}}(i,j)$ denote the {\em path distance} between $i,j$ in $\bm{A}$, defined as the length of the shortest path between them, if one exists, and $\infty$ otherwise. The {\em $K$-neighborhood} of a unit $i$ in $\bm{A}$ is denoted by $\N(i,K) = \{j\in\N_n\colon \ell_{\bm{A}}(i,j) \leq K\}$ and its size by $n(i,K) = \abs{\N(i,K)}$. We refer to the elements of $\N(i,K)\backslash\{i\}$ for $K=1$ as $i$'s {\em neighbors} and the elements of the same set for $K>1$ as $i$'s {\em higher-order} neighbors. A unit $i$'s {\em degree} is $n(i,1)-1$, the number of neighbors.

\section{Setup}

Let $\N_n = \{1, \dots, n\}$ be the set of units connected through the network $\bm{A}$. Each unit $i\in\N_n$ is endowed with unobservables $(\varepsilon_i, \nu_i) \in \R^{d_\varepsilon} \times \R^{d_\nu}$ and observables $X_i \in \R^{d_x}$. The model primitives determine outcomes and treatments according to
\begin{equation}
  Y_i = g_n(i,\bm{D},\bm{X},\bm{A},\bm{\varepsilon}) \quad\text{and}\quad D_i = h_n(i,\bm{X},\bm{A},\bm{\nu}), \label{YD}
\end{equation}

\noindent respectively, where $\bm{X} = (X_i)_{i=1}^n$ is the matrix with $i$th row equal to $X_i'$; $\bm{Y}$, $\bm{D}$, $\bm{\varepsilon}$, and $\bm{\nu}$ are similarly defined; and $\{(g_n,h_n)\}_{n\in\mathbb{N}}$ is a sequence of function pairs such that each $g_n(\cdot)$ has codomain $\R$ and $h_n(\cdot)$ codomain $\{0,1\}$. The econometrician observes $(\bm{Y}, \bm{D}, \bm{X}, \bm{A})$, and our analysis treats $(\bm{A}, \bm{X}, \bm{\varepsilon}, \bm{\nu})$ as random.\footnote{The asymptotic theory in \autoref{slarge} conditions on $(\bm{X},\bm{A})$ to avoid imposing additional assumptions on its dependence structure. A design-based analysis would additionally condition on $\bm{\varepsilon}$, but this would generally preclude consistent estimation of the nonparametric functions in the doubly robust estimator defined in \autoref{sest}.} 

The timing of the model is as follows. First, nature draws the primitives $(\bm{A},\bm{X},\bm{\varepsilon},\bm{\nu})$. Next, units select into treatment, potentially based on other units' decisions, and $h_n(\cdot)$ is the reduced-form outcome of that process. Finally, $g_n(\cdot)$ is the reduced form of the subsequent process that generates outcomes. Because $g_n(\cdot)$ and $h_n(\cdot)$ may depend on the primitives of all units, the setup allows $\bm{Y}$ and $\bm{D}$ to be outcomes of simultaneous equations models with endogenous peer effects, as shown in the next examples.

\begin{example}[Linear-in-Means]\label{elim}
  Consider the outcome model 
  \begin{equation*}
    Y_i = \alpha + \beta \frac{\sum_{j=1}^n A_{ij} Y_j}{\sum_{j=1}^n A_{ij}} + \frac{\sum_{j=1}^n A_{ij} Z_j'}{\sum_{j=1}^n A_{ij}} \gamma + Z_i'\delta + \varepsilon_i, 
  \end{equation*}
  
  \noindent where $Z_i = (D_i, X_i')'$ \citep{bramoulle2009identification,manski1993identification}.  The coefficient $\beta$ captures endogenous peer effects, the influence of neighbors' outcomes on own outcomes, while $\gamma$ captures exogenous peer effects, the influence of neighbors' treatments and covariates. Letting $\tilde{\bm{A}}$ denote the row-normalized adjacency matrix and $\bm{1}$ the $n$-dimensional vector of ones, if $\bm{A}$ is connected, the reduced form of the model can be written in matrix form as
  \begin{equation*}
    \bm{Y} =\frac{\alpha}{1-\beta} \ind + \bm{Z}\delta + \sum_{k=0}^\infty \beta^{k+1} \tilde{\bm{A}}^{k+1}\bm{Z}\gamma + \sum_{k=0}^\infty \beta^k \tilde{\bm{A}}^k \bm{\varepsilon}. 
  \end{equation*}

  \noindent This characterizes $Y_i$ as a function $g_n(i,\bm{D},\bm{X},\bm{A},\bm{\varepsilon})$. 
\end{example}

\begin{example}[Binary Game]\label{ebg}
  Consider a binary analog of \autoref{elim} for selection into treatment
  \begin{equation}
    D_i = \bm{1}\left\{ \alpha + \beta \frac{\sum_{j=1}^n A_{ij} D_j}{\sum_{j=1}^n A_{ij}} + \frac{\sum_{j=1}^n A_{ij} X_j'}{\sum_{j=1}^n A_{ij}}\gamma + X_i'\delta + \nu_i > 0 \right\}. \label{Dcomp}
  \end{equation}

  \noindent Unlike \autoref{elim}, there may exist multiple equilibria. The equilibrium selection mechanism can be represented as a reduced-form mapping from the primitives $(\bm{X},\bm{A},\bm{\nu})$ to outcomes $\bm{D}$. This characterizes $D_i$ as a function $h_n(i, \bm{X}, \bm{A}, \bm{\nu})$.\footnote{Our notation represents the equilibrium selection mechanism as a deterministic function of structural primitives, whereas the literature typically represents it as a distribution over equilibria conditional on structural primitives. These two formulations are equivalent because each $\nu_i$ can contain payoff-irrelevant random variables that can be used to generate a distribution over equilibria \citep[see][Remark 1]{leung2021normal}.}

  The previous model corresponds to a game of complete information, whereas in a game of incomplete information, as modeled by \cite{xu2018social} for instance, a unit $i$'s information set is $(\nu_i, \bm{X}, \bm{A})$. The corresponding analog of \eqref{Dcomp} replaces each $D_j$ with $\sigma_j(\bm{X},\bm{A})$, the equilibrium belief that $D_j=1$. Beliefs are typically determined by an equilibrium selection mechanism that is degenerate in the sense that it only depends on public information $(\bm{X}, \bm{A})$. This characterizes $D_i$ as a function $h_n(i,\bm{X},\bm{A},\nu_i)$.
\end{example}

\begin{example}[Diffusion]\label{ediff}
  \cite{he2024measuring} study the following two-period diffusion model. Let $D_i$ denote $i$'s decision to adopt microfinance in period 0 and $Y_i$ its decision in period 1. Their equations (2.4) and (3.6) posit that
  \begin{equation*}
    Y_i = g_n(\bm{D}_{\N(i,K)}, \varepsilon_i) \quad\text{and}\quad D_i = \ind\{W_i'\gamma > \nu_i\},
  \end{equation*}

  \noindent where $W_i$ is a known function of $(\bm{X},\bm{A})$ and $K$ is the maximum distance that adoption decisions can diffuse through the network between periods 0 and 1. We provide a more detailed comparison of our models in \autoref{sapp}.
\end{example}

Given specification \eqref{YD}, we define potential outcomes as
\begin{equation*}
  Y_i(\bm{d}) = g_n(i,\bm{d},\bm{X},\bm{A},\bm{\varepsilon}).
\end{equation*}

\noindent Confounding may arise because $Y_i(\bm{d})$ is potentially correlated with $D_i$ due to high-dimensional observables $(\bm{X},\bm{A})$ and the unobservables $\bm{\varepsilon}$ and $\bm{\nu}$ may be correlated. We restrict the second source of confounding.

\begin{assump}[Unconfoundedness]\label{aigno}
  For any $n\in\mathbb{N}$, $\bm{\varepsilon} \indep \bm{\nu} \mid \bm{X}, \bm{A}$.
\end{assump}

As discussed below, unconfoundedness conditions used in the existing literature additionally limit the first source of confounding to known summary statistics of $(\bm{X},\bm{A})$. Ours is more analogous to standard formulations of unconfoundedness under SUTVA since we do not impose an index restriction on observed confounding.

Because the econometrician only observes a single network, a large-sample theory requires restrictions on interference to obtain some form of weak dependence. We next specify a nonparametric model of decaying interference that accommodates the previous examples. For any $S \subseteq \mathcal{N}_n$, let $\bm{D}_S = (D_i)_{i\in S}$, and similarly define $\bm{X}_S$ and other such submatrices. Let $\bm{A}_S = (A_{ij})_{i,j \in S}$ denote the subnetwork of $\bm{A}$ on $S$, formally the submatrix of $\bm{A}$ restricted to $S$. Recall that $\N(i,s)$ is the $s$-neighborhood of $i$ in $\bm{A}$.

\begin{assump}[ANI]\label{aani}
  There exists a sequence of functions $\{(\gamma_n(\cdot),\eta_n(\cdot))\}_{n\in\mathbb{N}}$ with $\gamma_n,\eta_n\colon \R_+\rightarrow \R_+$ such that $\sup_{n\in\mathbb{N}} \max\{\gamma_{n}(s), \eta_{n}(s)\} \stackrel{s\rightarrow\infty}\longrightarrow 0$ and, for any $n\in\mathbb{N}$,
  \begin{multline}
    \max_{i\in\N_n} \E\big[\lvert g_n(i, \bm{D}, \bm{X}, \bm{A}, \bm{\varepsilon}) \\ - g_{n(i,s)}(i, \bm{D}_{\N(i,s)}, \bm{X}_{\N(i,s)}, \bm{A}_{\N(i,s)}, \bm{\varepsilon}_{\N(i,s)}) \rvert \mid \bm{D}, \bm{X}, \bm{A} \big] \leq \gamma_{n}(s) \quad\text{a.s.} \label{aniy}
  \end{multline}

  \noindent and
  \begin{equation}
    \max_{i\in\N_n} \E\big[\abs{h_n(i, \bm{X}, \bm{A}, \bm{\nu}) - h_{n(i,s)}(i, \bm{X}_{\N(i,s)}, \bm{A}_{\N(i,s)}, \bm{\nu}_{\N(i,s)})} \mid \bm{X}, \bm{A} \big] \leq \eta_n(s) \quad\text{a.s.} \label{anid}
  \end{equation}
\end{assump}

\noindent This is analogous to the model of approximate neighborhood interference proposed by \cite{leung2022causal} but imposed on both the outcome and selection models. Whereas $g_n(i,\dots)$ is unit $i$'s realized outcome, $g_{n(i,s)}(i,\dots)$ is its outcome under a counterfactual ``$s$-neighborhood model.'' In the latter case, we fix all model primitives and treatments at their realized values, drop units outside of $\N(i,s)$ from the model, and direct the remaining units to interact according to the process $g_{n(i,s)}(\cdot)$ to produce counterfactual $s$-neighborhood outcomes.\footnote{This formulation of ANI is related to an estimation strategy proposed by \cite{xu2018social} for binary games on networks with incomplete information. His idea is to approximate an agent $i$'s strategy in the $n$-agent game with its strategy in the counterfactual game restricted to $i$'s $s$-neighborhood.} The error from approximating the observed outcome with the $s$-neighborhood counterfactual is bounded by $\gamma_n(s)$, which decays with the neighborhood radius $s$. This formalizes the idea that $Y_i$ is primarily determined by units relatively proximate to $i$, so that the further a unit is from $i$, the less it influences $i$'s outcome. The second equation imposes the analogous requirement on $D_i$. It is trivially satisfied if $D_i$ is i.i.d., in which case $h_n(i, \bm{X}, \bm{A}, \bm{\nu})$ is only a function of $\nu_i$ for i.i.d.\ $\{\nu_i\}_{i=1}^n$.

\begin{example}\label{eexp}
  For the linear-in-means model in \autoref{elim}, an argument similar to Proposition 1 of \cite{leung2022causal} shows that \eqref{aniy} holds with $\sup_n \gamma_n(s) \leq C\abs{\beta}^s$ for some $C>0$. For the binary game in \autoref{ebg}, an argument similar to Proposition 2 of \cite{leung2022causal} establishes \eqref{anid} with $\sup_n \eta_n(s)$ decaying at an exponential rate with $s$. Finally, for the \cite{he2024measuring} diffusion model in \autoref{ediff}, $Y_i$ only depends on $\bm{D}$ through $\bm{D}_{\N(i,K)}$, so \eqref{aniy} holds with $\gamma_n(s) = c\,\ind\{s < K\}$ for some universal constant $c$. In their empirical application, they use own covariates as controls, so $W_i=X_i$, in which case \eqref{anid} holds with $\eta_n(s) = 0$ for all $s$.
\end{example}

\subsection{Previous Approaches}\label{srelit2}

The first contribution of this paper is to propose a potential outcomes model and formulation of unconfoundedness with richer microfoundations relative to the literature. The second contribution is to show that inference remains possible despite the high-dimensional network confounding unique to our setting. We next elaborate on the first contribution by comparing our model to the literature. The second contribution will be the focus of the remainder of the paper.

The standard SUTVA model and unconfoundedness condition correspond to
\begin{equation}
  Y_i = g(D_i,X_i,\varepsilon_i) \quad\text{and}\quad \varepsilon_i \indep D_i \mid X_i. \label{SUTVA}
\end{equation}

\noindent To generalize this setup to allow for network interference, the literature proceeds as follows. Define
\begin{equation}
  T_i = f_n(i, \bm{D}, \bm{A}) \quad\text{and}\quad W_i = q_n(i, \bm{X}, \bm{A}) \label{YKnbhd}
\end{equation}

\noindent where $f_n(\cdot)$ and $q_n(\cdot)$ are known vector-valued functions. The {\em effective treatment} \citep{manski2013identification} or {\em exposure mapping} \citep{aronow_estimating_2017} $T_i$ is a low-dimensional function of the treatment assignment vector. The {\em network controls} $W_i$ are low-dimensional functions of the covariates. The literature commonly employs the {\em neighborhood interference} model and unconfoundedness condition
\begin{equation}
  Y_i = g(T_i,W_i,\varepsilon_i) \quad\text{and}\quad \varepsilon_i \indep T_i \mid W_i, \label{SOOKnbhd}
\end{equation}

\noindent which is a direct generalization of \eqref{SUTVA} \citep{emmenegger2022treatment,forastiere2021identification,ogburn2022causal}. Here $T_i$ entirely summarizes interference while $W_i$ summarizes confounding.

Common examples of $T_i$ and $W_i$ are 
\begin{equation}
  T_i = \begin{pmatrix} D_i & \sum_{j=1}^n A_{ij} D_j \end{pmatrix} \quad\text{and}\quad W_i = \begin{pmatrix} X_i & \sum_{j=1}^n A_{ij} & \frac{\sum_{j=1}^n A_{ij}X_j}{\sum_{j=1}^n A_{ij}} \end{pmatrix}. \label{TWex}
\end{equation}

\noindent Versions of these are employed in the numerical illustrations of the previous three references. The specification of $T_i$ implies that $Y_i$ depends on $\bm{D}$ only through two statistics: own treatment and the number of treated neighbors. Under model \eqref{SOOKnbhd}, variation in the first component identifies a direct treatment effect and variation in the second a spillover effect. Like most exposure mappings used in the literature, this only depends on $\bm{D}_{\N(i,1)}$, so the outcome model \eqref{SOOKnbhd} implies no interference beyond the 1-neighborhood. Likewise, the choice of $W_i$ implies no confounding beyond 1-neighborhood covariates. More generally, one could restrict the outcome model to depend only on the $K$-neighborhood treatments $\bm{D}_{\N(i,K)}$ for some fixed threshold $K$. Unlike \autoref{aani}, this does not allow for endogenous peer effects. 

Whereas \cite{leung2022causal} focuses on randomized experiments, we study observational data on economic agents who choose to select into treatment. It then becomes important to specify a model of selection rationalizing the econometrician's choice of controls. \cite{sanchez2022spillovers} is the first to provide such a model. Under neighborhood interference \eqref{SOOKnbhd} and an exposure mapping similar to \eqref{TWex}, he shows that it is sufficient to set $W_i=X_i$, that is, to solely control for own covariates. Since much of the literature utilizes controls such as \eqref{TWex}, this raises the question of what model of selection justifies their use or more broadly the use of ``network controls'' that depend more generally on $(\bm{X},\bm{A})$.

Our model \eqref{YD} provides an answer. The presence of complex interference in both the outcome and treatment stages induces selection on $(\bm{X},\bm{A})$, so that it is generally insufficient to control only for a simple summary statistic such as \eqref{TWex}. Our outcome model is considerably more general than \eqref{SOOKnbhd} because we do not require existence of a low-dimensional function $T_i$ of $(\bm{D},\bm{A})$ summarizing interference. Our unconfoundedness condition (\autoref{aigno}) is likewise considerably weaker than \eqref{SOOKnbhd} because we do not require existence of a low-dimensional function $W_i$ of $(\bm{X},\bm{A})$ summarizing confounding.

\subsection{Estimand}\label{eee}

Following much of the literature, we focus on estimands defined by exposure mappings \citep{leung2024causal,savje2021causal}, though the core idea of accounting for high-dimensional network confounding using GNNs may potentially be applied to other estimands. Recall from the previous subsection the definition of the exposure mapping $T_i = f_n(i, \bm{D}, \bm{A})$, where $\{f_n\}_{n\in\mathbb{N}}$ is a sequence of functions with common codomain $\mathcal{T}$, a discrete subset of $\R^{d_t}$. Let $\M_n \subseteq \N_n$ be a subset of the units and $m_n = \abs{\M_n}$. We study the estimand
\begin{multline*}
  \tau(t,t') = \frac{1}{m_n} \sum_{i\in\M_n} \big( \mu_t(i,\bm{X},\bm{A}) - \mu_{t'}(i,\bm{X},\bm{A}) \big) \\ \text{for}\quad \mu_t(i,\bm{x},\bm{a}) = \E[Y_i \mid T_i=t, \bm{X}=\bm{x}, \bm{A}=\bm{a}] \quad\text{and}\quad t,t' \in \mathcal{T}.
\end{multline*}

\noindent This compares average outcomes of units under two different values of the exposure mapping while adjusting for high-dimensional network confounders. The comparison is restricted to a subpopulation $\M_n$, the choice of which can be important for ensuring overlap, as discussed in the following examples. Depending on the choice of $f_n(\cdot)$, $t$, and $t'$, $\tau(t,t')$ may be intended to measure an average treatment or spillover effect, as illustrated in the examples below.

\begin{example}\label{eATE}
  Let $T_i=D_i$ and $\M_n=\N_n$. Then $\tau(1,0)$ compares average outcomes of treated and untreated units using the full network, which is intended to measure the direct effect of the treatment.
\end{example}

\begin{example}\label{ehastn}
  Consider the exposure mapping
  \begin{equation*}
    T_i = \begin{pmatrix} D_i & \ind\left\{\sum_{j=1}^n A_{ij}D_j > 0\right\} \end{pmatrix}.
  \end{equation*}

  \noindent For $t=(0,1)$ and $t'=(0,0)$, $\tau(t,t')$ compares the average outcomes of untreated units with and without at least one treated neighbor, which is intended to capture a spillover effect. For $t=(1,0)$ and $t'=(0,0)$, it compares the average outcomes of treated and untreated units with no treated neighbors. For overlap, we need to exclude units with no neighbors since a treated neighbor occurs with probability zero for such units. This is accomplished by choosing $\M_n$ to be the subset of units whose degree $n(i,1)-1$ lies in some desired set excluding zero. That is, 
  \begin{equation}
    \M_n = \{i \in \N_n\colon n(i,1)-1 \in \bm{\Gamma}\} \quad\text{for some}\quad \bm{\Gamma} \subseteq \R_+\backslash\{0\}. \label{Mn}
  \end{equation}
\end{example}

\begin{example}\label{eextn}
  We obtain a more granular version of \autoref{ehastn} by setting 
  \begin{equation}
    T_i = \begin{pmatrix} D_i & \sum_{j=1}^n A_{ij} D_j \end{pmatrix} \label{Ti5}
  \end{equation}

  \noindent and $\M_n$ according to \eqref{Mn} with $\bm{\Gamma} = \{\gamma\}$ for some $\gamma \in \mathbb{N}$. If we choose $\gamma=3$, $t=(0,2)$ and $t'=(0,0)$, then $\tau(t,t')$ takes the subpopulation of untreated units with degree three and compares those with two versus zero treated neighbors. For this choice of $t,t'$, we require $\gamma\geq 2$ for overlap.
\end{example}

Our large-sample results pertain to the following subpopulations and exposure mappings, which include the previous examples. 

\begin{assump}[Exposure Mappings]\label{aemap}
  Let $\M_n$ be given by \eqref{Mn} for some possibly unbounded interval $\bm{\Gamma} \subseteq \R_+$. For any $t \in \mathcal{T}$, there exist $d \in \{0,1\}$ and a possibly unbounded interval $\bm{\Delta} \subseteq \bm{\Gamma}$ such that for all $i$
  \begin{equation*}
    \bm{1}\{T_i=t\} = \bm{1}\left\{D_i = d,\,\, \sum_{j=1}^n A_{ij}D_j \in \bm{\Delta}\right\}.
  \end{equation*}
\end{assump}

\noindent In \autoref{ehastn} for $t=(0,1)$, this holds for $d=0$, $\bm{\Delta} = (0,\infty)$, and $\bm{\Gamma}$ given in the example. In \autoref{eextn} with $t=(0,2)$, this holds for $d=0$, $\bm{\Delta}=[1.5,2.5]$, and $\bm{\Gamma}=[2.5,3.5]$. 

We restrict to this class of mappings for two reasons. First, it includes the most widely used examples in the literature, which are those presented above. Second, $T_i$ is a complex and discontinuous function of treatments, which are in turn complex functions of the structural primitives. We require additional structure on $T_i$ to characterize the dependence structure of the data and apply a CLT.

\cite{leung2024causal} provides conditions under which $\tau(t,t')$ has a causal interpretation. The following result is a direct consequence of his Theorem A.1. Let $p_{i,t}(\cdot \mid \bm{x},\bm{a})$ denote the conditional probability mass function of $\bm{D} \mid T_i=t, \bm{X}=\bm{x}, \bm{A}=\bm{a}$.

\begin{proposition}\label{pexp}
  Let $t\geq t'$ in the usual partial order. Suppose the exposure mapping $f_n(i,\cdot)$ is componentwise nondecreasing for all $i$ and satisfies \autoref{aemap} (as in Examples \ref{eATE}--\ref{eextn}). Further suppose $\{D_i\}_{i=1}^n$ is independent conditional on $(\bm{X},\bm{A})$. Under \autoref{aigno}, for all $i\in\N_n$ there exists a monotone coupling $\bm{D}_{i,t}^* \stackrel{a.s.}\geq \bm{D}_{i,t'}^*$ independent of $\bm{D}_{-\N(i,1)}$ with $\bm{D}_{i,s}^* \sim p_{i,s}(\cdot \mid \bm{X}, \bm{A})$ for all $s \in \{t,t'\}$ such that 
  \begin{equation*}
    \tau(t,t') = \frac{1}{n} \sum_{i=1}^n \E\big[ Y_i(\bm{D}_{i,t}^*, \bm{D}_{-\N(i,1)}) - Y_i(\bm{D}_{i,t'}^*, \bm{D}_{-\N(i,1)}) \mid \bm{X}, \bm{A} \big]. 
  \end{equation*}
\end{proposition}

\noindent The result represents $\tau(t,t')$ as a convex average of unit-level effects of the form $Y_i(\bm{d}_{i,t}, \bm{d}_{-\N(i,1)}) - Y_i(\bm{d}_{i,t'}, \bm{d}_{-\N(i,1)})$ such that the exposure mapping evaluates to $t$ under the assignment vector $(\bm{d}_{i,t}, \bm{d}_{-\N(i,1)})$ and likewise for $t'$. In \autoref{eextn} with $t=(0,2)$, $\bm{d}_{i,t}$ is the vector with dimension $n(i,1)$ such that $i$ is untreated and has exactly two treated neighbors. Furthermore, $\bm{d}_{i,t},\bm{d}_{i,t'}$ are partially ordered, which means that increasing the exposure mapping from $t$ to $t'$ pushes more neighborhood units into treatment.

Treatments are conditionally independent if treatment adoption follows a nonparametric game of incomplete information $D_i = h_n(i, \bm{X}, \bm{A}, \nu_i)$ as in \autoref{ebg} and private information $\nu_i$ is independently distributed across units conditional on $(\bm{X},\bm{A})$. Structural analyses commonly assume private information is i.i.d.\ and independent of public information \citep[e.g.][]{lin2021selection,xu2018social}.

\subsection{Estimator}\label{sest}

Define the nuisance functions
\begin{equation*}
  p_t(i,\bm{X},\bm{A}) = \prob(T_i=t \mid \bm{X}, \bm{A}) \quad\text{and}\quad \mu_t(i,\bm{x},\bm{a}) = \E[Y_i \mid T_i=t, \bm{X}=\bm{x}, \bm{A}=\bm{a}], 
\end{equation*}

\noindent which we refer to respectively as the (generalized) propensity score \citep{imbens2000role} and outcome regression. Let $\hat{p}_t(i,\bm{X},\bm{A})$ and $\hat{\mu}_t(i,\bm{x},\bm{a})$ denote their respective GNN estimators, which will be defined in \autoref{sgnnest}. We use a standard doubly robust estimator for multi-valued treatments
\begin{equation*}
  \hat\tau(t,t') = \frac{1}{m_n} \sum_{i\in\M_n} \hat{\tau}_i(t,t'),
\end{equation*}

\noindent where
\begin{multline*}
  \hat{\tau}_i(t,t') = \frac{\bm{1}\{T_i=t\} (Y_i - \hat{\mu}_t(i,\bm{X},\bm{A}))}{\hat{p}_t(i,\bm{X},\bm{A})} + \hat{\mu}_t(i,\bm{X},\bm{A}) \\ - \frac{\bm{1}\{T_i=t'\} (Y_i - \hat{\mu}_{t'}(i,\bm{X},\bm{A}))}{\hat{p}_{t'}(i,\bm{X},\bm{A})} - \hat{\mu}_{t'}(i,\bm{X},\bm{A}).
\end{multline*}

To define the variance estimator, we need some notation. Define the bandwidth
\begin{equation}
  b_n = \lceil \tilde b_n \rceil \quad\text{for}\quad \tilde b_n = \left\{ \begin{array}{cc} \frac{1}{4} \mathcal{L}(\bm{A}) & \text{if } \mathcal{L}(\bm{A}) < 2\frac{\log n}{\log \delta(\bm{A})}, \\ \mathcal{L}(\bm{A})^{1/4} & \text{otherwise}, \end{array} \right. \label{ourb}
\end{equation}

\noindent where $\lceil\cdot\rceil$ rounds up to the nearest integer, $\delta(\bm{A}) = n^{-1} \sum_{i=1}^n (n(i,1)-1)$ is the average degree, and $\mathcal{L}(\bm{A})$ is the average path length.\footnote{We assume $\delta(\bm{A})>1$, as is typical in practice. By the average path length, we mean the average over all unit pairs in the largest component of $\bm{A}$. A component is a connected subnetwork such that all units in the subnetwork have infinite path distance to non-members of the subnetwork.} Define $\bm{K}^U$ and $\bm{K}^{PD}$ as $m_n\times m_n$ matrices with respective $ij$th entries $\mathbf{1}\{\ell_{\bm{A}}(i,j)\leq b_n\}$ and 
\begin{equation}
  \frac{\abs{\mathcal{N}(i,b_n/2) \cap \mathcal{N}(j,b_n/2)}}{\abs{\mathcal{N}(i,b_n/2)}^{1/2} \abs{\mathcal{N}(j,b_n/2)}^{1/2}} \label{PSDK}
\end{equation}

\noindent where the rows and columns of both matrices range over $i,j \in \mathcal{M}_n$. We propose the estimator
\begin{align*}
  \hat\sigma^2 = \max\{\hat\sigma^2_U, \hat\sigma^2_{PD}\} \quad&\text{where}\quad \hat\sigma^2_q = \frac{1}{m_n} \tilde{\bm{\tau}}' \bm{K}^q \tilde{\bm{\tau}} \quad\text{for}\quad q \in \{U, PD\}, \\
  &\tilde{\bm{\tau}} = \big(\hat{\tau}_i(t,t') - \hat\mu_t(i,\bm{X},\bm{A}) + \hat\mu_{t'}(i,\bm{X},\bm{A})\big)_{i\in\M_n}.
\end{align*}

\noindent This takes the larger of two HAC estimators using different kernels. Since $\bm{K}^{PD}$ is positive semidefinite, $\hat\sigma^2$ is always non-negative.\footnote{Let $\bm{G}$ be the matrix obtained from $\bm{K}^U$ by dividing each row of the latter by the square root of its row sum and replacing $b_n$ with $b_n/2$. Then $\bm{K}^{PD} = \bm{G}'\bm{G}$, which is positive semidefinite.} 

\begin{remark}
  The bandwidth \eqref{ourb} is similar to the proposal of \cite{leung2022causal} but with constants adjusted to account for the first-stage estimates. The estimator $\hat\sigma^2_U$ is a network HAC estimator with uniform kernel \citep{kojevnikov2021limit}, while $\hat\sigma^2_{PD}$ is inspired by an estimator proposed by \cite{kojevnikov2021bootstrap}.\footnote{Our weights \eqref{PSDK} differ from the kernel weights in equation (4.2) of \cite{kojevnikov2021bootstrap} in the denominator. We make this change to ensure that the weights are uniformly bounded above and tend to one with the sample size when units are connected. \cite{kojevnikov2021bootstrap} has to assume these properties (his Assumption 4.1) which is an implicit restriction on $\bm{A}$.} It was previously found in simulations that sloped kernels, including \eqref{PSDK}, can substantially overreject in simulations compared to the (non-sloped) uniform kernel under network dependence \citep{leung2022causal}. We thus propose to take the larger of the two alternatives.
\end{remark}

\begin{remark}
  Our large-sample theory conditions on $(\bm{X},\bm{A})$. Under conventional unconditional or ``superpopulation'' asymptotics, the correct analogs of the HAC estimators $\hat\sigma^2_U$ and $\hat\sigma^2_{PD}$ would be obtained by redefining $\tilde{\bm{\tau}} = (\hat{\tau}_i(t,t') - \hat\tau(t,t'))_{i\in\M_n}$. That is, we center at an estimate of the unconditional mean. Under our conditional asymptotics, the superpopulation estimators can be shown to be asymptotically conservative using an argument similar to Theorem 4 of \cite{leung2022causal}. Because we condition on observables $(\bm{X},\bm{A})$, we obtain an asymptotically exact estimator by instead centering at an estimate of the conditional mean, similar to \cite{abadie2014inference} and \cite{jin2024tailored}. This is not possible in the design-based setting of \cite{leung2022causal} because his asymptotics are conditional on unobserved potential outcomes.
\end{remark}

\section{Graph Neural Networks}\label{sgnn}

Consider the problem of estimating an unknown scalar function of network position $F^*(i, \bm{X}, \bm{A})$. A GNN estimator of $F^*(\cdot)$ is a parameterized function that maps $(\bm{X},\bm{A})$ to a vector of unit-level predictions $(\hat{F}(i, \bm{X}, \bm{A}))_{i=1}^n \in \R^n$. In \autoref{sarch} we define the standard GNN architecture. In \autoref{sgnnest}, we present GNN estimators for the nuisance functions. In \autoref{sinvar}, we discuss the significance of a nonparametric shape restriction, permutation invariance, that GNNs impose.

\subsection{Architecture}\label{sarch}

The standard GNN architecture consists of nested, parameterized, vector-valued functions called {\em neurons} that are arranged in $L$ {\em hidden layers} with $n$ neurons per layer. Let $h_i^{(l)}$ denote the $i$th neuron in hidden layer $l$, which is typically a vector of dimension or {\em width} $H$ that is common across $i$ and $l$. This is often interpreted as unit $i$'s {\em node embedding}, a Euclidean representation of its network position. As we progress to higher-order layers, say $h_i^{(l)}$ to $h_i^{(l+1)}$, $i$'s embedding becomes richer in a sense discussed below.

Connections between neurons in different layers are determined by $\bm{A}$ through the ``message-passing'' architecture
\begin{equation}
  h_i^{(l)} = \Phi_{0l}\left( h_i^{(l-1)} \,\,,\,\, \Phi_{1l}\big(h_i^{(l-1)}, \{h_j^{(l-1)}\colon A_{ij}=1\}\big) \right), \label{GNNlayer}
\end{equation}

\noindent where $\Phi_{0l}(\cdot),\Phi_{1l}(\cdot)$ are parameterized functions with codomain $\R^H$, examples of which are provided below. We initialize $h_i^{(0)} = X_i$, interpreted as the naive node embedding that incorporates no network information. These are the usual controls under SUTVA. In subsequent layers, unit $i$'s node embedding is updated as a function of its neighbors' embeddings in the previous layer and therefore incorporates increasingly more network information as $l$ increases. This is a modeling strategy common to modern neural network architectures, that of representing a complex object in the problem domain as a Euclidean vector with learnable parameters arranged hierarchically, be it a node's network position (GNNs), a subregion of an image (CNNs), or the meaning of a word or sentence (transformers).

The final hidden layer $h_i^{(L)}$ is passed to an output layer that aggregates the vector into a scalar prediction $\Phi_o(h_i^{(L)})$ for some parameterized function $\Phi_o\colon \R^H \rightarrow \R$. The GNN ultimately returns the vector of predictions $(\Phi_o(h_i^{(L)}))_{i=1}^n \in \R^n$. The GNN estimator, formally defined below, amounts to a regression of $Y_i$ on $i$'s node embedding $h_i^{(L)} \in \R^H$, which itself depends on unknown parameters through $\{(\Phi_{0l}, \Phi_{1l})\}_{l=1}^L$. 

\begin{remark}
  The {\em depth} $L$ of a GNN determines its {\em receptive field}, the $L$-neighborhood subgraph $(\bm{X}_{\N(i,L)}, \bm{A}_{\N(i,L)})$ used to predict $Y_i$. To see this, it helps to understand why a GNN layer \eqref{GNNlayer} is often referred to as a ``round of message passing.'' In this metaphor, $h_j^{(l)}$ is the information, or message, held by unit $j$ at step $l$ of the process. Messages are successively diffused to neighbors of $j$ at the next step $l+1$ and neighbors of neighbors at step $l+2$, etc., since at each step, each unit aggregates the messages of its neighbors. This is reminiscent of DeGroot learning \citep{degroot1974reaching} but with more general aggregation functions. Since we initialize the process at $h_i^{(0)}=X_i$, the final hidden layer $h_i^{(L)}$ is only a function of $(\bm{X}_{\N(i,L)}, \bm{A}_{\N(i,L)})$.
  Several comments are in order.
\end{remark}

\begin{remark}
  The second argument of the {\em aggregation function} $\Phi_{1l}(\cdot)$ is the ``multiset'' (a set with possibly repeating elements) consisting of the node embeddings of the ego's 1-neighborhood. Because multisets are by definition unordered, the labels of the units are immaterial, so the output of each layer is {\em permutation invariant} in the sense that the output remains the same under any labeling of the units.
\end{remark}

\begin{remark}
  Both $\Phi_{0l}(\cdot)$ and $\Phi_{1l}(\cdot)$ depend on the dimension of their input vector(s) but not the number of units $n$. Accordingly, for any fixed architecture, a GNN can take as input a network of any size. The parameter space is determined not by the size of the input network but rather by $L$ and the parameters of $\Phi_{0l}(\cdot), \Phi_{1l}(\cdot), \Phi_o(\cdot)$.
\end{remark}

The choices of $\Phi_{0l}(\cdot)$ and especially $\Phi_{1l}(\cdot)$ define different GNN architectures, two of which we discuss next. First we need to introduce the multilayer perceptron (MLP), which defines a sieve space commonly used in the literature. The GNN output layer $\Phi_o(\cdot)$ is typically an MLP, and $\Phi_{0l}(\cdot),\Phi_{1l}(\cdot)$ are commonly defined using MLPs.

\begin{definition}\label{dMLP}
  A {\em multilayer perceptron (MLP)} is a parameterized function mapping a vector to a scalar. It is characterized by $L$ hidden layers, each with width $H$, an output layer, and an {\em activation function} $\sigma\colon \R\rightarrow\R$. In the first hidden layer, the input vector $x$ is mapped to a vector $(\sigma(\alpha_{1,h} + x'\beta_{1,h}))_{h=1}^H \in \R^H$ where $\alpha_h$ is a scalar and $\beta_{1,h}$ a vector of the same dimension as the input. In layer $l$, the input is the output of the previous layer, denoted $x_{l-1}$, and this is mapped to a vector $(\sigma(\alpha_{l,h} + x_{l-1}'\beta_{l,h}))_{h=1}^H$. In the output layer, the output $x_L$ of the final hidden layer $L$ is mapped to the scalar $\alpha_o + x_L'\beta_o$, which is the output of the MLP. In the special case of $L=0$, the MLP is referred to as a {\em linear layer}. A common choice of $\sigma$ is {\em ReLU activation} $\sigma(x) = \max\{x,0\}$.
\end{definition}

\begin{example}[GIN]\label{eGNNarch}
  Theoretical results on GNNs commonly pertain to the ``graph isomorphism network'' (GIN) architecture, which has a nonparametric interpretation. The specification is
  \begin{equation*}
    h_i^{(l)} = \phi_{0l}\left( h_i^{(l-1)} \,\,,\,\, \sum_{j=1}^n A_{ij} \phi_{1l}(h_j^{(l-1)}) \right), 
  \end{equation*}

  \noindent where $\phi_{0l}(\cdot),\phi_{1l}(\cdot)$ lie in some sieve space, typically chosen to be MLPs. The use of sum aggregation in the second argument is motivated by the key insight that any injective function $\Phi_{1l}(S)$ of a multiset $S$ can be written as $g(\sum_{s\in S} f(s))$ for some functions $f,g$ when $X_i$ has countable support \citep{xu2018powerful}. By approximating the unknown $f$ and $g$ using sieves, this architecture can approximate a large nonparametric function class, as discussed in \autoref{sgnnprop}. 
\end{example}

\begin{example}[PNA]\label{epna}
  Our simulations and empirical application utilize the ``principal neighborhood aggregation'' architecture due to \cite{corso2020principal}, which generalizes GINs by including multiple aggregation functions: 
  \begin{equation*}
    h_i^{(l)} = \phi_{0l}\left( h_i^{(l-1)}\,\,,\,\, \Gamma(\{\phi_{1l}(h_i^{(l-1)}, h_j^{(l-1)})\colon A_{ij}=1\}) \right)
  \end{equation*}

  \noindent where $\phi_{0l}(\cdot),\phi_{1l}(\cdot)$ lie in some sieve space, typically chosen to be MLPs, and $\Gamma(\cdot)$ is a possibly vector-valued function. The theoretical motivation is that the representation in \autoref{eGNNarch} using sum aggregation no longer holds when the support of $X_i$ is uncountable, so using multiple aggregators can result in more powerful architectures \citep{corso2020principal}.

  For an example of $\Gamma(\cdot)$, let $\mu(\cdot), \sigma(\cdot), \Sigma(\cdot), \min(\cdot),$ and $\max(\cdot)$ be respectively the mean, standard deviation, sum, min, and max functions, defined component-wise for multisets of vectors. Then setting $\Gamma(\cdot) = \Gamma_1(\cdot)$ for
  \begin{equation*}
    \Gamma_1(\cdot) = \begin{pmatrix} \mu(\cdot) & \sigma(\cdot) & \Sigma(\cdot) & \min(\cdot) & \max(\cdot) \end{pmatrix}
  \end{equation*}

  \noindent results in an architecture utilizing five aggregation functions.

  \cite{corso2020principal} combine multiple aggregators with ``degree scalars'' that multiply each aggregation function by a function of the size of the multiset input $n(i,1)-1$. The simplest example is the identity scalar, which maps any multiset to unity. This trivially multiplies each aggregation function in $\Gamma_1(\cdot)$ above, but it is useful to consider non-identity scalars. Let $\abs{\cdot}$ be the function that takes as input a multiset and outputs its size. \cite{corso2020principal} define logarithmic amplification and attenuation scalers
  \begin{equation*}
    S(\cdot,\alpha) = \left( \frac{\log (\abs{\cdot}+1)}{\delta} \right)^\alpha, \quad \delta = \frac{1}{n} \sum_{i=1}^n \log\left( \sum_{j=1}^n A_{ij}+1 \right), \quad \alpha\in [-1,1].
  \end{equation*}

  \noindent The choice of $\alpha$ defines whether the scalar ``amplifies'' ($\alpha=1$) or ``attenuates'' ($\alpha=-1$) the aggregation function, and $\alpha=0$ is the identity scalar. The purpose of the logarithm is to prevent small changes in degree from amplifying gradients in an exponential manner with each successive GNN layer. An aggregation function that augments $\Gamma_1(\cdot)$ with logarithmic amplification and attenuation is
  \begin{equation*}
    \Gamma_2(\cdot) = \begin{pmatrix} S(\cdot,0) & S(\cdot,1) & S(\cdot,-1) \end{pmatrix} \,\,\bigotimes\,\, \Gamma_1(\cdot),
  \end{equation*}

  \noindent where $\bigotimes$ denotes the tensor product, resulting in 15 aggregation functions.
\end{example}

\subsection{GNN Estimator}\label{sgnnest}

To construct a GNN estimator, the econometrician must specify $L$, $\{(\Phi_{0l},\Phi_{1l})\}_{l=1}^L$, and $\Phi_o(\cdot)$. The latter is often taken to be an MLP (\autoref{dMLP}). For the hidden layers, the GIN architecture in \autoref{eGNNarch} uses sum aggregation and has a nonparametric justification, while \autoref{epna} generalizes GINs to allow for multiple aggregation functions. We use the latter in our numerical illustrations. \autoref{rL} in \autoref{sapp} provides suggestions for choosing $L$. 

Let $\F_{\text{GNN}}(L)$ denote the set of all GNNs with $L$ layers ranging over all possible functions $\{(\Phi_{0l},\Phi_{1l})\}_{l=1}^L$ and $\Phi_o(\cdot)$ in the chosen classes. Letting $d_{kl}$ be the number of parameters in $\Phi_{kl}(\cdot)$ for $k \in \{0,1\}$ and $d_o$ the number of parameters in $\Phi_o(\cdot)$, the number of parameters in a given $F \in \F_\text{GNN}(L)$ is $d_o + \sum_{l=1}^L (d_{0l}+d_{1l})$. Recall that each such $F$ takes as input $(\bm{X}, \bm{A})$ and outputs a vector of unit-level scalar predictions $\R^n$. We let $F(i,\bm{X},\bm{A}) = \Phi_o(h_i^{(L)})$ denote the $i$th component of $F(\bm{X}, \bm{A})$. A {\em GNN estimator} is a function in this set that minimizes a loss function $\ell(\cdot)$: 
\begin{equation}
  \hat{F}_{\text{GNN}} \in \argmin_{F\in\F_{\text{GNN}}(L)} \sum_{i=1}^n \ell(Y_i, F(i,\bm{X},\bm{A})). \label{GNNestimator}
\end{equation}

When $Y_i$ is real-valued, we may use squared-error loss 
\begin{equation*}
  \ell(Y_i, F(i,\bm{X},\bm{A})) = 0.5(Y_i - F(i,\bm{X},\bm{A}))^2,
\end{equation*}

\noindent in which case $\hat{F}_{\text{GNN}}(\bm{X},\bm{A})$ estimates $F^*(\bm{X},\bm{A}) = (\E[Y_i \mid \bm{X}, \bm{A}])_{i=1}^n$. For binary $Y_i$, we may use the logistic loss 
\begin{equation*}
  \ell(Y_i, F(i,\bm{X},\bm{A})) = -Y_i F(i,\bm{X},\bm{A}) + \log (1+\text{exp}(F(i,\bm{X},\bm{A}))),
\end{equation*}

\noindent in which case $\hat{F}_{\text{GNN}}(\bm{X},\bm{A})$ estimates the log odds $F^*(\bm{X},\bm{A}) = (\log (\E[Y_i \mid \bm{X}, \bm{A}]/(1-\E[Y_i \mid \bm{X}, \bm{A}])))_{i=1}^n$.

Returning to the doubly robust estimator in \autoref{sest}, to estimate the outcome regression with $\R$-valued outcomes, we restrict the sum in \eqref{GNNestimator} to the set of units $i\in\mathcal{M}_n$ for which $T_i=t$ and use squared-error loss to obtain $\hat{\mu}_t(i,\bm{X},\bm{A}) = \hat{F}_{\text{GNN}}(i,\bm{X},\bm{A})$. To estimate the generalized propensity score, we replace $Y_i$ in \eqref{GNNestimator} with $\bm{1}\{T_i=t\}$ and use logistic loss to obtain 
\begin{equation*}
  \hat{p}_t(i,\bm{X},\bm{A}) = \frac{\text{exp}(\hat{F}_{\text{GNN}}(i,\bm{X},\bm{A}))}{1 + \text{exp}(\hat{F}_{\text{GNN}}(i,\bm{X},\bm{A}))}.
\end{equation*}

\subsection{Invariance}\label{sinvar}

Modern neural network architectures often incorporate prior information in the form of input symmetries to reduce the dimensionality of the parameter space \citep{bronstein2021geometric}. Convolutional neural networks (CNNs), widely used in image recognition, process grid-structured inputs and impose translation invariance. GNNs process graph-structured inputs and impose permutation invariance, as previously discussed. We next explain the purpose of this shape restriction and its economic content.

Define a {\em permutation} as an bijection $\pi\colon \N_n \rightarrow \N_n$. Abusing notation, we write $\pi(\bm{X}) = (X_{\pi(i)})_{i=1}^n$, which permutes the rows of matrix $\bm{X}$ according to $\pi$, and similarly define $\pi(\bm{D})$ and permutations of other such arrays. Likewise, we write $\pi(\bm{A}) = (A_{\pi(i)\pi(j)})_{i,j\in\N_n}$, which permutes the rows and columns of the matrix $\bm{A}$. We say a function $F(i,\bm{X},\bm{A})$ is {\em (permutation-)invariant} if for any bijection $\pi\colon \N_n\rightarrow\N_n$ and input $(i,\bm{X},\bm{A})$, we have $F(i,\bm{X},\bm{A}) = F(\pi(i),\pi(\bm{X}),\pi(\bm{A}))$.

\begin{example}\label{einvar}
  The function $F(i,\bm{X},\bm{A}) = i + X_i + \sum_{k=1}^n A_{ik} X_k$ is not invariant because it depends directly on the label $i$. Consider a permutation $\pi$ that swaps the labels of units $i$ and $j$ and swaps the labels of units 1 and $n$. Then $F(\pi(i),\bm{X},\bm{A}) = j + X_j + \sum_{k=1}^n A_{jk} X_k$, while $F(j,\pi(\bm{X}),\pi(\bm{A})) = j + X_i + \sum_{k=1}^n A_{ik} X_k$ since the permutation swaps $i$ and $j$ but swapping the labels 1 and $n$ does not change the summation. Therefore $F$ is not invariant. However, it would have been invariant had $F(i,\bm{X},\bm{A}) = X_i + \sum_{k=1}^n A_{ik} X_k$.
\end{example}

Invariance appears to be necessary for estimation because the nuisance functions would otherwise depend on the unit label $i$. Computing $\hat\tau(t,t')$ would then require estimating $n$ distinct propensity scores $(p_t(i, \bm{X}, \bm{A}))_{i=1}^n$, which is infeasible when the data consists of $n$ units. The literature on neighborhood interference \eqref{SOOKnbhd} avoids this problem by imposing the additional restriction that $p_t(i, \bm{X}, \bm{A}) = p(W_i)$ for some function $p(\cdot)$ that does not depend on $i$. That is, units are observationally equivalent if they have identical controls $W_i$, and observationally equivalent units have identical probabilities of being assigned to an exposure mapping realization of $t$.

The analogous requirement for our setting is that assignment probabilities are equivalent for units that have isomorphic network positions, which is just another way of saying that the propensity score is invariant. This is weaker than the restriction employed under neighborhood interference. 

\begin{example}
  Consider the network in \autoref{intrans} where each unit $i$ has a binary covariate $X_i$ that is an indicator for its color being gray, and let $W_i$ be given as in \eqref{TWex}. Then $W_4=W_5$, but units 4 and 5 are not isomorphic. Whereas the literature requires units 4 and 5 to have identical propensity scores, invariance does not. If we remove unit 3 from the graph, then 4 and 5 have isomorphic positions, and invariance implies that they have equal propensity scores.
\end{example}

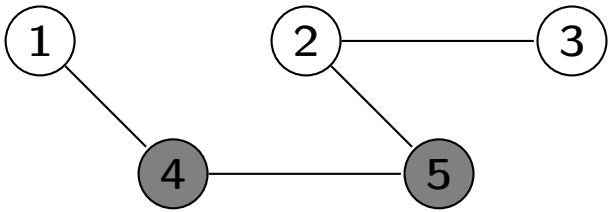
\begin{figure}[ht]
  \centering
  \begin{tikzpicture}[->,>=latex,shorten >=1pt,auto,node distance=2.5cm,
    thick,main
    node/.style={circle,fill=white!20,draw,font=\sffamily\Large\bfseries}]
    \node[main node] (1) {1};
    \node[main node,fill=gray] (2) [below right of=1] {4};
    \node[main node] (3) [above right of=2] {2};
    \node[main node,fill=gray] (4) [below right of=3] {5};
    \node[main node] (5) [above right of=4] {3};
    \draw[-] (1) -- (2);
    \draw[-] (2) -- (4);
    \draw[-] (3) -- (4);
    \draw[-] (3) -- (5);
  \end{tikzpicture}
  \caption{Units 4 and 5 are not isomorphic.}
  \label{intrans}
\end{figure}

If the propensity score is invariant, this reduces our problem from estimating $n$ separate scores to estimating only one because, for any $i$, there exists a permutation $\pi_i$ (in particular the one that only permutes units $1$ and $i$) such that $p_t(i, \bm{X}, \bm{A}) = p_t(1, \pi_i(\bm{X}), \pi_i(\bm{A}))$ and similarly for $\mu_t(\cdot)$. The right-hand side is a function $p_t(1, \cdot, \cdot)$ that does not depend on $i$, so evaluating $i$'s propensity score is now only a matter of supplying the correct $i$-specific inputs $(\pi_i(\bm{X}), \pi_i(\bm{A}))$.

The next result demonstrates that invariance is a weak requirement holding under minimal exchangeability conditions on the structural primitives.

\begin{proposition}\label{pexch}
  Suppose for any $n\in\mathbb{N}$ and permutation $\pi$,
  \begin{multline*}
    f_n(i, \bm{D}, \bm{A}) = f_n(\pi(i), \pi(\bm{D}), \pi(\bm{A})), \\
    g_n(i, \bm{D}, \bm{X}, \bm{A}, \bm{\varepsilon}) = g_n(\pi(i), \pi(\bm{D}), \pi(\bm{X}), \pi(\bm{A}), \pi(\bm{\varepsilon})), \quad\text{and} \\
    h_n(i, \bm{X}, \bm{A}, \bm{\nu}) = h_n(\pi(i), \pi(\bm{X}), \pi(\bm{A}), \pi(\bm{\nu}))
  \end{multline*}

  \noindent a.s., and $(\bm{A}, \bm{X}, \bm{\varepsilon}, \bm{\nu}) \stackrel{d}= (\pi(\bm{A}), \pi(\bm{X}), \pi(\bm{\varepsilon}), \pi(\bm{\nu}))$. Then for any $t\in\mathcal{T}$, $p_t(\cdot)$ and $\mu_t(\cdot)$ are invariant functions.
\end{proposition}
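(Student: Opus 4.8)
The plan is to show that the equivariance of the structural functions propagates to the observed variables, and then to combine this with the distributional invariance of the primitives through a change of variables inside the relevant conditional expectations. Throughout I fix $n$; since $\N_n$ admits only finitely many permutations, any family of identities that holds ``a.s.\ for each $\pi$'' holds jointly off a single null set.

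First I would propagate equivariance to the derived variables. Writing $\bm D = H_n(\bm X,\bm A,\bm\nu)$ with $i$th coordinate $h_n(i,\cdot)$, the hypothesis on $h_n$ gives, after index bookkeeping, $H_n(\pi(\bm X),\pi(\bm A),\pi(\bm\nu)) = \pi(\bm D)$ a.s. Composing with the hypothesis on $f_n$, the indicator $\ind_i(t)=\bm 1\{f_n(i,\bm D,\bm A)=t\}$ can be written as $\Phi_i(\bm X,\bm A,\bm\nu)$ for a measurable $\Phi_i$ with $\ind_{\pi(i)}(t) = \Phi_i(\pi^{-1}(\bm X),\pi^{-1}(\bm A),\pi^{-1}(\bm\nu))$ a.s.; composing instead with the hypothesis on $g_n$ gives $Y_i = \Psi_i(\bm X,\bm A,\bm\nu,\bm\varepsilon)$ with $Y_{\pi(i)} = \Psi_i(\pi^{-1}(\bm X),\pi^{-1}(\bm A),\pi^{-1}(\bm\nu),\pi^{-1}(\bm\varepsilon))$ a.s. (Inverses may need to be placed differently depending on the convention adopted for the permutation action; since every hypothesis is quantified over all permutations, this is immaterial.)

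Next I would establish invariance of $p_t$. Fix $\pi$, $i$, $t$ and set $\sigma=\pi^{-1}$. The triple $\bigl(\ind_{\pi(i)}(t),\sigma(\bm X),\sigma(\bm A)\bigr) = \bigl(\Phi_i(\sigma(\bm X),\sigma(\bm A),\sigma(\bm\nu)),\sigma(\bm X),\sigma(\bm A)\bigr)$ is a fixed measurable image of the consistently permuted primitives $(\sigma(\bm X),\sigma(\bm A),\sigma(\bm\nu))$, which by the distributional hypothesis (applied to the permutation $\sigma$) has the same law as $(\bm X,\bm A,\bm\nu)$; hence this triple has the same law as $\bigl(\Phi_i(\bm X,\bm A,\bm\nu),\bm X,\bm A\bigr)=\bigl(\ind_i(t),\bm X,\bm A\bigr)$. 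A regular conditional expectation, viewed as a function of the conditioning variable, depends only on the joint law, so $\E[\ind_{\pi(i)}(t)\mid\sigma(\bm X),\sigma(\bm A)] = p_t(i,\sigma(\bm X),\sigma(\bm A))$ a.s. Since $\sigma$ is a bijection, $\sigma(\bm X),\sigma(\bm A)$ generate the same $\sigma$-field as $\bm X,\bm A$, so the left side equals $\E[\ind_{\pi(i)}(t)\mid\bm X,\bm A]=p_t(\pi(i),\bm X,\bm A)$. Thus $p_t(\pi(i),\bm X,\bm A) = p_t(i,\pi^{-1}(\bm X),\pi^{-1}(\bm A))$ a.s.; since this is an identity of Borel functions holding $\mathcal L(\bm X,\bm A)$-a.e.\ and $\mathcal L(\pi(\bm X),\pi(\bm A))=\mathcal L(\bm X,\bm A)$, it also holds after substituting $\pi(\bm X),\pi(\bm A)$, giving $p_t(\pi(i),\pi(\bm X),\pi(\bm A))=p_t(i,\bm X,\bm A)$ a.s. For $\mu_t$, I would write $\mu_t(i,\bm X,\bm A)=\E[Y_i\ind_i(t)\mid\bm X,\bm A]/p_t(i,\bm X,\bm A)$ on $\{p_t(i,\bm X,\bm A)>0\}$ and repeat the argument for the numerator, noting that $Y_i\ind_i(t)$ is likewise an equivariant functional of all four primitive arrays (now including $\bm\varepsilon$); numerator and denominator transform identically, so the ratio does too, and the set $\{p_t=0\}$ maps to the corresponding null set under the invariance and so causes no trouble. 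Together these give \autoref{ainvar}.

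The only genuinely delicate point, I expect, is the step combining ``a conditional expectation is a functional of the joint law'' with the transport of the resulting a.s.\ identity along the invariant law of $(\bm X,\bm A)$: in particular the bookkeeping that lets one pull $\pi^{-1}(\bm X),\pi^{-1}(\bm A)$ outside the conditional expectation (they are measurable functions of the conditioning variables) and re-substitute $\pi(\bm X),\pi(\bm A)$ afterwards. Everything else is mechanical index manipulation.
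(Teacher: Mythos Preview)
Your argument is correct and follows essentially the same route as the paper: propagate the structural equivariance through $h_n$, $f_n$, $g_n$ to obtain a joint distributional identity between $(\ind_i(t),Y_i,\bm X,\bm A)$ and $(\ind_{\pi(i)}(t),Y_{\pi(i)},\pi(\bm X),\pi(\bm A))$, then read off equality of the conditional expectations. The paper states this more tersely, establishing $(Y_i,T_i,\bm X,\bm A)\stackrel{d}=(Y_{\pi(i)},T_{\pi(i)},\pi(\bm X),\pi(\bm A))$ in one line and invoking the conditional-expectation step without comment; you spell out precisely the point it suppresses, namely that a regular conditional expectation is a functional of the joint law and that the resulting a.s.\ identity can be transported along the $\pi$-invariant law of $(\bm X,\bm A)$. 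The only substantive variation is that for $\mu_t$ you pass through the ratio $\E[Y_i\ind_i(t)\mid\bm X,\bm A]/p_t(i,\bm X,\bm A)$, whereas the paper works directly with $\E[Y_i\mid T_i=t,\bm X,\bm A]$; your route sidesteps conditioning on the event $\{T_i=t\}$ at the cost of handling the null set $\{p_t=0\}$, which you do.
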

\begin{proof}
  See \autoref{sproofs}.
\end{proof}

\noindent The first three equations impose invariance on $(f_n,g_n,h_n)$, meaning that the functions do not depend directly on unit labels. This is the case for all exposure mappings $f_n$ used in the literature. Applied to $(g_n,h_n)$, invariance only says that unit identities do not influence behavior beyond the model primitives $(\bm{A}, \bm{X}, \bm{\varepsilon}, \bm{\nu})$. The final requirement says that these model primitives are themselves distributionally invariant, which is a weak condition since unit labels carry no intrinsic meaning in our setting. Distributional invariance fails to hold in design-based settings since these condition upon $\bm{\varepsilon}$, among other primitives. Then $i$'s outcome regression depends on $\varepsilon_i$, so it is not invariant with respect to observables alone.

\section{Asymptotic Theory}\label{slarge}

We characterize the asymptotic properties of the estimators in \autoref{sest} under a sequence of models and GNNs sending $n\rightarrow\infty$. Along this sequence, the following objects may change, subject to the conditions imposed below: the exposure mapping $f_n$ from \eqref{YKnbhd}, the structural functions $(g_n,h_n)$ from \eqref{YD}, the distribution of the model primitives $(\bm{A}, \bm{X}, \bm{\varepsilon}, \bm{\nu})$, and the GNN functions $\{(\Phi_o, \Phi_{0l}, \Phi_{1l})\}_{l=1}^L$ and number of layers $L$ defined in \autoref{sarch}. Define 
\begin{multline*}
  \varphi_{t,t'}(i) = \frac{\bm{1}\{T_i=t\} (Y_i - \mu_t(i,\bm{X},\bm{A}))}{p_t(i,\bm{X},\bm{A})} + \mu_t(i,\bm{X},\bm{A}) \\ - \frac{\bm{1}\{T_i=t'\} (Y_i - \mu_{t'}(i,\bm{X},\bm{A}))}{p_{t'}(i,\bm{X},\bm{A})} - \mu_{t'}(i,\bm{X},\bm{A}) - \tau(t,t'), 
\end{multline*}

\noindent whose average over $i\in\mathcal{M}_n$ is the doubly robust moment condition, and let
\begin{equation*}
  \sigma_n^2 = \var\left( \frac{1}{\sqrt{m_n}} \sum_{i\in\M_n} \varphi_{t,t'}(i) \,\bigg|\, \bm{X}, \bm{A} \right).
\end{equation*}

\begin{assump}[Moments]\label{amom}
  (a) There exists $M<\infty$ and $p>4$ such that for any $n\in\mathbb{N}$, $i\in\N_n$, and $\bm{d}\in\{0,1\}^n$, $\E[\abs{Y_i(\bm{d})}^p \mid \bm{X}, \bm{A}] < M$ a.s. (b) There exists $[\underline{\pi},\overline{\pi}] \subset (0,1)$ such that $\hat{p}_t(i,\bm{X},\bm{A}), p_t(i,\bm{X},\bm{A}) \in [\underline{\pi},\overline{\pi}]$ and $m_n/n \geq \underline{\pi}$ a.s.\ for all $n\in\mathbb{N}$, $i\in\M_n$, $t\in\mathcal{T}$. (c) $\liminf_{n\rightarrow\infty} \sigma_n^2 > 0$ a.s. 
\end{assump}

\noindent Part (b) requires sufficient overlap for the propensity scores. Under \autoref{aemap}, this typically holds if $\bm{\Gamma}$ is a bounded set. Part (b) further imposes overlap on the propensity score estimator, which is common in the double machine learning literature \citep[e.g.][]{chernozhukov2018double,farrell2015robust,farrell2021deep}. It also requires that $\M_n$ is a nontrivial subset of $\N_n$. Part (c) is a standard non-degeneracy condition.

Let $Z_i = (i, Y_i, T_i, \bm{X}, \bm{A})$, and for any $x\in\R$, define the functions $\Psi_{p_t}(Z_i, x) = \ind\{T_i=t\} (Y_i - \mu_t(i,\bm{X},\bm{A}))(x^{-1} - p_t(i,\bm{X},\bm{A})^{-1})$ and $\Psi_{\mu_t}(Z_i, x) = (x - \mu_t(i,\bm{X},\bm{A})) (1 - \ind\{T_i=t\} p_t(i,\bm{X},\bm{A})^{-1})$, both of which are mean zero for any $x$.

\begin{assump}[GNNs]\label{agnn}
  The following hold for any $t \in \mathcal{T}$.
  \begin{enumerate}[(a)]
    \item (MSE) Both $m_n^{-1} \sum_{i\in\M_n} (\hat{p}_t(i,\bm{X},\bm{A}) - p_t(i,\bm{X},\bm{A}))^2$ and $m_n^{-1} \sum_{i\in\M_n} (\hat{\mu}_t(i,\bm{X},\bm{A}) - \mu_t(i,\bm{X},\bm{A}))^2$ are $o_p(1)$, and their product is $o_p(n^{-1})$.
    \item (SE) For any $\varrho \in \{p_t, \mu_t\}$, $m_n^{-1} \sum_{i\in\M_n} \Psi_\varrho(Z_i, \hat{\varrho}(i,\bm{X},\bm{A})) = o_p(n^{-1/2})$.
  \end{enumerate}
\end{assump}

\noindent Part (a) requires GNNs to converge at a faster than $n^{-1/4}$ rate, while part (b) is a ``stochastic equicontinuity'' condition. These are familiar requirements in the i.i.d.\ setting \citep[e.g.\ Assumption 3 of][]{farrell2015robust}. Primitive conditions are beyond the scope of the existing literature for reasons discussed in \autoref{sgnnadd}, but we provide two results that may prove useful for verifying \autoref{agnn}. In \autoref{srfield}, we show that the nuisance functions can be well-approximated by $L$-neighborhood analogs for $L=O(\log n)$, making estimation a low-dimensional problem. In \autoref{sgnnprop}, we reframe and combine several theoretical results in the GNN literature to show that the nuisance functions can be approximated by GNNs if they lie in a certain nonparametric subclass of invariant functions. 

The next assumption is used to show that $\{\varphi_{t,t'}(i)\}_{i=1}^n$ is $\psi$-dependent \citep[see \autoref{dpsidep}, due to][]{kojevnikov2021limit}, which enables the application of a CLT. Define 
\begin{equation*}
  \N^\partial(i,s) = \{j\in\N_n\colon \ell_{\bm{A}}(i,j) = s\} \quad\text{and}\quad \delta_n^\partial(s; k) = \frac{1}{n} \sum_{i=1}^n \abs{\N^\partial(i,s)}^k,
\end{equation*}

\noindent respectively $i$'s $s$-neighborhood boundary and the $k$th moment of the $s$-neighborhood boundary size. Let 
\begin{align}
  &\Delta_n(s, m; k) = \frac{1}{n} \sum_{i=1}^n \max_{j\in\N^\partial(i,s)} \abs{\N(i,m) \backslash \N(j,s-1)}^k, \nonumber\\ 
  &c_n(s, m; k) = \inf_{\alpha>1} \Delta_n(s, m; k\alpha)^{1/\alpha} \delta_n^\partial(s; \alpha/(\alpha-1))^{1-1/\alpha}, \quad\text{and} \nonumber\\
  &\psi_n(s) = \max_{i\in\N_n} \left( \gamma_n(s/4) + \eta_n(s/4) \big(1 + n(i,1) + \Lambda_n(i,s/4)\, n(i,s/4)\big)\right), \label{thepsi}
\end{align}

\noindent where $\Lambda_n(i,s/4)$ is a constant defined in the next assumption. The second quantity is roughly a measure of the density of $s$-neighborhoods. The third quantity is a measure of dependence in that it bounds the covariance between $\varphi_{t,t'}(i)$ and $\varphi_{t,t'}(j)$ when $\ell_{\bm{A}}(i,j) \leq s$. Lastly, define 
\begin{equation*}
  G_n(i, \bm{d}_{\N(i,s)}) = \E[g_{n(i,s)}(i, \bm{d}_{\N(i,s)}, \bm{X}_{\N(i,s)}, \bm{A}_{\N(i,s)}, \bm{\varepsilon}_{\N(i,s)}) \mid \bm{X}, \bm{A}].
\end{equation*}

\begin{assump}[Weak Dependence]\label{apsi}
  (a) $\{(\varepsilon_i,\nu_i)\}_{i=1}^n$ is independently distributed conditional on $(\bm{X},\bm{A})$.
  (b) For any $n\in\mathbb{N}$, $i\in\N_n$, $s\geq 0$, and $\bm{d},\bm{d}'\in\{0,1\}^n$,
  \begin{equation*}
    \abs{G_n(i, \bm{d}_{\N(i,s)}) - G_n(i, \bm{d}_{\N(i,s)}')} \leq \Lambda_n(i,s) \sum_{j\in \N(i,s)} \abs{d_j-d_j'} \quad\text{a.s.}
  \end{equation*}

  \noindent for some constant $\Lambda_n(i,s)$. (c) $\sup_{n\in\mathbb{N}} \max_{s\geq 1} \psi_n(s) < \infty$ a.s. (d) For $p$ in \autoref{amom}(a), some positive sequence $v_n \rightarrow\infty$ and any $k \in \{1,2\}$, 
  \begin{equation}
    \frac{1}{n^{k/2}} \sum_{s=0}^\infty c_n(s, v_n; k) \psi_n(s)^{1-(2+k)/p} = o(1) \quad\text{and}\quad n^{3/2} \psi_n(v_n)^{1-1/p} = o(1) \quad\text{a.s.} \label{2ND}
  \end{equation}
\end{assump}

\noindent Parts (a) and (b) are used to establish that $\{\varphi_{t,t'}(i)\}_{i=1}^n$ is $\psi$-dependent. This is a nontrivial task because $(Y_i,T_i)$ are complex functions of treatments which are complex functions of unobservables. Part (b) is a Lipschitz condition that holds if potential outcomes are uniformly bounded. In particular we can take $\Lambda_n(s) = 2M$ where $M$ is the uniform bound on the ranges of $\{g_n\}_{n\in\mathbb{N}}$. Part (a) can be relaxed to allow for some forms of $\psi$-dependent unobservables under additional smoothness conditions on $g_n$ and $h_n$ \citep[][Proposition 2.5]{kojevnikov2021limit}. The simulation study in \autoref{smc} provides evidence that our methods can handle network-dependent unobservables.

The key requirement is part (d). This corresponds to Condition ND of \cite{kojevnikov2021limit}, which they utilize to establish a CLT. It requires covariances to decay with network distance (controlled by $\psi_n(s)$) faster than network neighborhoods densify (controlled by $c_n(s, m; k)$). This is analogous to spatial CLTs that require mixing coefficients to decay with spatial distance $s$ faster than a particular polynomial rate, which corresponds to the rate at which $s$-neighborhoods fill with units in Euclidean space. We verify \eqref{2ND} for different classes of graphs in \autoref{svapsi}. 

\begin{theorem}\label{tclt}
  Under Assumptions \ref{aigno}--\ref{apsi},
  \begin{equation*}
    \sigma_n^{-1/2} \sqrt{m_n}\big(\hat\tau(t,t') - \tau(t,t')\big) \dlimarrow \mathcal{N}(0,1).
  \end{equation*}
\end{theorem}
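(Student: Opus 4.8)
The plan is to decompose $\sqrt n(\hat\tau(t,t') - \tau(t,t')) = n^{-1/2}\sum_{i=1}^n \varphi_{t,t'}(i) + R_n$, show that the first-stage remainder $R_n$ is $o_p(1)$ using double robustness and \autoref{agnn}, establish a central limit theorem for the oracle average $n^{-1/2}\sum_i \varphi_{t,t'}(i)$ via network weak dependence, and conclude with Slutsky's theorem together with the non-degeneracy $\liminf_n \sigma_n^2 > 0$ of \autoref{amom}(c).

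\emph{Linearization.} Expanding each $\hat\tau_i(t,t')$ around $(\mu_t,p_t,\mu_{t'},p_{t'})$ and using \autoref{amom}(b) to keep all denominators bounded away from zero (replacing $\hat p$ by $p$ in denominators at the cost of $O((\hat p_t - p_t)^2)$ terms), $R_n$ reduces, for each of $t$ and $t'$, to three types of terms. (i) The outcome-regression score $n^{-1/2}\sum_i (\hat\mu_t(i,\bm X,\bm A) - \mu_t(i,\bm X,\bm A))(1 - \ind_i(t)\,p_t(i,\bm X,\bm A)^{-1})$, which is $o_p(1)$ directly by \autoref{agnn}(c). (ii) A product-of-errors term bounded via Cauchy--Schwarz by a constant times $\sqrt n$ times the product of the root-mean-squared errors of $\hat p_t$ and $\hat\mu_t$, hence $o_p(1)$ by \autoref{agnn}(b); the leftover $O((\hat p_t - p_t)^2)$ pieces are $o_p(1)$ by \autoref{agnn}(a) and boundedness. (iii) The propensity-score score $-n^{-1/2}\sum_i \ind_i(t)(Y_i - \mu_t(i,\bm X,\bm A))\,p_t(i,\bm X,\bm A)^{-2}(\hat p_t(i,\bm X,\bm A) - p_t(i,\bm X,\bm A))$. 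Here $\E[\ind_i(t)(Y_i - \mu_t(i,\bm X,\bm A)) \mid \bm X,\bm A] = 0$ by the definitions of $p_t,\mu_t$; to obtain $o_p(1)$ without sample splitting I would split $Y_i - \mu_t(i,\bm X,\bm A)$ into $Y_i - \E[Y_i\mid\bm D,\bm X,\bm A]$, which under \autoref{aigno} is conditionally mean zero given $(\bm X,\bm A,\bm\nu)$, plus a $(\bm X,\bm A,\bm\nu)$-measurable term, and bound the two resulting sums by second-moment (quadratic-form) arguments that couple the $\psi$-dependence coefficients from the next step with the rate $n^{-1}\sum_i(\hat p_t(i,\bm X,\bm A) - p_t(i,\bm X,\bm A))^2 = o_p(1)$ of \autoref{agnn}(a).

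\emph{Oracle CLT.} The target is $\sigma_n^{-1}n^{-1/2}\sum_i \varphi_{t,t'}(i) \dlimarrow \N(0,1)$ (matching the normalization in the statement). First, I would show $\{(Y_i,\ind_i(t),\ind_i(t'))\}_{i=1}^n$ is $\psi$-dependent in the sense of \autoref{dpsidep} with coefficient $\psi_n(s)$ as in \eqref{thepsi}: for well-separated index blocks, truncate the relevant $D_j$ to their neighborhood versions using \eqref{anid}, propagate that error through $Y_i$ using the Lipschitz bound of \autoref{apsi}(b) (this is where $\Lambda_n(i,s)$ and the sizes $n(i,s)$ enter) together with the outcome truncation \eqref{aniy}, invoke \autoref{aemap} so that each $\ind_i(t)$ is a fixed function of the treatments and degree in $i$'s $1$-neighborhood, and use \autoref{apsi}(a) (conditional independence of $\{(\varepsilon_i,\nu_i)\}$ given $(\bm X,\bm A)$) to decouple the truncated blocks. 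Second, since $\varphi_{t,t'}(i)$ is a Lipschitz function of $(Y_i,\ind_i(t),\ind_i(t'))$ with denominators bounded below (\autoref{amom}(b)) and $Y_i$ has $p>4$ conditional moments (\autoref{amom}(a)), $\{\varphi_{t,t'}(i)\}$ inherits $\psi$-dependence with comparable coefficients. Third, verify the remaining hypotheses of the CLT of \cite{kojevnikov2021limit}: the moment bound of \autoref{amom}(a), non-degeneracy via \autoref{amom}(c), and Condition ND, which is exactly the first two displays of \eqref{3musk} in \autoref{apsi}(d) (the third display is what the linearization consumes). Applying that CLT and combining with $R_n = o_p(1)$ via Slutsky gives the result.

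\emph{Main obstacle.} The hardest step is expected to be establishing that $\{\varphi_{t,t'}(i)\}$ is $\psi$-dependent with precisely the rate $\psi_n(s)$ of \eqref{thepsi}: interference enters $Y_i$ both directly through $\bm\varepsilon$ and indirectly through $\bm D$, which is itself a globally determined, weakly dependent vector (endogenous peer effects are allowed), so the coupling/truncation argument must be carried out in two nested stages with the approximation errors tracked through $\gamma_n$, $\eta_n$, the Lipschitz constants $\Lambda_n(i,s)$, and the neighborhood-size factors $n(i,1)$ and $n(i,s)$; it is exactly the fact that $T_i$ is a nonlinear function of the dependent treatment vector that forces the restriction to the class in \autoref{aemap}. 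A close second is the propensity-score score term (iii) above, which, absent cross-fitting, requires the conditional-mean-zero-plus-weak-dependence argument just sketched rather than a direct rate bound.
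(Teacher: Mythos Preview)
Your plan matches the paper's proof in structure: the same decomposition into the oracle sum $n^{-1/2}\sum_i \varphi_{t,t'}(i)$ plus first-stage remainders, the same route to $\psi$-dependence of $\{\varphi_{t,t'}(i)\}$ with coefficient $\psi_n(s)$ as in \eqref{thepsi} (the paper packages this as \autoref{lpsi}), and the same invocation of the Kojevnikov--Marmer--Song CLT under \autoref{apsi}(c)--(d) together with \autoref{amom}. Your reading of which pieces of \eqref{3musk} feed the CLT versus the linearization, and your identification of the two hard steps, are both accurate.

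The one place where the paper is more direct than you is the propensity-score remainder (your term (iii), the paper's $R_{1t}$). Rather than split $Y_i-\mu_t(i,\bm X,\bm A)$ into $Y_i-\E[Y_i\mid\bm D,\bm X,\bm A]$ plus a $(\bm X,\bm A,\bm\nu)$-measurable piece, the paper exploits the fact that $\hat p_t(\cdot)$ is trained on $\{\ind_j(t)\}_j$ and $(\bm X,\bm A)$ and is therefore $(\bm D,\bm X,\bm A)$-measurable. Bounding $\E[R_{1t}^2]$ and conditioning on $(\bm D,\bm X,\bm A)$ then pulls every propensity-score factor outside, leaving $\E[(Y_i-\mu_i)(Y_j-\mu_j)\mid\bm D,\bm X,\bm A]$; a short lemma (\autoref{lgamma}) establishes $\psi$-dependence of $\{Y_i\}$ conditional on $(\bm D,\bm X,\bm A)$ with coefficient $\gamma_n(s/2)$, yielding the bound $C\gamma_n(\ell_{\bm A}(i,j)/2)^{1-2/p}$. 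One Cauchy--Schwarz step reduces the whole thing to the third display of \eqref{3musk} times the $L^2$ rate of $\hat p_t$ from \autoref{agnn}(a). This is shorter than your route and avoids a separate $\psi$-dependence argument for the residual piece $\ind_i(t)\big(\E[Y_i\mid\bm D,\bm X,\bm A]-\mu_t(i,\bm X,\bm A)\big)$, which is mean zero given $(\bm X,\bm A)$ but not given $(\bm X,\bm A,\bm\nu)$ and would otherwise need its own treatment.
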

\begin{proof}
  See \autoref{sproofs}.
\end{proof}

The next assumption is used to characterize the asymptotic properties of $\hat\sigma^2$. 

\begin{assump}[HAC]\label{ahac}
  (a) For some $M>0$ and all $n\in\mathbb{N}$, $i\in\mathcal{N}_n$, and $t\in\mathcal{T}$, $\abs{ \max\{Y_i, \hat{\mu}_t(i,\bm{X},\bm{A})\}} < M$ a.s. (b) For $p$ in \autoref{amom} and $b_n\rightarrow\infty$, \\ $n^{-1} \sum_{s=0}^\infty c_n(s, b_n; 2) \psi_n(s)^{1-4/p} = o(1)$ a.s. (c) $n^{-1} \sum_{i=1}^n n(i,b_n)^2 = O_p(\sqrt{n})$. 
\end{assump}

\noindent Part (a) strengthens \autoref{amom}(a) to uniformly bounded outcomes. Part (b) is Assumption 4.1(iii) of \cite{kojevnikov2021limit}. Part (c) is used to handle the estimation error in the nuisance functions. Under spatial dependence, one can derive an explicit upper bound on the rate at which the HAC bandwidth $b_n$ must diverge because $b_n$-neighborhoods grow at a known rate $b_n^d$ in $\R^d$. This is not the case for networks; \autoref{ahac}(b)--(c) only implicitly characterize the rate for $b_n$ as a function of the network topology, in particular the sizes of network $b_n$-neighborhoods. It then remains to derive an explicit bandwidth that satisfies these conditions for reasonable classes of graphs. In \autoref{sbandchoice}, we show that \eqref{ourb} does.

\begin{theorem}\label{thac}
  Under \autoref{ahac} and the assumptions of \autoref{tclt}, $\hat\sigma^2 = \sigma_n^2 + o_p(1)$.
\end{theorem}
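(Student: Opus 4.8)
\emph{Proof proposal.} The plan is to relate $\hat\sigma^2$ to the infeasible HAC estimator obtained by replacing each $\hat\tau_i(t,t')$ with the population influence function $\varphi_{t,t'}(i)+\tau(t,t')$, namely
\[
  \tilde\sigma^2=\frac{1}{n}\sum_{i=1}^n\sum_{j=1}^n\big(\varphi_{t,t'}(i)-\bar\varphi\big)\big(\varphi_{t,t'}(j)-\bar\varphi\big)\,\mathbf{1}\{\ell_{\bm{A}}(i,j)\leq b_n\},\qquad\bar\varphi=\frac{1}{n}\sum_{i=1}^n\varphi_{t,t'}(i),
\]
and then to establish $\tilde\sigma^2-\sigma_n^2\plimarrow 0$ and $\hat\sigma^2-\tilde\sigma^2\plimarrow 0$ separately. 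Throughout I abbreviate $K_{ij}=\mathbf{1}\{\ell_{\bm{A}}(i,j)\leq b_n\}$, so that $\sum_j K_{ij}=n(i,b_n)$.

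For the first convergence, the array $\{\varphi_{t,t'}(i)\}_{i=1}^n$ is shown to be $\psi$-dependent with dependence coefficients controlled by $\psi_n(\cdot)$ in the course of proving \autoref{tclt}, using \autoref{apsi}(a)--(b) and the structure of $\varphi_{t,t'}(\cdot)$. Given this, \autoref{amom}(a) supplies the requisite moment bound, and \autoref{ahac}(c)--(d) — part (c) being Assumption 4.1(iii) of \cite{kojevnikov2021limit} and part (d) a companion density restriction on the bandwidth — are precisely the hypotheses under which their network HAC estimator consistently estimates the variance of a sum of a $\psi$-dependent random field. Hence $\tilde\sigma^2-\sigma_n^2\plimarrow 0$ follows from the HAC consistency result of \cite{kojevnikov2021limit}.

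For the second convergence, write $\hat\psi_i=\hat\tau_i(t,t')-\hat\tau(t,t')$, $\psi_i=\varphi_{t,t'}(i)-\bar\varphi$, and $r_i=\hat\psi_i-\psi_i$; a short calculation gives $r_i=e_i-\bar e$, where $e_i$ is the pointwise plug-in error (the difference between $\hat\tau_i(t,t')$ and its value when $(\hat p_t,\hat p_{t'},\hat\mu_t,\hat\mu_{t'})$ are replaced by $(p_t,p_{t'},\mu_t,\mu_{t'})$) and $\bar e=n^{-1}\sum_j e_j$. Using overlap (\autoref{amom}(b), which also keeps $\hat p_t$ bounded away from $0$ and $1$) and the uniform bound on $(Y_i,\hat\mu_t)$ in \autoref{ahac}(a), a routine expansion shows $n^{-1}\sum_i e_i^2$ is bounded by a constant times $n^{-1}\sum_i(\hat\mu_t(i,\bm{X},\bm{A})-\mu_t(i,\bm{X},\bm{A}))^2+n^{-1}\sum_i(\hat p_t(i,\bm{X},\bm{A})-p_t(i,\bm{X},\bm{A}))^2$ plus the analogous $t'$ terms, which is $o_p(n^{-1/2})$ by \autoref{ahac}(b); since centering shrinks the sum of squares, $n^{-1}\sum_i r_i^2=o_p(n^{-1/2})$, and boundedness of all ingredients makes each $r_i$ bounded, so $n^{-1}\sum_i r_i^4\leq C\,n^{-1}\sum_i r_i^2=o_p(n^{-1/2})$. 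Expanding $\hat\sigma^2-\tilde\sigma^2=\tfrac{2}{n}\sum_{i,j}\psi_i r_j K_{ij}+\tfrac{1}{n}\sum_{i,j}r_i r_j K_{ij}$, the quadratic term is at most $\tfrac{1}{n}\sum_{i,j}\tfrac12(r_i^2+r_j^2)K_{ij}=n^{-1}\sum_i r_i^2\,n(i,b_n)$, and by Cauchy--Schwarz $n^{-1}\sum_i r_i^2\,n(i,b_n)\leq(n^{-1}\sum_i r_i^4)^{1/2}(n^{-1}\sum_i n(i,b_n)^2)^{1/2}=o_p(n^{-1/4})\,O_p(n^{1/4})=o_p(1)$ by \autoref{ahac}(d), so the quadratic term vanishes.

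The main obstacle is the cross term $n^{-1}\sum_{i,j}\psi_i r_j K_{ij}$: a brute Cauchy--Schwarz over the pairs with $K_{ij}=1$ only yields $(n^{-1}\sum_i\psi_i^2 n(i,b_n))^{1/2}(n^{-1}\sum_j r_j^2 n(j,b_n))^{1/2}$, and under \autoref{ahac}(d) the factor $n^{-1}\sum_i\psi_i^2 n(i,b_n)$ is only $O_p(n^{1/4})$ — not $O_p(1)$ — which the $o_p(n^{-1/2})$ rate on the $r_j$'s cannot absorb. I would therefore split $r_j=e_j-\bar e$: the $\bar e$ contribution is $\bar e\cdot n^{-1}\sum_i\psi_i\,n(i,b_n)$, handled by $|\bar e|\leq(n^{-1}\sum_j e_j^2)^{1/2}=o_p(n^{-1/4})$ against the crude bound $n^{-1}\sum_i|\psi_i|\,n(i,b_n)=O_p(n^{1/4})$; and for the $e_j$ contribution I would exploit structure rather than magnitude, decomposing $e_j$ into the component $(\hat\mu_t(j,\bm{X},\bm{A})-\mu_t(j,\bm{X},\bm{A}))(1-\ind_j(t)/p_t(j,\bm{X},\bm{A}))$ and its $t'$ analog — whose second factor has conditional mean zero given $(\bm{X},\bm{A})$ and which is controlled as in \autoref{agnn}(c) — plus a remainder that is second order in $(\hat p-p,\hat\mu-\mu)$, then combining this with the $\psi$-dependence of $\{\varphi_{t,t'}(i)\}$ and the bandwidth calibration in \eqref{ourb} (which makes \autoref{ahac}(c)--(d) hold) to control the resulting cross-covariance sums. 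Tracking which of \autoref{ahac}(a)--(d) governs which factor and checking that the rates balance is the delicate part of the argument; the remaining steps are routine.
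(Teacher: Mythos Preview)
Your two–step plan (infeasible HAC $\tilde\sigma^2$ first, then replace plug-ins) and your appeal to \cite{kojevnikov2021limit} for $\tilde\sigma^2-\sigma_n^2$ match the paper's proof. The difference is in how you handle $\hat\sigma^2-\tilde\sigma^2$, and here you make the argument much harder than it needs to be.

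The ``main obstacle'' you identify for the cross term $n^{-1}\sum_{i,j}\psi_i r_j K_{ij}$ is not an obstacle at all. You already invoke \autoref{ahac}(a) and \autoref{amom}(b) to bound $r_i$ uniformly; the same assumptions bound $\varphi_{t,t'}(i)$ (hence $\psi_i$) uniformly, since $Y_i$ and $\mu_t(i,\bm{X},\bm{A})$ are bounded and $p_t$ is bounded away from zero. Once $\abs{\psi_i}\le C$, the cross term is immediate:
\[
  \Big|\frac{1}{n}\sum_{i,j}\psi_i r_j K_{ij}\Big|\le \frac{C}{n}\sum_j \abs{r_j}\,n(j,b_n)\le C\Big(\frac{1}{n}\sum_j r_j^2\Big)^{1/2}\Big(\frac{1}{n}\sum_j n(j,b_n)^2\Big)^{1/2}=o_p(n^{-1/4})\,O_p(n^{1/4})=o_p(1),
\]
by \autoref{ahac}(b) and (d). Your proposed fix---decomposing $e_j$ into the $(\hat\mu_t-\mu_t)(1-\ind_j(t)/p_t)$ piece, invoking \autoref{agnn}(c), and combining with $\psi$-dependence---is unnecessary and, as stated, does not obviously go through: \autoref{agnn}(c) controls the \emph{unweighted} average $n^{-1}\sum_j(\hat\mu_t-\mu_t)(1-\ind_j(t)/p_t)$, not one weighted by the data-dependent $\sum_i\psi_i K_{ij}$, so the conditional-mean-zero structure does not cancel anything here.

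The paper avoids the cross/quadratic split altogether by using the symmetric factorization $\sum_{i,j}(\hat\varphi_i\hat\varphi_j-\varphi_i\varphi_j)K_{ij}=\sum_{i,j}(\hat\varphi_i-\varphi_i)(\hat\varphi_j+\varphi_j)K_{ij}$, then Cauchy--Schwarz in $i$ with $\big|\sum_j(\hat\varphi_j+\varphi_j)K_{ij}\big|\le C\,n(i,b_n)$ from the same uniform bound. This gives
\[
  \abs{\hat\sigma^2-\tilde\sigma^2}\le \Big(\frac{1}{n}\sum_i(\hat\varphi_i-\varphi_i)^2\Big)^{1/2}\Big(\frac{C}{n}\sum_i n(i,b_n)^2\Big)^{1/2}=o_p(n^{-1/4})\,O_p(n^{1/4})=o_p(1),
\]
in one line. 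Either route works, but yours only closes once you use the uniform bound on $\psi_i$ that \autoref{ahac}(a) was introduced precisely to supply.
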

\begin{proof}
  See \autoref{sproofs}.
\end{proof}

\section{Receptive Field}\label{srfield}

This section provides conditions under which the nuisance functions have low-dimensional structure that makes estimation feasible and justifies the use of shallow GNN architectures. As discussed in \autoref{sarch}, the number of layers $L$ in a GNN determines its receptive field $(\bm{X}_{\N(i,L)}, \bm{A}_{\N(i,L)})$ used to construct $i$'s estimate. In practice, it is common to choose a small value, which results in a receptive field that excludes most of the network. This has been found empirically to achieve better predictive performance than deep architectures with large $L$, as we will discuss in \autoref{smc}. We next provide conditions under which the the GNN ``estimand'' $p_t(i,\bm{X}_{\N(i,L)}, \bm{A}_{\N(i,L)})$ can approximate the target $p_t(i,\bm{X},\bm{A})$ for relatively small values of $L$.

Consider the sequence of models and GNNs in \autoref{slarge}, and let $L=L_n$ be a possibly diverging sequence of GNN depths. Recall from \autoref{sgnnest} that a GNN has $d_o + \sum_{l=1}^L (d_{0l}+d_{1l})$ parameters. We provide sufficient conditions for the following properties to hold for some $L = O(\log n)$.
\begin{enumerate}[(a)]
  \item The error from approximating the high-dimensional propensity score with its $L$-neighborhood analog is small in that
    \begin{equation}
      \frac{1}{m_n} \sum_{i\in\M_n} \big(p_t(i,\bm{X},\bm{A}) - p_t(i,\bm{X}_{\N(i,L)}, \bm{A}_{\N(i,L)})\big)^2 = o_p(n^{-1/2}) \label{dasgoal}
    \end{equation}
    and similarly for the outcome regression $\mu_t(\cdot)$. 

  \item $\F_\text{GNN}(L)$ is low-dimensional in that the number of GNN parameters is
    \begin{equation}
      d_o + \sum_{l=1}^L (d_{0l}+d_{1l}) = o(\sqrt{n}).
      \label{ld}
    \end{equation}
\end{enumerate}

\noindent In \autoref{sgnnadd}, we discuss how these properties aid in verifying \autoref{agnn}. We next provide intuition by drawing an analogy to approximate sparsity conditions in the lasso literature. 

\subsection{Approximate Sparsity}

Let $h(X_i) = \E[Y_i \mid X_i]$, and consider a lasso regression of $Y_i$ on a vector of basis functions $P(X_i)$. For the lasso prediction $P(X_i)'\hat\beta$ to be a good estimate of $h(X_i)$, we require
\begin{equation}
  \frac{1}{n} \sum_{i=1}^n (P(X_i)'\hat\beta - h(X_i))^2 = o_p(n^{-1/2}). \label{lassoO}
\end{equation}

\noindent To verify this, it is common to impose approximate sparsity, which consists of the following two conditions \citep[e.g.][]{belloni2014inference}. 
\begin{enumerate}[(a)]
  \item There exists $\beta$ such that $n^{-1}\sum_{i=1}^n (P(X_i)'\beta - h(X_i))^2 = o_p(n^{-1/2})$.
  \item $\norm{\beta}_0 = o(\sqrt{n})$.
\end{enumerate}

\noindent That is, $h(\cdot)$ can be approximated by a simpler function $P(\cdot)'\beta$, and estimating the latter is a low-dimensional problem. These conditions mirror \eqref{dasgoal} and \eqref{ld} and simplify the task of showing \eqref{lassoO} to establishing $n^{-1} \sum_{i=1}^n (P(X_i)'(\hat\beta - \beta))^2 = o_p(n^{-1/2})$ for low-dimensional $\beta$.

\begin{example}\label{eas}
  Suppose $h(X_i) = \sum_{j=1}^\infty P_j(X_i)\theta_j$ with $\abs{\theta_j} \stackrel{j\rightarrow\infty}\longrightarrow 0$. That is, one can order the regressors $P_1(X_i), \ldots, P_m(X_i)$ such that their corresponding true regression coefficients decay to zero. Then the outcome depends primarily on the first few regressors despite $m$ being potentially high-dimensional. This satisfies (a) and (b) above given a sufficiently quick rate of decay for the $\theta_j$s \citep[][\S 4.1.2]{belloni2014inference}.
\end{example}

The main idea in our setting is that the dependence of $Y_i$ and $D_i$ on other units decays with network distance under ANI (\autoref{aani}). That is, these quantities are primarily functions of $(\bm{X}_{\N(i,L)}, \bm{A}_{\N(i,L)})$ for some small radius $L$, which is analogous to \autoref{eas}. We may then approximate $p_t(i,\bm{X},\bm{A})$ with the lower-dimensional estimand $p_t(i,\bm{X}_{\N(i,L)},\bm{A}_{\N(i,L)})$, which an $L$-layer GNN directly estimates.

\subsection{Primitive Conditions}

For $p_t(i,\bm{X},\bm{A})$ to approximate $p_t(i,\bm{X}_{\N(i,L)},\bm{A}_{\N(i,L)})$, we also require a form of conditional independence since the latter drops from the conditioning event the subnetwork external to the $L$-neighborhood. While ANI says that outcomes and treatments are mainly functions of units in a relatively small neighborhood, their primitives may be correlated with those of units far from the neighborhood. We consider it reasonable to assume, in the spirit of ANI, that this correlation instead decays with distance, which is the substance of the next condition.

\begin{assump}[Approximate CI]\label{aani2}
  There exist a sequence of functions $\{\lambda_n(\cdot)\}_{n\in\mathbb{N}}$ with $\lambda_n\colon\R_+\rightarrow\R_+$ and a linear function $r_\lambda\colon\R_+\rightarrow\R_+$ such that $\sup_{n\in\mathbb{N}} \lambda_n(s) \stackrel{s\rightarrow\infty}\longrightarrow 0$, $r_\lambda(s) \geq s$ for all $s\in\R_+$, and 
  \begin{multline*}
    \abs{\E[f(\bm{\varepsilon}_{\N(i,s)}, \bm{\nu}_{\N(i,s)}, \bm{X}_{\N(i,s)}, \bm{A}_{\N(i,s)}) \mid \bm{X}, \bm{A}] \\ - \E[f(\bm{\varepsilon}_{\N(i,s)}, \bm{\nu}_{\N(i,s)}, \bm{X}_{\N(i,s)}, \bm{A}_{\N(i,s)}) \mid \bm{X}_{\N(i,r_\lambda(s))}, \bm{A}_{\N(i,r_\lambda(s))} ]} \leq \lambda_n(s) 
  \end{multline*}

  \noindent a.s.\ for any $n\in\mathbb{N}$, $i\in\N_n$, $s\geq 0$, and $\R$-valued, bounded, measurable function $f(\cdot)$.
\end{assump}

\noindent When $r_\lambda$ is the identity function, the assumption requires that the unobservables of an $s$-neighborhood are approximately conditionally independent of the network outside of this neighborhood, where the approximation error is shrinking with the radius $s$. More generally, it allows the $s$-neighborhood to be approximately conditionally independent of the network outside the greater $r_\lambda(s)$-neighborhood for $r_\lambda(s) \geq s$.

\begin{example}
  Under \autoref{aigno}, we can represent $\varepsilon_i = v_{n,1}(i, \bm{X}, \bm{A}, \bm{U})$ and $\nu_i = v_{n,2}(i, \bm{X}, \bm{A}, \bm{V})$ for some independent random vectors $\bm{U},\bm{V}$ that are independent of $(\bm{X},\bm{A})$ \citep[][Proposition 8.20]{kallenberg1997foundations}. Reparameterize model \eqref{YD} as $Y_i = g_n(i, \bm{D}, \bm{X}, \bm{A}, \bm{\varepsilon}) \equiv \tilde g_n(i, \bm{D}, \bm{X}, \bm{A}, \bm{U})$ and similarly $D_i = \tilde h_n(i, \bm{X}, \bm{A}, \bm{U})$. Suppose ANI holds for the reparameterized model, that is, replacing $(g_n,h_n)$ with $(\tilde g_n, \tilde h_n)$ and $(\bm{\varepsilon},\bm{\nu})$ with $(\bm{U},\bm{V})$ in the statement of \autoref{aani}. This jointly imposes a version of ANI on the outcome, selection, and functions $v_{n,1},v_{n,2}$. Since $(\bm{U},\bm{V}) \indep (\bm{X},\bm{A})$ by supposition, \autoref{aani2} holds for the reparameterized model (substituting $(\bm{U},\bm{V})$ for $(\bm{\varepsilon},\bm{\nu})$) with $r_\lambda(s)=s$ and $\lambda_n(s)=0$ for all $s$.
\end{example}

\begin{example}
  Consider a dyadic network formation model analogous to the one used by \cite{sanchez2022spillovers} where $A_{ij} = \bm{1}\{V(X_i,X_j,\zeta_{ij})>0\}$ for $\R$-valued $V(\cdot)$ and $\{\zeta_{ij}\}_{i<j}$ is a set of i.i.d.\ random variables independent of $(\bm{X},\bm{\varepsilon},\bm{\nu})$. Suppose there exist functions $H_\varepsilon, H_\nu$ and independent vectors $\bm{U} = (U_i)_{i=1}^n$ and $\bm{V} = (V_i)_{i=1}^n$ that are independent of all other primitives such that $(\varepsilon_i,\nu_i) = (H_\varepsilon(U_i,X_i), H_\nu(V_i,X_i))$. Then \autoref{aani2} holds with $r_\lambda(s)=s$ and $\lambda_n(s)=0$ for all $s$. Since \autoref{aani2} only requires approximate independence, it may be possible to verify when links are weakly dependent as in some models of strategic network formation \citep[e.g.][]{leung2021normal}.
\end{example}

\begin{theorem}\label{crc}
  Suppose Assumptions \ref{aani}, \ref{aemap}, and \ref{aani2} hold, $\sup_n \max\{\lambda_n(s), \gamma_n(s), \eta_n(s)\} = O(e^{-\alpha s})$ as $s\rightarrow\infty$ for some $\alpha>0$, and $n^{-1} \sum_{i=1}^n n(i,1)^2 = O_p(1)$. Then \eqref{dasgoal} holds if
  \begin{equation*}
    L \geq r_\lambda\big(((4-\epsilon)\alpha)^{-1} \log n+1\big)
  \end{equation*}

  \noindent for some $\epsilon \in (0,4)$. Further suppose $\sup_{n\in\mathbb{N}} n^{-1} \sum_{i=1}^n n(i,s)^2 = O_p(e^{\xi s})$ as $s\rightarrow\infty$ for some $\xi < 2\alpha$. Under Assumptions \ref{aigno}, \ref{amom}(b), and \ref{ahac}(a), the analog of \eqref{dasgoal} holds for $\mu_t(\cdot)$ if instead $L \geq r_\lambda(((2-\epsilon)(\alpha-\xi/2))^{-1} \log n)$ for some $\epsilon \in (0,2)$.
\end{theorem}
\begin{proof}
  See \autoref{sproofs}. 
\end{proof}

\noindent The result allows $L$ to be of order $\log n$, which quantifies the sense in which the GNN can be shallow. The lower bounds on $L$ given in the theorem are not feasible choices, being dependent on unknowns $r_\lambda(\cdot)$ and $\alpha$. This is similar to how finite-sample bounds for the lasso require restrictions on the penalty parameter involving unknown constants. In \autoref{smc}, we illustrate the performance of different choices of $L$ in simulations and provide suggestions for choosing $L$ in practice.

The first half of the theorem concerns the propensity score. Exponential decay of the interference bounds in \autoref{aani} holds for the models discussed in \autoref{eexp}. Real-world networks are typically sparse, usually formalized as $n^{-1} \sum_{i=1}^n n(i,1) = O_p(1)$, which the theorem mildly strengthens to a second-moment condition. The second half of the theorem concerns $\mu_t(\cdot)$. The requirement $n^{-1} \sum_{i=1}^n n(i,s)^2 = O_p(e^{\xi s})$ for $\xi<2\alpha$ says $s$-neighborhoods grow at a rate $\xi$ not too much larger than the rate $\alpha$ at which interference decays. A similar type of condition is required for a central limit theorem, as discussed in \autoref{slarge} and \autoref{sveri}. 

The last result provides sufficient conditions for \eqref{ld}.

\begin{proposition}\label{pld}
  Suppose the GNN architecture is given by either Example \ref{eGNNarch} or \ref{epna}, and we choose $L = O(\log n)$ GNN layers. For some $\kappa<1/4$, suppose each element of $\{\phi_{kl}(\cdot)\colon k\in\{0,1\}, l=1, \ldots, L\} \cup \{\Phi_o(\cdot)\}$ is an MLP with width $O(n^\kappa \log^2 n)$ and depth $O(\log n)$, uniformly over $l$. Then \eqref{ld} holds.
\end{proposition}
\begin{proof}
  See \autoref{sproofs}. The rates for the MLP width and depth correspond to those of Theorem 1 of \cite{farrell2021deep}.
\end{proof}

\section{Simulation Study}\label{smc}

We design a simulation study with three objectives in mind. The first is to illustrate the finite-sample properties of our proposed estimators for different choices of $L$. The second is to demonstrate that shallow GNNs can perform well even on ``wide'' networks that otherwise may require many layers in the absence of the low-dimensional structure established in \autoref{srfield}. The third is to compare the performance of GNNs to that of standard machine learners utilizing prespecified controls based on \eqref{SOOKnbhd}. 

\subsection{Design}

We simulate $\bm{A}$ from two random graph models. The random geometric graph model sets $A_{ij} = \ind\{\norm{\rho_i-\rho_j} \leq r_n\}$ for $\{\rho_i\}_{i=1}^n \stackrel{iid} \sim \mathcal{U}([0,1]^2)$ and $r_n = (5/(\pi n))^{1/2}$, where $\pi$ is the transcendental number. The Erd\H{o}s-R\'{e}nyi model sets $A_{ij} \stackrel{iid}\sim \text{Ber}(5 / n)$. Both have limiting average degree equal to five. The former model results in ``wide'' networks with high average path lengths that grow at a polynomial rate with $n$, while the latter results in low average path lengths of $\log n$ order. For $n=2000$, the average path length is about 39.5 for random geometric graphs and 4.9 for Erd\H{o}s-R\'{e}nyi graphs.

We draw mutually independent sequences $\{X_i\}_{i=1}^n \stackrel{iid}\sim \mathcal{U}(\{0,0.25,0.5,0.75,1\})$, $\{\varepsilon_i\}_{i=1}^n \stackrel{iid}\sim \N(0,1)$, and $\{\nu_i\}_{i=1}^n \stackrel{iid}\sim \N(0,1)$ independently of $\bm{A}$. For some vectors $\bm{D}=(D_i)_{i=1}^n$ and $\bm{\nu}=(\nu_i)_{i=1}^n$, define
\begin{equation*}
  V_i(\bm{D},\bm{\nu};\theta) = \alpha + \beta \frac{\sum_{j=1}^n A_{ij} D_j}{\sum_{j=1}^n A_{ij}} + \delta \frac{\sum_{j=1}^n A_{ij} X_j}{\sum_{j=1}^n A_{ij}} + \gamma X_i + \nu_i + \frac{\sum_{j=1}^n A_{ij} \nu_j}{\sum_{j=1}^n A_{ij}}
\end{equation*}

\noindent where $\theta = (\alpha,\beta,\delta,\gamma)$. We generate $\{Y_i\}_{i=1}^n$ from a linear-in-means model with $Y_i=V_i(\bm{Y},\bm{\varepsilon};\theta_y)$ and $\theta_y=(0.5,0.8,10,-1)$. We generate $\{D_i\}_{i=1}^n$ according to \autoref{ebg}, so that $D_i = \bm{1}\{V_i(\bm{D},\bm{\nu};\theta_d) > 0\}$ with $\theta_d=(-0.5,1.5,1,-1)$. The equilibrium selection mechanism is myopic best-response dynamics starting from the initial condition $\{D_i^0\}_{i=1}^n$ for $D_i^0 = \bm{1}\{V_i(\bm{0},\bm{\nu};\theta_d) > 0\}$.

The design induces a greater degree of dependence than what our assumptions allow. The error term $\nu_i + \sum_{j=1}^n A_{ij} \nu_j / \sum_{j=1}^n A_{ij}$ is not conditionally independent across units unlike what \autoref{apsi}(a) requires. Also, back-of-the-envelope calculations indicate that peer effects are sufficiently large in magnitude that \autoref{apsi}(d) is violated. Section \ref{samc} presents results for a design that satisfies our assumptions.

The estimand is $\tau(1,0)$ in \autoref{eATE}, whose true value is zero. About 57 percent of units select into treatment, so the effective sample size used to estimate the outcome regressions is around $n/2$ since $\mu_t(i,\bm{X}, \bm{A})$ is estimated only with observations for which $T_i=t$. We report results for $n=1000,2000,4000$.

\subsection{Nuisance Functions}\label{smcne}

For the GNNs, we use the PNA architecture in \autoref{epna} with aggregator $\Gamma_2(\cdot)$ defined in the example and $L=1,2,3$. Both $\phi_{0l}$ and $\phi_{1l}$ are one-layer MLPs (\autoref{dMLP}) with widths $H=3,6,9$. We optimize the GNNs with full-batch gradient descent using the default PyTorch implementation of the ``Adam'' gradient descent algorithm \citep{NEURIPS2019_9015}. We randomize initial parameter values and utilize a learning rate of 0.01. For $\phi_{1l}(\cdot)$, we use a linear layer (\autoref{dMLP}), which is the default for the {\tt PNAConv} class in the PyTorch Geometric package \citep{fey2019fast}. For $\phi_{0l}(\cdot)$, we use a one-layer MLP with ReLU activation. Finally, $\Phi_o(\cdot)$ is a linear layer.

We compare GNNs to conventional machine learners using the prespecified controls given in \eqref{TWex}, which are analogous to those used in the numerical illustrations of \cite{emmenegger2022treatment} and \cite{forastiere2021identification}. We report results for two different machine learners. The first is an MLP with the same number of layers $L$ and width $H$ as the GNNs and trained using the same gradient descent algorithm and learning rate. The second is a random forest with 500 trees and minimum leaf size 10. 

\begin{table}[ht]
\small
\caption{Simulation Results: Random Geometric Graph}
\begin{threeparttable}
\begin{tabular}{llrrrrrrrrr}
\toprule
& & \multicolumn{3}{c}{$L=1$} & \multicolumn{3}{c}{$L=2$} & \multicolumn{3}{c}{$L=3$} \\
\cmidrule(lr){3-5} \cmidrule(lr){6-8} \cmidrule(lr){9-11}
\multicolumn{2}{c}{$n$}& 1000 & 2000 & 4000 & 1000 & 2000 & 4000 & 1000 & 2000 & 4000 \\
\multicolumn{2}{c}{$\sum_{i=1}^n D_i$} & 568 & 1138 & 2275 & 568 & 1138 & 2275 & 568 & 1138 & 2275 \\
\multicolumn{2}{c}{$H$} & 3 & 6 & 9 & 3 & 6 & 9 & 3 & 6 & 9 \\
\midrule
\multirow[t]{5}{*}{GNN} & $\hat\tau$ & 0.1801 & 0.1178 & 0.1034 & 0.1164 & 0.0439 & 0.0339 & 0.1531 & 0.0498 & 0.0319 \\
 & CI & 0.9190 & 0.9246 & 0.9204 & 0.9138 & 0.9302 & 0.9406 & 0.9062 & 0.9096 & 0.9318 \\
 & CI+ & 0.9412 & 0.9470 & 0.9404 & 0.9362 & 0.9478 & 0.9550 & 0.9308 & 0.9344 & 0.9496 \\
 & SE & 0.4826 & 0.3191 & 0.2250 & 0.4640 & 0.2851 & 0.2021 & 0.4896 & 0.2712 & 0.1921 \\
 & SE+ & 0.5191 & 0.3445 & 0.2432 & 0.4976 & 0.3071 & 0.2178 & 0.5244 & 0.2922 & 0.2070 \\
\cmidrule(lr){1-11}
\multirow[t]{2}{*}{oracle} & CI & 0.9374 & 0.9372 & 0.9240 & 0.9424 & 0.9450 & 0.9444 & 0.9370 & 0.9450 & 0.9460 \\
 & SE & 0.5263 & 0.3322 & 0.2277 & 0.5351 & 0.3035 & 0.2073 & 0.5675 & 0.3039 & 0.2039 \\
\cmidrule(lr){1-11}
\multirow[t]{2}{*}{naive}  & CI & 0.6832 & 0.6874 & 0.6858 & 0.6646 & 0.6970 & 0.6996 & 0.6606 & 0.6754 & 0.6822 \\
 & SE & 0.2748 & 0.1800 & 0.1258 & 0.2574 & 0.1562 & 0.1090 & 0.2692 & 0.1503 & 0.1042 \\
\cmidrule(lr){1-11}
\multirow[t]{3}{*}{MLP} & $\hat\tau$ & 0.2114 & 0.1932 & 0.1804 & 0.2114 & 0.1932 & 0.1804 & 0.2114 & 0.1932 & 0.1804 \\
 & CI & 0.9124 & 0.9002 & 0.8682 & 0.9124 & 0.9002 & 0.8682 & 0.9124 & 0.9002 & 0.8682 \\
 & SE & 0.4617 & 0.3289 & 0.2330 & 0.4617 & 0.3289 & 0.2330 & 0.4617 & 0.3289 & 0.2330 \\
\cmidrule(lr){1-11}
\multirow[t]{3}{*}{forest} & $\hat\tau$ & 0.1926 & 0.1461 & 0.1112 & 0.1926 & 0.1461 & 0.1112 & 0.1926 & 0.1461 & 0.1112 \\
 & CI & 0.8652 & 0.8764 & 0.8898 & 0.8652 & 0.8764 & 0.8898 & 0.8652 & 0.8764 & 0.8898 \\
 & SE & 0.3866 & 0.2800 & 0.2039 & 0.3866 & 0.2800 & 0.2039 & 0.3866 & 0.2800 & 0.2039 \\
\bottomrule
\end{tabular}
\begin{tablenotes}[para,flushleft]
  \footnotesize 5k simulations. Estimand is $\tau(1,0)=0$. CI $=$ empirical coverage of 95\% CIs. SE $=$ our standard errors. SE+ $=$ \cite{gao2025causal} SE. Oracle $=$ GNNs with true SE. Naive $=$ GNNs with i.i.d.\ SE. Average bandwidth $\eqref{ourb}=3$ for each SE cell.
\end{tablenotes}
\end{threeparttable}
\label{simresultsRGG}
\end{table}

\begin{table}[ht]
\small
\caption{Simulation Results: Erd\H{o}s-R\'{e}nyi}
\begin{threeparttable}
\begin{tabular}{llrrrrrrrrr}
\toprule
& & \multicolumn{3}{c}{$L=1$} & \multicolumn{3}{c}{$L=2$} & \multicolumn{3}{c}{$L=3$} \\
\cmidrule(lr){3-5} \cmidrule(lr){6-8} \cmidrule(lr){9-11}
\multicolumn{2}{c}{$n$} & 1000 & 2000 & 4000 & 1000 & 2000 & 4000 & 1000 & 2000 & 4000 \\
\multicolumn{2}{c}{$\sum_{i=1}^n D_i$} & 591 & 1183 & 2366 & 591 & 1183 & 2366 & 591 & 1183 & 2366 \\
\multicolumn{2}{c}{$H$} & 3 & 6 & 9 & 3 & 6 & 9 & 3 & 6 & 9 \\
\midrule
\multirow[t]{5}{*}{GNN} & $\hat\tau$ & 0.0975 & 0.0559 & 0.0486 & 0.0854 & 0.0338 & 0.0258 & 0.1050 & 0.0420 & 0.0268 \\
 & CI & 0.9004 & 0.9196 & 0.9138 & 0.8994 & 0.9266 & 0.9328 & 0.8874 & 0.9152 & 0.9264 \\
 & CI+ & 0.9594 & 0.9746 & 0.9702 & 0.9574 & 0.9778 & 0.9770 & 0.9534 & 0.9712 & 0.9750 \\
 & SE & 0.2086 & 0.1382 & 0.0982 & 0.2074 & 0.1316 & 0.0939 & 0.2161 & 0.1311 & 0.0927 \\
 & SE+ & 0.2604 & 0.1738 & 0.1232 & 0.2586 & 0.1655 & 0.1176 & 0.2695 & 0.1652 & 0.1163 \\
\cmidrule(lr){1-11}
\multirow[t]{2}{*}{oracle}  & CI & 0.9350 & 0.9400 & 0.9282 & 0.9342 & 0.9466 & 0.9438 & 0.9258 & 0.9436 & 0.9378 \\
 & SE & 0.2500 & 0.1485 & 0.1028 & 0.2493 & 0.1425 & 0.0988 & 0.2613 & 0.1450 & 0.0979 \\
\cmidrule(lr){1-11}
\multirow[t]{2}{*}{naive} & CI & 0.7778 & 0.7846 & 0.7748 & 0.7702 & 0.7946 & 0.7942 & 0.7520 & 0.7772 & 0.7872 \\
 & SE & 0.1507 & 0.0984 & 0.0692 & 0.1478 & 0.0921 & 0.0649 & 0.1545 & 0.0921 & 0.0641 \\
\cmidrule(lr){1-11}
\multirow[t]{3}{*}{MLP} & $\hat\tau$ & 0.1415 & 0.1311 & 0.1217 & 0.1415 & 0.1311 & 0.1217 & 0.1415 & 0.1311 & 0.1217 \\
 & CI & 0.8742 & 0.8272 & 0.7452 & 0.8742 & 0.8272 & 0.7452 & 0.8742 & 0.8272 & 0.7452 \\
 & SE & 0.2045 & 0.1459 & 0.1043 & 0.2045 & 0.1459 & 0.1043 & 0.2045 & 0.1459 & 0.1043 \\
\cmidrule(lr){1-11}
\multirow[t]{3}{*}{forest} & $\hat\tau$ & 0.1515 & 0.1213 & 0.0696 & 0.1515 & 0.1213 & 0.0696 & 0.1515 & 0.1213 & 0.0696 \\
 & CI & 0.8236 & 0.8032 & 0.8584 & 0.8236 & 0.8032 & 0.8584 & 0.8236 & 0.8032 & 0.8584 \\
 & SE & 0.1828 & 0.1291 & 0.0936 & 0.1828 & 0.1291 & 0.0936 & 0.1828 & 0.1291 & 0.0936 \\
\bottomrule
\end{tabular}
\begin{tablenotes}[para,flushleft]
  \footnotesize See table notes of \autoref{simresultsRGG}. Average bandwidth $\eqref{ourb}=2$ for each SE cell.
\end{tablenotes}
\end{threeparttable}
\label{simresultsER}
\end{table}

\subsection{Results}\label{ssimresults}

Tables \ref{simresultsRGG} and \ref{simresultsER} report the results of 5000 simulations for the random geometric graph and Erd\H{o}s-R\'{e}nyi models, respectively. Rows marked $\hat\tau$, SE, and CI report the average point estimate, average standard error, and empirical coverage of 95-percent confidence intervals, respectively. Since the true estimand is zero, the absolute value of $\hat\tau$ also equals the bias. Rows SE+ and CI+ use the \cite{gao2025causal} standard errors.\footnote{In the notation of \autoref{sest}, their variance estimator is $m_n^{-1}\tilde{\bm{\tau}}'\bm{K}_+^U\tilde{\bm{\tau}}$ where $\bm{K}_+^U = \bm{Q} \max\{\bm{\Lambda},\bm{0}\} \bm{Q}'$ and $\bm{Q}\bm{\Lambda}\bm{Q}'$ is the eigendecomposition of $\bm{K}^U$.} The ``oracle'' rows report results using the true standard error, computed by taking the standard deviation of $\hat\tau(1,0)$ across simulation draws. The ``naive'' rows report results using i.i.d.\ standard errors, which illustrate the degree of dependence. All other SE and CI rows use our variance estimator in \autoref{sest}. 

We first compare the standard machine learners to GNNs with $L=1$ since all share the same receptive field in this case. The bias obtained with GNNs is smaller, often by half, particularly in the Erd\H{o}s-R\'{e}nyi case. The most competitive alternative is random forests which achieves a similar bias when the network is the random geometric graph, but in the case of the Erd\H{o}s-R\'{e}nyi model, its bias is over 30 percent larger. This suggests that GNNs can learn a different function of $(\bm{X},\bm{A})$ than common prespecified controls, one that can better adjust for confounding. The improvement in bias using GNNs does not come at an apparent cost to variance as the SEs are comparable across methods.

Second, we compare the GNN estimators across different choices of $L$. The best performance in terms of bias is achieved with $L=2$. This is the case for both random graph models and is particularly notable for the random geometric graph because its average path length is substantially larger than the radius of the receptive field. This demonstrates that GNNs can perform well despite only controlling for $(\bm{X}_{\N(i,2)}, \bm{A}_{\N(i,2)})$, which is possible due to the low-dimensional structure established in \autoref{srfield}. Unsurprisingly, $L=2$ outperforms $L=1$ since the latter only adjusts for 1-neighborhood confounding. In principle, $L=3$ accounts for higher-order network confounds, but the bias turns out to be slightly larger and the coverage slightly worse. 

The choice of $L=2$ is not unusual in the GNN literature. For example, \cite{das2023credit} and \cite{ying2018graph} use $L=2$, while \cite{ma2021graph} use $L=3$. \cite{zhou2021understanding} compute the prediction error of GNNs across several different datasets and architectures with $L=2,4,8,\dots$ and find that $L=2$ has the best performance. The fact that GNN performance often fails to improve, and indeed can worsen, with larger $L$ is well known in the GNN literature.\footnote{\cite{bronstein2020do} writes, ``Significant efforts have recently been dedicated to coping with the problem of depth in graph neural networks, in hope to [sic] achieve better performance and perhaps avoid embarrassment in using the term `deep learning' when referring to graph neural networks with just two layers.''} See \autoref{sbv} for a survey of possible explanations.

\begin{remark}[Choosing $L$ in Practice]\label{rL}
  The network's average path length (APL) and the values of $L$ used in the literature above are useful reference points for the choice of $L$. At the extreme of $L=\text{APL}$, the GNN has a large receptive field encompassing most of the network, so in light of the low-dimensional structure established in \autoref{srfield}, we suggest considering values of $L$ below half the APL (e.g.\ Erd\H{o}s-R\'{e}nyi graphs with $n=2000$ have APLs about 5). However, for networks with high APLs (e.g.\ random geometric graphs with $n=2000$ have APLs about 40), values near this bound may exhibit poor performance given the empirical evidence above. In these cases, one should truncate to values around 3 or 4 at most. To assess robustness, we suggest reporting results for a range of values of $L$ below these upper bounds.
\end{remark}

The oracle CIs achieve coverage close to the nominal level across most sample sizes and architectures, which illustrates the quality of the normal approximation. Our SEs result in some undercoverage, which is common for HAC estimators, though coverage tends toward the nominal level as $n$ grows for $L=2$. The \cite{gao2025causal} SEs are generally larger than the oracle SEs and hence conservative, which improves coverage in smaller samples.

\section{Empirical Application}\label{sapp}

We revisit the analysis of \cite{he2024measuring} on the diffusion of microfinance through rural villages in Karnataka, India. They use data collected by \cite{banerjee2013diffusion} containing twelve dimensions of social relationships, demographic details, and microfinance adoption decisions from 43 villages involved with Bharatha Swamukti Samsthe's (BSS) microfinance program in 2007. BSS initiated the program by meeting with a select group of village ``leaders'' who were asked to spread the word about microfinance. 

Following the analysis of \cite{he2024measuring}, the unit of observation is the household, and household observables $X_i$ include the normalized total number of households within each village and indicators for participation in self-help groups, savings activities, and caste composition. They construct three social networks from the multigraph data: $G_{ee}$ represents connections through material exchanges like borrowing or lending essentials; $G_{sc}$ captures social activities including advice sharing or joint religious attendance; and $G_{all}$ is the union of $G_{ee}$ and $G_{sc}$. We report results with $\bm{A}$ set to each of the three. The average path length is about 3 for all graphs.

We consider three different definitions of the treatment. In the ``leader case,'' $D_i$ is an indicator for whether household $i$ has a leader. In the ``leader-adopter case,'' it is an indicator for whether the household has a leader who adopts microfinance in the first trimester of the study. These definitions are used by \cite{he2024measuring}. We add to these the ``adopter case,'' where $D_i$ is simply an indicator for whether any household member adopts microfinance in the first trimester. The outcome $Y_i$ is an indicator for whether an individual in household $i$ adopts microfinance starting in the first trimester of the study or later.\footnote{In the adopter case, the treatment time period intersects with that of the outcome in the first trimester, which generates unwanted feedback between the outcome and selection models. To avoid this, we should define $Y_i$ as an indicator for adoption {\em after} the first trimester. Fortunately, adoption decisions are never reversed in the data, so this coincides with the present definition of $Y_i$.}

\subsection{Comparison with He and Song (2024)}

We next discuss estimands in the context of the adopter case. \cite{he2024measuring} propose a novel estimand called ``average diffusion at the margin'' (ADM). This measures the expected number of neighbors induced to adopt microfinance as a result of the ego's adoption. To identify the ADM, they assume the following. First, initial adoption decisions are unconfounded (their Assumption 2.1), and in the application, they only use household observables $X_i$ as the controls. Second, their selection model, as described in \autoref{ediff}, is a parametric single-agent discrete choice model. Third, they assume adoption decisions are irreversible in that $Y_i \geq D_i$, which is true in the application. Finally, as discussed in \autoref{ediff}, the econometrician must specify the maximum number of rounds of diffusion that take place between the measurement of $D_i$ and $Y_i$, and they choose $K=1$ in the application.

Our approach has several advantages. First, we do not require knowledge of the number of within-period rounds of diffusion, binary outcomes, or irreversibility of adoption decisions. Second, we use a richer set of network controls that includes covariates of higher-order neighbors, without assuming a known function $W_i$ as in \eqref{SOOKnbhd}. Third, we employ a nonparametric selection model allowing for peer effects in initial adoption. Fourth, our outcome model \eqref{YD} also allows for peer effects in subsequent adoption, as well as higher-order diffusion beyond the $K=1$ neighborhood since outcomes depend on the entire initial adoption vector $\bm{D}$. ANI posits that this dependence decays with distance, which is a feature of most diffusion models, as information from distant units is less likely to diffuse to the ego. 

The cost of imposing less structure than \cite{he2024measuring} is that the ADM may not be identified under our assumptions. Instead, we consider two estimands defined by the following exposure mapping:
\begin{equation*}
  T_i = \left\{\begin{array}{ll} 2 & \text{if } \sum_{j=1}^n A_{ij} D_j > 1 \\ 1 & \text{if } \sum_{j=1}^n A_{ij} D_j = 1 \\ 0 & \text{otherwise} \end{array}\right.
\end{equation*}

\noindent with $\mathcal{M}_n$ defined in \eqref{Mn}. Then $\tau(1,0)$ ($\tau(2,0)$) measures the effect on own adoption of going from 0 to 1 (two or more) adopting neighbor(s) for units with at least one neighbor.\footnote{Assumption 2.1 of \cite{he2024measuring} implies that $\{D_i\}_{i=1}^n$ is independent conditional on observables. Our estimands have a causal interpretation under the same condition by \autoref{pexp}.} This sheds light on a different dimension of diffusion relative to the ADM. Whereas the ADM measures how many alters are affected by the ego's adoption, our estimands quantify the effect of having multiple adopting neighbors on the ego's adoption. We will find that having multiple adopting neighbors has a much larger effect than having only one.

As previously stated, \cite{he2024measuring} define the treatment in two ways. One is a binary indicator for having a leader in the household, the idea being that all leaders were initially informed about microfinance and told to spread the word. However, not all leaders adopted in the first trimester, which perhaps motivates the second definition, a binary indicator for having an adopting leader in the household. In our view, it may be plausible to argue that microfinance adoption in the initial period is as good as random within observable subpopulations, but it is less plausible to make the same case for being a leader, which is likely determined by a complex social process. We therefore consider a third definition, which is simply an indicator for adopting microfinance in the initial period, irrespective of having a leader in the household. Recall that the interpretations of the causal estimands above pertain to the third definition; in our view, the interpretations are less clear when treatment is defined as in the other cases. 

\subsection{Results}

We present estimates of $\tau(1,0)$, $\tau(2,0)$, and the ADM for the three network specifications and three treatment definitions introduced above. We use three different machine learning estimators of the nuisance functions. The first is GNNs using the same PNA architecture, learning rate, and gradient descent algorithm as the simulation study (see \autoref{smcne}). The other estimators use the same prespecified controls \eqref{TWex} and machine learners (MLPs and random forests) as the simulation study. For both neural networks, we set the width to $H=4$ to match the number of household covariates $X_i$.

\begin{table}[t]
\centering
\caption{Estimand $\tau(1,0)$}
\begin{adjustbox}{width=\textwidth}
\begin{threeparttable}
\begin{tabular}{lrrrrrrrc}
\toprule
 & ADM & \multicolumn{3}{c}{GNN} & \multicolumn{3}{c}{MLP} & forest \\
 \cmidrule(lr){2-2} \cmidrule(lr){3-5} \cmidrule(lr){6-8} \cmidrule(lr){9-9}
& & 1 Layer & 2 Layer & 3 Layer & 1 Layer & 2 Layer & 3 Layer \\
\midrule
\multicolumn{8}{l}{Leader case} \\
$G_{ee}$ & -0.052 & 0.003 (0.016) & 0.005 (0.014) & 0.008 (0.015) & 0.012 (0.015) & 0.010 (0.015) & 0.007 (0.015) & 0.003 (0.014) \\
$G_{sc}$ & -0.049 & 0.022 (0.017) & 0.018 (0.016) & 0.047 (0.021) & 0.021 (0.021) & 0.017 (0.020) & 0.021 (0.019) & 0.029 (0.018) \\
$G_{all}$ & -0.050 & 0.001 (0.026) & 0.011 (0.017) & 0.018 (0.019) & 0.005 (0.022) & 0.007 (0.022) & 0.015 (0.019) & 0.020 (0.018) \\
\multicolumn{8}{l}{Leader-adopter case} \\
$G_{ee}$ & 0.215 & 0.092 (0.016) & 0.075 (0.019) & 0.083 (0.027) & 0.111 (0.035) & 0.104 (0.034) & 0.076 (0.025) & 0.057 (0.022) \\
$G_{sc}$ & 0.434 & 0.074 (0.020) & 0.089 (0.018) & 0.075 (0.022) & 0.092 (0.032) & 0.073 (0.026) & 0.090 (0.022) & 0.038 (0.018) \\
$G_{all}$ & 0.435 & 0.079 (0.020) & 0.084 (0.018) & 0.066 (0.024) & 0.081 (0.026) & 0.076 (0.025) & 0.095 (0.026) & 0.039 (0.018) \\
\multicolumn{8}{l}{Adopter case} \\
$G_{ee}$ & 0.423 & 0.071 (0.014) & 0.055 (0.015) & 0.065 (0.016) & 0.066 (0.016) & 0.066 (0.017) & 0.061 (0.017) & 0.027 (0.014) \\
$G_{sc}$ & 0.622 & 0.031 (0.015) & 0.033 (0.013) & 0.022 (0.014) & 0.033 (0.015) & 0.033 (0.015) & 0.036 (0.015) & 0.009 (0.014) \\
$G_{all}$ & 0.657 & 0.028 (0.014) & 0.033 (0.014) & 0.011 (0.016) & 0.030 (0.016) & 0.032 (0.015) & 0.029 (0.015) & 0.002 (0.015) \\
\bottomrule
\end{tabular}
\begin{tablenotes}[para,flushleft]
  $n=4413$. Standard errors are in parentheses. This table presents the effect of a single neighbor adopting microfinance on own adoption.
\end{tablenotes}
\end{threeparttable}
\end{adjustbox}
\label{exposure1}
\end{table}

\begin{table}[t]
\centering
\centering
\caption{Estimand $\tau(2,0)$}
\begin{adjustbox}{width=\textwidth}
\begin{threeparttable}
\begin{tabular}{lrrrrrrrc}
\toprule
 & ADM & \multicolumn{3}{c}{GNN} & \multicolumn{3}{c}{MLP} & forest \\
  \cmidrule(lr){2-2} \cmidrule(lr){3-5} \cmidrule(lr){6-8} \cmidrule(lr){9-9}
& & 1 Layer & 2 Layer & 3 Layer & 1 Layer & 2 Layer & 3 Layer \\
\midrule
\multicolumn{8}{l}{Leader case} \\
$G_{ee}$ & -0.052 & -0.027 (0.020) & -0.012 (0.014) & -0.036 (0.022) & -0.007 (0.016) & -0.011 (0.016) & -0.019 (0.016) & -0.004 (0.016) \\
$G_{sc}$ & -0.049 & -0.024 (0.022) & -0.001 (0.014) & 0.024 (0.026) & -0.019 (0.019) & -0.022 (0.019) & -0.015 (0.018) & 0.000 (0.016) \\
$G_{all}$ & -0.050 & -0.016 (0.026) & -0.011 (0.017) & 0.010 (0.022) & -0.025 (0.021) & -0.028 (0.021) & -0.016 (0.017) & 0.000 (0.017) \\
\multicolumn{8}{l}{Leader-adopter case} \\
$G_{ee}$ & 0.215 & 0.252 (0.007) & 0.239 (0.014) & 0.095 (0.008) & 0.157 (0.017) & 0.520 (0.024) & 0.549 (0.024) & 0.194 (0.047) \\
$G_{sc}$ & 0.434 & 0.193 (0.049) & 0.135 (0.017) & 0.072 (0.021) & 0.099 (0.053) & 0.049 (0.050) & 0.424 (0.049) & 0.141 (0.037) \\
$G_{all}$ & 0.435 & 0.146 (0.051) & 0.116 (0.018) & 0.337 (0.008) & 0.137 (0.047) & 0.064 (0.056) & 0.193 (0.042) & 0.107 (0.034) \\
\multicolumn{8}{l}{Adopter case} \\
$G_{ee}$ & 0.423 & 0.220 (0.027) & 0.197 (0.020) & 0.220 (0.024) & 0.199 (0.031) & 0.195 (0.027) & 0.201 (0.023) & 0.106 (0.024) \\
$G_{sc}$ & 0.622 & 0.193 (0.023) & 0.188 (0.016) & 0.168 (0.023) & 0.172 (0.021) & 0.168 (0.022) & 0.186 (0.022) & 0.089 (0.019) \\
$G_{all}$ & 0.657 & 0.181 (0.024) & 0.176 (0.017) & 0.158 (0.021) & 0.154 (0.022) & 0.161 (0.021) & 0.166 (0.022) & 0.080 (0.019) \\
\bottomrule
\end{tabular}
\begin{tablenotes}[para,flushleft]
  $n=4413$. Standard errors are in parentheses. This table presents the effect of multiple neighbors adopting microfinance on own adoption.
\end{tablenotes}
\end{threeparttable}
\end{adjustbox}
\label{exposure2}
\end{table}

To compute the estimates, we concatenate the village networks into a single adjacency matrix of size $n=4413$. For the non-ADM estimates, we trim observations with propensity scores outside of $[0.01,0.99]$.\footnote{The only substantial trimming occurs in the leader-adopter case for $\tau(2,0)$. This is because the number of units with two or more treated units is small, no more than 100. The worst case is network $G_{ee}$ for which the smallest post-trimming sample size across $L$ is 1020 for GNNs, 889 for MLPs and 554 for random forests. Fortunately, these are sufficiently large that we draw similar conclusions from the estimates regardless of $L$ or the network.} Standard errors are obtained from the variance estimator defined in \autoref{sest}. Across all network definitions, the bandwidth \eqref{ourb} equals one.

Tables \ref{exposure1} and \ref{exposure2} report results for $\tau(1,0)$ and $\tau(2,0)$ respectively, as well as ADM estimates. First consider $\tau(1,0)$, which contrasts microfinance adoption rates for units with 1 versus 0 initially adopting neighbors. The GNN results are similar across $L$. For the leader case, we obtain precise zeros for almost all estimates. For the leader-adopter case, the GNN estimates are substantially smaller in magnitude than those of the ADM, at most half the magnitude. While the ADM is a different estimand, $\tau(1,0)$ is perhaps the case where they are most logically comparable, and the smaller effect sizes we find may be due to the use of richer network controls. For the adopter case, the GNN estimates are an order of magnitude smaller than the ADM estimates. The MLP estimates are comparable to the GNN estimates, while the random forest estimates are generally smaller, particularly in the adopter case.

The estimand using $\tau(2,0)$ contrasts units with $2+$ versus 0 initially adopting neighbors. The estimates for the leader case are precise zeros, but unlike $\tau(1,0)$, we find sizeable effects for the leader-adopter and adopter cases. For the latter, the estimates are around 20 percentage points. The MLP estimates are comparable to the GNN estimates, while the random forest estimates are substantially smaller. Regardless of method, the conclusion from both tables is that the effect of having multiple adopting neighbors is more than triple the effect of having only one, but the magnitudes are less than half the ADM estimates.

\section{Conclusion}\label{sconclude}

Existing work on network interference with unconfoundedness assumes that it suffices to control for a known, low-dimensional function of the network and covariates $W_i$. In this respect, the approaches may be viewed as semiparametric. We propose to use GNNs to nonparametrically learn the function and provide a behavioral model under which it is low-dimensional and estimable with shallow GNNs.

Our contributions are twofold. First, we observe that the standard formulation of unconfoundedness that controls for a known function of the network is limited in terms of microfoundations, ruling out simultaneous peer effects in the outcome and treatment selection stages. We consider a model of approximate neighborhood interference (ANI) that allows for both. The challenging aspect of the model is that it induces high-dimensional network confounding, which is presumably why it has not been previously studied in a fully nonparametric setting. 

Second, we make the case that estimation remains feasible. We draw an analogy between ANI and approximate sparsity conditions in the lasso literature, which posit that a high-dimensional regression function is well-approximated by a lower-dimensional analog. Under ANI and additional conditions, we show that the propensity score and outcome regression can be approximated by low-dimensional functions of the ego's $L$-neighborhood network for relatively small $L$. \cite{leung2022causal} studies the implications of ANI for asymptotic inference in randomized control trials, while we highlight its utility for handling high-dimensional network confounding. 


\newpage
\part{Supplementary Appendix}

\makeatletter
\@addtoreset{section}{part}
\makeatother
\renewcommand{\thesection}{SA.\arabic{section}} 
\setcounter{section}{0}
\numberwithin{equation}{section} 

\section{Theoretical Properties of GNNs}\label{sgnnadd}

\subsection{\autoref{agnn}}

{\bf Mean-Squared Error.} This section discusses verification of \autoref{agnn}, which is currently beyond the scope of the literature. In the notation of the \autoref{agnn}, let $\varrho$ denote either the propensity score $p_t$ or outcome regression $\mu_t$. To verify \autoref{agnn}(a), \autoref{crc} implies that it suffices to show
\begin{equation}
  \frac{1}{m_n} \sum_{i\in\M_n} \big(\hat{\varrho}(i,\bm{X},\bm{A}) - \varrho(i,\bm{X}_{\N(i,L)}, \bm{A}_{\N(i,L)})\big)^2 = o_p(n^{-1/2}) \label{e3rdpt1}
\end{equation}

\noindent for $L = O(\log n)$. This should be more feasible to verify directly given that (a) $\hat{\varrho}$ is an $L$-layer GNN which only uses information from $(\bm{X}_{\N(i,L)}, \bm{A}_{\N(i,L)})$, (b) the GNN is low-dimensional under the conditions of \autoref{pld}, (c) $\varrho$ is invariant like $\varrho$ under the conditions of \autoref{pexch}, and (d) the data $\{(Y_i, T_i, \bm{X}_{\N(i,L)}, \bm{A}_{\N(i,L)})\}_{i=1}^n$ is $\psi$-dependent conditional on $(\bm{X}, \bm{A})$ by an argument similar to \autoref{lpsi}.

Recent work by \cite{wang2024graph} provides sufficient conditions for \eqref{e3rdpt1} (see their section 3.2), but some limitations prevent a direct application to our setup. 
\begin{enumerate}
  \item They impose restrictions on the dependence structure that hold if the data consists of many independent clusters (their \S 4.2.2). This enables the application of i.i.d.\ concentration inequalities. Analogous results do not presently exist for $\psi$-dependent data.

  \item Their analysis is semiparametric because they assume the GNN model is correctly specified up to the MLPs in each layer (their Assumptions I3 and II6). A nonparametric analysis would require a characterization of the nonparametric function class that GNNs can approximate.
\end{enumerate}

\noindent In \autoref{sgnnprop} below, we provide a characterization, drawing heavily from existing theory in the GNN literature. The result says that, for any fixed $n$ and $L$, if the function lies within a certain nonparametric subclass of invariant functions, then there exists a sequence of GNNs converging to it. This is a step toward what is eventually needed, which is a rate of convergence in terms of the GNN parameters \citep[analogous to][\S B.3]{wang2024graph}. 

\bigskip
\noindent {\bf Stochastic Equicontinuity.} \autoref{agnn}(b) is typically established from \autoref{agnn}(a) and additional conditions. It is most straightforward to verify with cross-fitting, but it can be established without cross-fitting for certain machine learners in low-dimensional settings. \cite{farrell2021deep} do so for MLPs and i.i.d.\ data (their Lemma 10). \cite{wang2024graph} provide sufficient conditions for GNNs (their Appendix D).

In our formulation, we explicitly state the stochastic equicontinuity (SE) condition for the propensity score. This is often left implicit in standard i.i.d.\ settings, but it requires careful attention here. Specifically, the condition for $\Psi_{\mu_t}$ is standard and concerns the outcome regression, while the condition for $\Psi_{p_t}$ is the analog for the propensity score. The latter is not explicitly stated in the literature because, when the data is i.i.d.\ and SUTVA holds, it is straightforward to verify from first principles \citep[e.g.][proof of Theorem 3.1]{farrell2015robust}. However, due to the complexity of our setting, verification from first principles is not apparently possible without stronger restrictions. 

To see this, we discuss two methods of verifying SE for $\Psi_{p_t}$. The first is to observe that $\Psi_{\mu_t}$ and $\Psi_{p_t}$ have the same structure, so techniques used by \cite{wang2024graph} to verify SE for the former can be applied to the latter. The second is to either use cross-fitting or restrict interference in the outcome model to enable verification from first principles.

\bigskip
\noindent {\bf Method 1.} Appendix D of \cite{wang2024graph} verifies SE for $\Psi_{\mu_t}$. Under their assumptions, the arguments are directly applicable to $\Psi_{p_t}$ because they only rely on the following properties that are shared by both functions given uniformly bounded outcomes.
\begin{enumerate}
  \item For any constant $x$, $\Psi_{\mu_t}(Z_i, x)$ and $\Psi_{p_t}(Z_i, x)$ are uniformly bounded and mean zero. 
  \item Their variances are bounded by a universal constant times the mean-squared error of the machine learner.
  \item They are Lipschitz in $x$ with uniformly bounded Lipschitz constant.
\end{enumerate}

\noindent {\bf Method 2.} The next two lemmas show that the SE condition for $\Psi_{p_t}$ holds if either the GNNs are trained using cross-fitting or outcomes follow a generalized neighborhood interference model (ruling out endogenous peer effects in the outcome but not the selection stage).\footnote{By comparison, Corollary 4 of \cite{wang2024graph} verifies the condition using independent clusters (their \S 4.2.2) and SUTVA (their \S 4.1), which is a special case of neighborhood interference. Lemma E.4 of \cite{emmenegger2022treatment} verifies SE using cross-fitting (their Algorithm 1) and neighborhood interference (their equation (1)).}  With a single network, it is not generally possible to cross-fit in a manner that satisfies the conditions of the next lemma, but the purpose of the exercise is to illustrate a technical point discussed in \autoref{rkey} below. Without either restriction, a key step breaks down, and the challenge arises not from dependence per se but from its combination with our richer model of interference.

\begin{seclemma}[Cross-Fitting]\label{lcf}
  Suppose $\{\hat{p}_t(i, \bm{X}, \bm{A})\}_{i=1}^n \indep \bm{Y},\bm{D} \,|\, \bm{X}, \bm{A}$ (for example if the propensity score GNN is trained on independent data). Further suppose
  \begin{equation}
    \limsup_{n\rightarrow\infty} \sum_{s=0}^\infty \delta_n^\partial(s; 2)^{1/2} \psi_n(s)^{1-2/p} < \infty \quad\text{a.s.} \label{3ND}
  \end{equation}

  \noindent for $\delta_n^\partial(\cdot)$ defined prior to \eqref{thepsi}. Under Assumptions \ref{aigno}, \ref{aani}, \ref{aemap}, \ref{amom}(a) and (b), \ref{agnn}(a), and \ref{apsi}(a) and (b), the SE condition for $\Psi_{p_t}$ in \autoref{agnn}(b) holds.
\end{seclemma}
\begin{proof}
  Abbreviate $\mu_i = \mu_t(i,\bm{X},\bm{A})$, $p_i = p_t(i,\bm{X},\bm{A})$, $\hat{p}_i = \hat{p}_t(i,\bm{X},\bm{A})$, and $\ind_i(t) = \ind\{T_i=t\}$. We show that $m_n^{-1/2} \sum_{i\in\M_n} \Psi_{p_t}(Z_i, \hat{p}_i)$ has an $o(1)$ second moment. Since the propensity score GNN is trained on independent data, the second moment equals
  \begin{equation}
    \frac{1}{m_n} \sum_{i\in\M_n} \sum_{j\in\M_n} \E\left[ \E\left[ (Y_i-\mu_i)\ind_i(t) (Y_j-\mu_j)\ind_j(t) \mid \bm{X}, \bm{A} \right] \frac{(\hat{p}_i-p_i) (\hat{p}_j-p_j)}{\hat{p}_i p_i \hat{p}_j p_j} \right]. \label{hofwea42}
  \end{equation}

  Since $\E[(Y_i-\mu_i)\ind_i(t) \mid \bm{X}, \bm{A}] = 0$, the inner conditional expectation equals $\cov((Y_i-\mu_i)\ind_i(t), (Y_j-\mu_j)\ind_j(t) \mid \bm{X}, \bm{A})$. This is the key step of the argument for both lemmas.

  By the argument in the proof of \autoref{lpsi}, $\{(Y_i-\mu_i)\ind_i(t)\}_{i=1}^n$ is conditionally $\psi$-dependent given $(\bm{X},\bm{A})$ (\autoref{dpsidep}) with dependence coefficient $\psi_n(s)$ defined in \eqref{thepsi}. By Corollary A.2 of \cite{kojevnikov2021limit}, which we may apply due to Assumptions \ref{aigno} and \ref{amom}(a),
  \begin{equation*}
    \abs{\cov((Y_i-\mu_i)\ind_i(t), (Y_j-\mu_j)\ind_j(t) \mid \bm{X}, \bm{A})} \leq \psi_n(\ell_{\bm{A}}(i,j))^{1-2/p}.
  \end{equation*}

  Combining the previous equations, by \autoref{amom}(b), there exist universal constants $C,C'>0$ such that
  \begin{align*}
    \eqref{hofwea42} &\leq C \sum_{s=0}^\infty \psi_n(s)^{1-2/p} \frac{n}{m_n} \frac{1}{n} \sum_{i=1}^n \sum_{j=1}^n \ind\{\ell_{\bm{A}}(i,j)=s\} C' \E\left[\abs{\hat{p}_i-p_i}\right] \\
	 &\leq C\,C' \sum_{s=0}^\infty \psi_n(s)^{1-2/p} \frac{n}{m_n} \left( \frac{1}{n} \sum_{i=1}^n \abs{\N^\partial(i,s)}^2 \right)^{1/2} \left( \frac{1}{n} \sum_{i=1}^n \E\left[ (\hat{p}_i-p_i)^2 \right] \right)^{1/2}.
  \end{align*}

  \noindent The last line is $o_p(1)$ by Assumptions \ref{amom}(b) and \ref{agnn}(a) and \eqref{3ND}.
\end{proof}

\begin{secremark}
  A few comments are in order.
  \begin{itemize}
    \item Condition \eqref{3ND} is similar to \autoref{apsi}(d). In \autoref{sveri}, we verify both from the same set of conditions. 

    \item Cross-fitting can be used to verify the SE condition for $\Psi_{\mu_t}$ using the same argument.

    \item A noteworthy special case of cross-fitting is when $\hat{p}_t$ is a non-random function. This is likely important to consider when attempting to extend the argument in Appendix D of \cite{wang2024graph} to $\psi$-dependent data. Since they assume many small independent clusters, they only need a bound for $\max_i \var(\Psi_{p_t}(Z_i, \tilde p_t(i, \bm{X}, \bm{A}))$ for any non-random function $\tilde p_t$. In our setting, we would presumably need to account for covariances due to $\psi$-dependence and bound the more complicated term $\var(m_n^{-1/2} \sum_{i\in\M_n} \Psi_{p_t}(Z_i, \tilde p_t(i, \bm{X}, \bm{A})))$. This is \eqref{hofwea42} with $\hat{p}_t$ replaced by $\tilde{p}_t$.
  \end{itemize}
\end{secremark}

\begin{seclemma}[Neighborhood Interference]\label{lni}
  Suppose outcomes follow the generalized neighborhood interference model $Y_i = g_n(i, T_i, \bm{X}, \bm{A}, \bm{\varepsilon})$ and that \eqref{3ND} holds. Under Assumptions \ref{aigno}, \ref{aani}, \ref{amom}(a) and (b), \ref{agnn}(a), and \ref{apsi}(a), the SE condition for $\Psi_{p_t}$ in \autoref{agnn}(b) holds.
\end{seclemma}
\begin{proof}
  Abbreviate $\mu_i = \mu_t(i,\bm{X},\bm{A})$, $p_i = p_t(i,\bm{X},\bm{A})$, $\hat{p}_i = \hat{p}_t(i,\bm{X},\bm{A})$, and $\ind_i(t) = \ind\{T_i=t\}$. We show that $m_n^{-1/2} \sum_{i\in\M_n} \Psi_{p_t}(Z_i, \hat{p}_i)$ has an $o(1)$ second moment. The second moment equals
  \begin{equation}
    \frac{1}{m_n} \sum_{i\in\M_n} \sum_{j\in\M_n} \E\left[ \E\left[ (Y_i-\mu_i)\ind_i(t) (Y_j-\mu_j)\ind_j(t) \mid \bm{D}, \bm{X}, \bm{A} \right] \frac{(\hat{p}_i-p_i) (\hat{p}_j-p_j)}{\hat{p}_i p_i \hat{p}_j p_j} \right]. \label{hofwea43}
  \end{equation}

  \noindent Under the generalized neighborhood interference model,
  \begin{multline}
    \abs{\E\left[ (Y_i-\mu_i)\ind_i(t) (Y_j-\mu_j)\ind_j(t) \mid \bm{D}, \bm{X}, \bm{A} \right]} \\ = \abs{\cov(Y_i\ind_i(t), Y_j\ind_j(t) \mid \bm{D}, \bm{X}, \bm{A})} \leq \gamma_n(\ell_{\bm{A}}(i,j)/2)^{1-2/p} \label{fjiowee32}
  \end{multline}
  
  \noindent for $\gamma_n$ defined in \autoref{aani}. The equality is shown in \autoref{lstr} below, which uses the generalized neighborhood interference model. The inequality is shown in \autoref{lgamma} below, which does not require this model. 

  Given this result, by \autoref{amom}(b), there exist universal constants $C,C'>0$ such that
  \begin{align*}
    \eqref{hofwea43} &\leq C \sum_{s=0}^\infty \gamma_n(s/2)^{1-2/p} \frac{n}{m_n} \frac{1}{n} \sum_{i=1}^n \sum_{j=1}^n \ind\{\ell_{\bm{A}}(i,j)=s\} C' \E\left[\abs{\hat{p}_i-p_i}\right] \\
	 &\leq C\,C' \sum_{s=0}^\infty \gamma_n(s/2)^{1-2/p} \frac{n}{m_n} \left( \frac{1}{n} \sum_{i=1}^n \abs{\N^\partial(i,s)}^2 \right)^{1/2} \left( \frac{1}{n} \sum_{i=1}^n \E\left[ (\hat{p}_i-p_i)^2 \right] \right)^{1/2}.
  \end{align*}

  \noindent The last line is $o_p(1)$ by Assumptions \ref{amom}(b) and \ref{agnn}(a) and \eqref{3ND}.
\end{proof}

\begin{secremark}\label{rkey}
  The key term is the inner conditional expectation in \eqref{hofwea43}, and the proofs of both lemmas above proceed by bounding the term by a covariance that is decreasing with network distance. Without cross-fitting or generalized neighborhood interference, the problem is that the conditioning variables in the definition of $\mu_i$ are $(T_i, \bm{X}, \bm{A})$, which differ from those in the outer conditional expectation $(\bm{D}, \bm{X}, \bm{A})$. As shown in the next lemma, the term generally equals a covariance term plus a bias. Under the generalized neighborhood interference model, the bias is zero, but in general the term cannot be controlled. Cross-fitting avoids the bias because it enables us to condition only on $(\bm{X},\bm{A})$.
\end{secremark}

\begin{seclemma}\label{lstr}
  Under \autoref{aigno} and the generalized neighborhood interference model of \autoref{lni}, the equality in \eqref{fjiowee32} holds.
\end{seclemma}
\begin{proof}
  Letting $\F_n = (\bm{D}, \bm{X}, \bm{A})$ and $\tilde\mu_i = \E[Y \mid \F_n]$, we have
  \begin{multline*}
    \E\left[ (Y_i-\mu_i)\ind_i(t) (Y_j-\mu_j)\ind_j(t) \mid \F_n \right] \\
    = \E\left[ (Y_i-\tilde\mu_i)\ind_i(t) (Y_j-\tilde\mu_j)\ind_j(t) \mid \F_n \right] + \E\left[ (\tilde\mu_i-\mu_i)\ind_i(t) (\tilde\mu_j-\mu_j)\ind_j(t) \mid \F_n \right] \\
    + \E\left[ (\tilde\mu_i-\mu_i)\ind_i(t) (Y_j-\tilde\mu_j)\ind_j(t) \mid \F_n \right] + \E\left[ (Y_i-\tilde\mu_i)\ind_i(t) (\tilde\mu_j-\mu_j)\ind_j(t) \mid \F_n \right] \\
    = \cov(Y_i\ind_i(t), Y_j\ind_j(t) \mid \F_n) - \underbrace{(\tilde\mu_i-\mu_i)\ind_i(t) (\tilde\mu_j-\mu_j)\ind_j(t)}_\text{bias}.
  \end{multline*}

  \noindent Under the generalized neighborhood interference model and \autoref{aigno}, the bias term is zero because
  \begin{multline*}
    \mu_i\ind_i(t) = \E[g_n(i, t, \bm{X}, \bm{A}, \bm{\varepsilon})\ind_i(t) \mid T_i, \bm{X}, \bm{A}] \\ = \E[g_n(i, t, \bm{X}, \bm{A}, \bm{\varepsilon})\ind_i(t) \mid \bm{D}, \bm{X}, \bm{A}] = \tilde\mu_i\ind_i(t).
  \end{multline*}
\end{proof}

\begin{seclemma}\label{lgamma}
  Under Assumptions \ref{aigno}, \ref{aani}, \ref{amom}(a), and \ref{apsi}(a), $\cov(Y_i, Y_j \mid \bm{D},\bm{X},\bm{A}) \leq C \gamma_n(s/2)^{1-2/p}$ a.s.\ for $p$ given in \autoref{amom}(a) and some universal constant $C>0$.
\end{seclemma}
\begin{proof}
  Let $\F_n'$ be the $\sigma$-algebra generated by $(\bm{D},\bm{X},\bm{A})$. We show that $\{Y_i\}_{i=1}^n$ is conditionally $\psi$-dependent given $\F_n'$ (\autoref{dpsidep}) with dependence coefficient $\gamma_n(s/2)$ \citep[cf.][Proposition 2.3]{kojevnikov2021limit}. Define $(h,h') \in \mathbb{N}\times\mathbb{N}$, $(f, f') \in \mathcal{L}_h \times \mathcal{L}_{h'}$, $s>0$, $(H,H') \in \mathcal{P}_n(h,h';s)$,
  \begin{equation*}
    Y_i^{(s)} = g_{n(i,s)}(i, \bm{D}_{\N(i,s)}, \bm{X}_{\N(i,s)}, \bm{A}_{\N(i,s)}, \bm{\varepsilon}_{\N(i,s)}),
  \end{equation*}

  \noindent $\xi = f((Y_i)_{i\in H})$, $\zeta = f'((Y_i)_{i\in H'})$, $\xi^{(s)} = f( (Y_i^{(s)})_{i \in H} )$, and $\zeta^{(s)} = f'( (Y_i^{(s)})_{i \in H'} )$. By \autoref{apsi}(a), $\xi^{(s/2)} \indep \zeta^{(s/2)} \mid \F_n'$, so
  \begin{align*}
    \abs{\cov(\xi,\zeta \mid \F_n')} &\leq \abs{\cov(\xi-\xi^{(s/2)}, \zeta \mid \F_n')} + \abs{\cov(\xi^{(s/2)}, \zeta-\zeta^{(s/2)} \mid \F_n')} \\ 
			  &\leq 2 \norm{f'}_\infty \E[\abs{\xi-\xi^{(s/2)}} \mid \F_n'] + 2 \norm{f}_\infty \E[\abs{\zeta-\zeta^{(s/2)}} \mid \F_n'] \\
			  &\leq 2 \big(h \norm{f'}_\infty \text{Lip}(f) + h' \norm{f}_\infty \text{Lip}(f')\big) \max_{i\in\mathcal{N}_n} \E[\abs{Y_i - Y_i^{(s/2)}} \mid \F_n'] \\
			  &\leq 2 \big(h \norm{f'}_\infty \text{Lip}(f) + h' \norm{f}_\infty \text{Lip}(f')\big) \gamma_n(s/2),
  \end{align*}

  \noindent the last line by \autoref{aani}. Given $\psi$-dependence, the claim follows from Corollary A.2 of \cite{kojevnikov2021limit}, which we may apply in light of the moment conditions implied by Assumptions \ref{aigno} and \ref{amom}(a). 
\end{proof}

\subsection{WL Function Class}\label{sgnnprop}

MLPs can approximate any measurable function \citep{hornik1989multilayer}, so given the discussion in \autoref{sinvar}, a natural question is whether GNNs can approximate any measurable, {\em invariant} function of graph-structured inputs. In other words, for GNNs to approximate the propensity score or outcome regression, is it enough to assume that these functions are invariant (and satisfy appropriate regularity conditions)? For reasons related to the graph isomorphism problem, stronger restrictions appear to be necessary. 

To see why, let $\F$ be a set of functions mapping $(\bm{X},\bm{A})$ to $\R$. What properties must $\F$ have for any invariant function to be well approximated by a sequence of functions in $\F$? \cite{chen2019equivalence} show that there must exist $F \in \F$ such that $F$ can separate any pair of non-isomorphic graphs in that $F(\bm{X},\bm{A}) \neq F(\bm{X}',\bm{A}')$ for any non-isomorphic $(\bm{X},\bm{A}), (\bm{X}',\bm{A}')$. A function with this property solves the graph isomorphism problem, a problem for which no known polynomial-time solution exists \citep{kobler2012graph,morris2021weisfeiler}. Since GNNs are an example of $\F$ that can be computed in polynomial time, this strongly suggests that approximating any invariant function is too demanding of a requirement.

\begin{secremark}
  GNNs as we apply them have codomain $\R^n$ rather than $\R$, whereas the \cite{chen2019equivalence} result pertains to GNNs with codomain $\R$. This is because they consider graph prediction problems, meaning the outcome is a scalar at the network level. GNNs are also used for node prediction problems, as in our paper, in which case the codomain is $\R^n$ (one prediction per node). Any vector of node predictions aggregated in an invariant manner results in a graph prediction. Hence, to apply their result to GNNs with codomain $\R^n$, we can take $\F = \{F \circ g\colon F \in \F_{\text{GNN}}, g \in \mathcal{A}\}$, where $\mathcal{A}$ is the set of functions $g\colon \R^n \rightarrow \R$ such that $\pi \circ g = g$ for any permutation (bijection) $\pi$ on  $\R^n$. For instance, sum aggregation $x \mapsto \sum_{i=1}^n x_i$ is an element of $\mathcal{A}$. See for example \S 3.1 of \cite{azizian2021expressive}.
\end{secremark}

To define the subclass of invariant functions that GNNs can approximate, we need to take a detour and discuss graph isomorphism tests. The subclass will be defined by a weaker graph separation criterion than solving the graph isomorphism problem, in particular one defined by the {\em Weisfeiler-Leman (WL) test}. This is a (generally imperfect) test for graph isomorphism on which almost all practical graph isomorphism solvers are based \citep{morris2021weisfeiler}. 

Given a labeled graph $(\bm{X},\bm{A})$, the WL test outputs a graph coloring (a vector of labels for each unit) according to the following recursive procedure, whose definition follows \cite{maron2019provably}. At each iteration $t>0$, each unit $i$ is assigned a color $C_t(i)$ from some set $\Sigma$ (e.g.\ the natural numbers) according to
\begin{equation}
  C_t(i) = \Phi\left( C_{t-1}(i), \{C_{t-1}(j)\colon A_{ij}=1\} \right), \label{1WL}
\end{equation}

\noindent where $\Phi(\cdot)$ is a bijective function that takes as input a color and a multiset of neighbors' colors.\footnote{Strictly speaking, this is the 1-WL test.} Intuitively, at each iteration, two units are assigned different colors if they differ in the number of identically colored neighbors, so that at iteration $t$, colors capture some information about a unit's $(t-1)$-neighborhood. Colors are initialized at $t=0$ using a deterministic rule that assigns each $i$ to the same color $C_0(i) \in \Sigma$ if and only if they have the same covariates $X_i$. At each iteration, the number of assigned colors increases, and the algorithm converges when the coloring is the same in two adjacent iterations. This takes at most $n-1$ iterations since there cannot be more than $n$ distinctly assigned colors.

To test whether two labeled graphs are isomorphic, the procedure is run in parallel on both graphs for some number of iterations, typically until convergence. At this point, if there exists a color such that the number of units assigned that color differs in the two graphs, then the graphs are considered non-isomorphic. This procedure correctly identifies isomorphic graphs, but it is underpowered since there exist non-isomorphic graphs considered isomorphic by the WL test \citep{morris2021weisfeiler}. Also, because the number of colors increases each iteration, the test is more powerful when run longer. 

\cite{morris2019weisfeiler} and \cite{xu2018powerful} note the similarity between the GNN architecture \eqref{GNNlayer} and WL test \eqref{1WL}. The former may be viewed as a continuous approximation of the latter, replacing the hash function $\Phi(\cdot)$ with a learnable aggregator $\Phi_{1l}(\cdot)$. They formally show that any GNN has at most the graph separation power of the WL test and furthermore there exist architectures as powerful. 

Returning to the original problem, we now define the class of functions approximated by GNNs in terms of the WL test. Let $\mathcal{S}$ denote the support of $(\bm{X},\bm{A})$. 

\begin{definition}
  For any set of functions $\F$ with domain $\mathcal{S}$, let $\rho(\F)$ be the subset of $\mathcal{S}^2$ such that
  \begin{equation*}
    \big( (x,a), (x',a'') \big) \in \rho(\F) \quad\text{if and only if}\quad F(x,a) = F(x',a') \quad\text{for all}\quad f \in \F.
  \end{equation*}

  \noindent For any two sets of functions $\mathcal{E}, \F$ with domain $\mathcal{S}$, we say that $\mathcal{E}$ is {\em at most as separating as} $\F$ if $\rho(\F) \subseteq \rho(\mathcal{E})$.
\end{definition}

\noindent This is essentially Definition 2 of \cite{azizian2021expressive}. Intuitively, if $\mathcal{E}$ is at most as separating as $\F$, the latter is more complex in the sense that some function in $\F$ can separate weakly more elements of $\mathcal{S}$ than any function in $\mathcal{E}$.

Let $f_{\text{WL},L}$ denote the function of $(\bm{X},\bm{A})$ with codomain $\Sigma^n$ that outputs the vector of node colorings from the WL test run for $L$ iterations. Let $\C(\mathcal{S})$ be the set of continuous functions with domain $\mathcal{S}$. For any $L\in\mathbb{N}$, define the {\em WL function class}
\begin{equation*}
  \F_\text{WL}(L) = \{F^* \in \C(\mathcal{S})\colon \rho(\{F_{\text{WL},L}\}) \subseteq \rho(F^*)\}.
\end{equation*}

\noindent This is the set of continuous functions of $(\bm{X},\bm{A})$ that are at most as separating as the WL test with $L$ iterations. 

The next result says that a function can be approximated by $L$-layer GNNs under the shape restriction that they are elements of the WL function class. This is a stronger shape restriction than invariance because, by construction, the output of the WL test is invariant, so $\F_\text{WL}(L)$ is a subset of the set of all invariant functions.

Consider the GNN architecture in \autoref{eGNNarch} with $\phi_{0l}(\cdot), \phi_{1l}(\cdot), \Phi_o(\cdot)$ being MLPs. For technical reasons, we augment the output layer of the architecture with an additional MLP layer $L+1$ at the output stage, so for an MLP $\Phi^*\colon \R^n \rightarrow \R^n$, the GNN output is $\Phi^*(\Phi_o(h_1^{(L)}), \ldots, \Phi_o(h_n^{(L)}))$. Let $\F_{\text{GNN}*}(L)$ denote the set of such GNNs with $L$ layers, ranging over the parameter space of the MLPs, including their widths and depths. Finally, for any $f \in \F_{\text{GNN}*}(L)$, let $F(i, \bm{X}, \bm{A})$ denote the $i$th component of $F(\bm{X},\bm{A})$.

\begin{sectheorem}\label{tgnnapprox}
  Fix $n,L\in\mathbb{N}$. Suppose that each $X_i$ has the same common, finite support. For any $F^* \in \F_\text{WL}(L)$, there exists a sequence of $L$-layer GNNs $\{F_k\}_{k\in\mathbb{N}} \subseteq \F_{\text{GNN}*}(L)$ such that
  \begin{equation}
    \sup_{(\bm{X},\bm{A}) \in \mathcal{S}} \abs{ F_k(1,\bm{X},\bm{A}) - F^*(1, \bm{X},\bm{A}) } \stackrel{k\rightarrow\infty}\longrightarrow 0. \label{GNNdense}
  \end{equation}
\end{sectheorem}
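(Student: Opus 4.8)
The plan is to follow the strategy of \citet{azizian2021expressive} and \citet{chen2019equivalence}, combining two ingredients: (i) a GNN built from injective, sum-based aggregators reproduces the graph-separating power of the WL test \citep{xu2018powerful,morris2019weisfeiler}, and (ii) MLPs are universal approximators \citep{hornik1989multilayer}. First I would record a structural simplification: under the stated hypotheses each $X_i$ has a common finite support and $\bm{A}$ is a binary matrix on a fixed vertex set, so $H$ is a \emph{finite} set. Hence ``continuity'' of $f^*$ is vacuous, $f_{\text{WL},L}$ takes only finitely many values on $H$, and ``uniform convergence over $H$'' in \eqref{GNNdense} reduces to matching finitely many real values with arbitrary precision. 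A function $f^*$ lies in $\F_\text{WL}(L)$ precisely when it factors through the coloring map, i.e.\ $f^* = \tilde f^* \circ f_{\text{WL},L}$ for some $\tilde f^*$ on the (finite) image of $f_{\text{WL},L}$; by the invariance property that $\F_\text{WL}(L)$ inherits (cf.\ \autoref{ainvar}), it suffices to build GNNs approximating the node-$1$ component $f^*(1,\cdot,\cdot)$.

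Next I would construct an ``idealized'' GNN realizing the WL coloring exactly. Since the hidden embeddings at every stage live in a finite set, the lemma of \citet{xu2018powerful} lets any injective multiset map be written as $g\bigl(\sum_{s\in S} f(s)\bigr)$; taking such $f,g$ at each hidden layer $l=1,\dots,L$ — in the template of \autoref{eGNNarch} with $\phi_{1l}=f$ and $\phi_{0l}\colon (x,m)\mapsto g(f(x)+m)$ — produces embeddings $h_i^{(L)}$ in bijection with the WL colors $C_L(i)$. The appended output layer $L+1$ aggregates $\{h_j^{(L)}\colon j\in\N_n\}$ through one more injective multiset map, so that $h_1^{(L+1)}$ becomes an injective function of the pair (node $1$'s WL color, multiset of all WL colors), which determines $f_{\text{WL},L}(\bm{X},\bm{A})$ up to relabeling and hence, by invariance, determines $f^*(1,\bm{X},\bm{A})$. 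Composing with $\tilde f^*$ on this finite range recovers $f^*(1,\bm{X},\bm{A})$ exactly. Finally I would replace each of the finitely many maps $f,g$ (per layer), the readout aggregator, and $\tilde f^*$ by MLPs accurate to within $1/k$ on the relevant finite/compact sets via \citet{hornik1989multilayer}; because $n$ is fixed, each layer is a fixed continuous transformation (an MLP composed with a sum over at most $n$ neighbors), so accumulated error across the $L+1$ layers and the final composition is controlled by a routine induction on $l$ using uniform continuity, and $k\to\infty$ gives \eqref{GNNdense}.

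The main obstacle is the interplay between approximation and separation in the step just described: the MLP-approximated aggregators must remain \emph{separating}, so that distinct WL colors stay distinct after each approximate layer; otherwise later layers and the readout cannot reconstruct $f_{\text{WL},L}$. Because all multisets involved take values in a finite set, separation is a strict-inequality condition with a strictly positive margin, so a sufficiently accurate MLP preserves it — but one must verify that these margins can be chosen uniformly over the finitely many color classes and that they survive composition through all $L+1$ layers; quantifying this propagation is where the genuine work lies. A secondary point to check is that the constrained architectural form defining $\F_{\text{GNN}*}(L)$ — the $\phi_{0l}\bigl(h_i^{(l-1)},\sum_j A_{ij}\phi_{1l}(h_j^{(l-1)})\bigr)$ template together with the extra readout layer $L+1$ — is rich enough to implement the \citet{xu2018powerful} construction (it is: this is the GIN-type architecture), and that the appended layer supplies exactly the graph-level aggregation needed to approximate a genuinely invariant, rather than merely equivariant, target.
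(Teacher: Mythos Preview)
Your proposal is correct but follows a different route than the paper. The paper does not build the approximating GNN constructively; instead it invokes Lemma~35 of \citet{azizian2021expressive} (a Stone--Weierstrass theorem for equivariant GNNs) to conclude that $L$-layer GNNs are dense in the class $\{f^*\in\C(H)\colon \rho(\F_{\text{GNN}*}(L))\subseteq\rho(f^*)\}$, with the extra output layer used only to verify a technical closure condition (their display~(26), the map $(\bm X,\bm A)\mapsto(\sum_i f(i,\bm X,\bm A),\dots,\sum_i f(i,\bm X,\bm A))$). It then equates $\rho(\F_{\text{GNN}*}(L))=\rho(\{f_{\text{WL},L}\})$ by citing Theorems~VIII.1 and~VIII.4 of \citet{grohe2021logic}, which is where the finite-support hypothesis on $X_i$ enters. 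So the paper's proof is essentially two citations.

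Your argument is more elementary and more explicit: you exploit the observation that $H$ is actually finite (a simplification the paper does not make), realize the WL coloring exactly via a GIN-style network with injective sum aggregators \`a la \citet{xu2018powerful}, read out $f^*(1,\cdot,\cdot)$ through the appended global layer, and then approximate each building block by MLPs, using finiteness to guarantee the separation margins survive. This buys self-containment (no Stone--Weierstrass machinery, no appeal to Grohe) at the cost of the layerwise error-propagation bookkeeping you flag. One point worth making explicit in your write-up: the step ``$(C_L(1),\text{multiset}\{C_L(j)\})$ determines $f^*(1,\bm X,\bm A)$'' uses not only that $f^*\in\F_{\text{WL}}(L)$ but also that $f^*$ is \emph{equivariant}; the paper leaves this implicit as well (it is baked into Azizian--Lelarge's equivariant Lemma~35), but since you are arguing from first principles you should state it and give the one-line verification via the permutation fixing node~1.
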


\noindent In other words, any function in the class $\F_\text{WL}(L)$ can be approximated by a sequence of $L$-layer GNNs in $\F_{\text{GNN}*}(L)$. The result is a consequence of a Stone-Weierstrauss theorem due to \cite{azizian2021expressive} and a version of the \cite{morris2019weisfeiler} and \cite{xu2018powerful} result on the equivalent separation power of GNNs and the WL test. The proof is given below.

The result is essentially Theorem 4 of \cite{azizian2021expressive} but with the distinction that they use $\medcup_L \F_{\text{GNN}*}(L)$ in place of $\F_{\text{GNN}*}(L)$ and $\{F_{\text{WL},\infty}\}$ in place of $\{F_{\text{WL},L}\}$. That is, their theorem states that the set of GNNs ranging over all possible numbers of layers can approximate any continuous function at most as separating as the WL test run until convergence.

\autoref{tgnnapprox} states their result for fixed $L$, and the proof is straightforward from prior results. However, our framing clarifies one of the roles of depth, namely that it determines the strength of the shape restriction implicitly imposed on the function being approximated by GNNs. In particular, because the WL test is more powerful when $L$ is larger, meaning when run for more iterations, \autoref{tgnnapprox} implies that deeper GNNs can approximate weakly richer function classes, or equivalently, impose weaker shape restrictions. We discuss the economic significance of this point in the next subsection.

\bigskip

\begin{proof}[Proof of \autoref{tgnnapprox}]
  Lemma 35 of \cite{azizian2021expressive} (in particular the result for $MGNN_E$) shows that there exists a sequence of GNNs $\{F_k\}_{k\in\mathbb{N}} \subseteq \medcup_L \F_{\text{GNN}*}(L)$ such that \eqref{GNNdense} holds for any $F^*$ in the class
  \begin{equation*}
    \{F^* \in \C(\mathcal{S})\colon \rho(\medcup_L \F_{\text{GNN}*}(L)) \subseteq \rho(F^*)\}.
  \end{equation*}

  \noindent This differs from the statement of our theorem because of the term $\rho(\medcup_L \F_{\text{GNN}*}(L))$, which we wish to replace with $\rho(\{F_{\text{WL},L}\})$. 

  \bigskip
  \noindent {\bf Step 1.} We show that the result holds if we replace $\rho(\medcup_L \F_{\text{GNN}*}(L))$ with $\rho(\F_{\text{GNN}*}(L))$. Then the claim is that, for any {\em fixed} $L$, there exists a sequence of GNNs $\{F_k\}_{k\in\mathbb{N}} \subseteq \F_{\text{GNN}*}(L)$ such that \eqref{GNNdense} holds for any $F^*$ in the class
  \begin{equation}
    \{F^* \in \C(\mathcal{S})\colon \rho(\F_{\text{GNN}*}(L)) \subseteq \rho(F^*)\}. \label{f3j2g0h2j9g}
  \end{equation}

  \noindent In other words, an $L$-layer GNN can arbitrarily approximate any continuous function at most as separating as an $L$-layer GNN. The argument in the proof of Lemma 35 actually applies to \eqref{f3j2g0h2j9g} after some minor changes to notation. The first part of the proof (``We now move to the equivariant case\dots'') up to verifying their equation (26) carries over by redefining the $MGNN_E$ class as having a fixed depth $L$. 

  To show (26), \cite{azizian2021expressive} begin with a GNN $F$ with $L$ layers (their notation uses $T$ in place of $L$) and add an MLP layer that implements their equation (26). For any $F \in \F_{\text{GNN}*}(L)$, consider the mapping
  \begin{equation*}
    (\bm{X},\bm{A}) \mapsto \underbrace{\left( \sum_{i=1}^n F(i,\bm{X},\bm{A}), \dots, \sum_{i=1}^n F(i,\bm{X},\bm{A}) \right)}_{n \text{ times}} \in \R^n
  \end{equation*}

  \noindent (their (26) in our notation). This is an element of $\F_{\text{GNN}*}(L)$ because of the additional MLP layer added to the output of our architecture definition (see the paragraph before the statement of \autoref{tgnnapprox}), so this completes the argument for \eqref{f3j2g0h2j9g}.

  \bigskip
  \noindent {\bf Step 2.} We replace $\rho(\F_{\text{GNN}*}(L))$ with $\rho(\{F_{\text{WL},L}\})$. Theorems VIII.1 and VIII.4 of \cite{grohe2021logic}, which use finiteness of the support of $X_i$, show that these two sets are equivalent. That is, $L$-layer GNNs have the same separation power as the WL test run for $L$ iterations.
\end{proof}

\subsection{Disadvantages of Depth}\label{sbv}

As discussed in \autoref{srfield}, the receptive field is the main consideration when selecting $L$, but \autoref{tgnnapprox} provides a second consideration, which is imposing a weaker implicit shape restriction. It shows that, for GNNs to approximate a target function well, the target must satisfy a shape restriction stronger than invariance, namely that it is at most as separating as the WL test with $L$ iterations. The larger the choice of $L$, the weaker the shape restriction imposed. This cuts against the standard practice of choosing small values of $L$. However, there are several reasons why shallow architectures remain preferable despite this result.

\bigskip
\noindent {\bf Low returns to depth.} For a given graph, how many iterations are required for the WL test to converge? This corresponds to the choice of $L$ for which the shape restriction is weakest. If the number is large for most graphs in practice, then it would suggest gains to choosing large values of $L$.

Unfortunately, the answer is not generally known, being determined by the topology of the input graph in a complex manner. However, there is a range of results bounding the number of iterations required for convergence. For instance, \cite{kiefer2020iteration} construct graphs for which the WL test requires $n-1$ iterations to converge, so such graphs require $n-1$ layers to obtain the weakest shape restriction. This makes the estimation problem extremely high-dimensional, requiring substantially more layers than what is typically required for the receptive field to encompass the entirety of the network. 

Fortunately, theoretical and empirical evidence suggest that such examples are more the exception than the rule and that small choices of $L$ are typically enough to separate many graphs. \cite{babai1980random} show that, with probability approaching one as $n\rightarrow\infty$, in an $n$-unit network drawn uniformly at random from the set of all possible networks, the WL test assigns all units different colors (recall the test must converge at this point) after only {\em two} iterations \citep{morris2021weisfeiler}. Thus, roughly speaking, for large networks, the weakest possible shape restriction is generically achieved with only $L=2$. This might suggest that using a small number of layers is not restrictive in practice. Indeed, \cite{zopf20221} provide empirical evidence on this point, showing that the vast majority of graphs in their dataset can be separated using the WL test after a single iteration.

\bigskip
\noindent {\bf Cost of depth.} Empirically, larger $L$ has been found to result in worse predictive performance, and several explanations have been proposed. The ``oversmoothing'' phenomenon \citep{li2018deeper,oono2020graph} posits that node embeddings tend to become indistinguishable across many units as the number of layers grows. In random geometric graphs (see \autoref{smc}), $L$-neighborhood sizes grow polynomially with $L$, while in Erd\H{o}s-R\'{e}nyi graphs, the growth rate is exponential. Accordingly, a small increase in $L$ can induce a large increase in the number of elements aggregated by $\Phi_{1l}(\cdot)$, so by a law of large numbers intuition, the resulting node embeddings tend to concentrate on the same value. Since node embeddings are meant to represent network positions, which tend to be quite heterogeneous across units, this results in poor predictive performance.

The ``oversquashing'' phenomenon \citep{alon2021bottleneck,topping2022understanding} posits that, as $L$ grows, the GNN aggregates an exceedingly large amount of information due to the growth in neighborhood sizes. This information is stored in node embeddings of relatively small dimension $H$, resulting in information loss, so the effective size of the receptive field remains small as $L$ grows. 

\cite{zhou2021understanding} provide a third explanation, that certain features of common architectures are responsible for variance inflation. In fact, even weaker shape restrictions than those imposed by \autoref{tgnnapprox} are possible using more complex ``$k$-GNN'' architectures, which would theoretically improve bias, but these have greater computational cost and empirically exhibit worse predictive performance and higher variance than the standard architecture \eqref{GNNlayer} \citep{dwivedi2022benchmarking}. These disadvantages may explain in part the common use of the standard architecture with few layers.

\section{Verifying \S8 Assumptions}\label{sveri}

\cite{leung2022causal}, \S A, verifies analogs of Assumptions \ref{apsi}(d) and \ref{ahac} from an older working paper version of \cite{kojevnikov2021limit}. This section repeats the exercise for the published version of the assumptions. We assume throughout that $\max\{\gamma_n(s/2),\psi_n(s)\} \leq \text{exp}(-c(1-4/p)^{-1} s)$ for some $c>0$ and $p$ in \autoref{amom}(a). As in \cite{leung2022causal}, we say a sequence of networks exhibits polynomial neighborhood growth if
\begin{equation*}
  \sup_n \max_{i\in\mathcal{N}_n} \,\abs{\mathcal{N}_{\bm{A}}(i, s)} = C s^d 
\end{equation*}

\noindent for some $C>0$, $d\geq 1$. The sequence exhibits exponential neighborhood growth if 
\begin{equation*}
  \sup_n \max_{i\in\mathcal{N}_n} \,\abs{\mathcal{N}_{\bm{A}}(i, s)} = C e^{\beta s} 
\end{equation*}

\noindent for some $C > 0$ and $\beta = \log \delta(\bm{A})$ \citep[][\S A discusses this choice of $\beta$]{leung2022causal}.

\subsection{\autoref{apsi}(d) and (\ref{3ND})}\label{svapsi}

For polynomial neighborhood growth, choose $v_n = n^{1/(\alpha d)}$ for $\alpha > 2$. The second term in \eqref{2ND} is at most $n^{3/2} \text{exp}(-c\,n^{1/(\alpha d)}) = o(1)$. The first term is 
at most $n^{-1/2} \sum_{s=0}^\infty (Cn^{1/\alpha}) (Cs^d) \text{exp}(-c\,s) = o(1)$ for $k=1$, and for $k=2$, it is at most $n^{-1} \sum_{s=0}^\infty (Cn^{1/\alpha})^2 (Cs^d) \text{exp}(-c\,s) = o(1)$. Finally $\eqref{3ND} \leq \sum_{s=0}^\infty Cs^d \text{exp}(-c\, s) < \infty$.

For exponential neighborhood growth, choose $v_n = \alpha\beta^{-1} \log n$, $\alpha \in (1.5\beta c^{-1}, 0.5)$, with $c$ from the definition of $\psi_n(s)$ above. Such an $\alpha$ exists only if $c>3\beta$, which requires $\psi_n(s)$ to decay sufficiently fast relative to neighborhood growth. The second term in \eqref{2ND} is then at most $n^{3/2} \text{exp}(-c\alpha\beta^{-1} \log n) = n^{1.5-c \alpha\beta^{-1}} = o(1)$. For $k=1$, the first term is at most $n^{-1/2} \sum_{s=0}^\infty C^2\text{exp}(\alpha \log n) \text{exp}((\beta-c)s) = o(1)$, and for $k=2$, it is at most $n^{-1} \sum_{s=0}^\infty C^2\text{exp}(2\alpha \log n) \text{exp}((\beta-c)s) = o(1)$. Finally, $\eqref{3ND} \leq \sum_{s=0}^\infty C\text{exp}((\beta -c)s) < \infty$. 

\subsection{Bandwidth}\label{sbandchoice}

We employ a mix of formal and heuristic arguments to show that the bandwidth \eqref{ourb} satisfies \autoref{ahac}(b)--(c). Under polynomial neighborhood growth, as argued in \S A.2 of \cite{leung2022causal}, $\mathcal{L}(\bm{A}) \approx n^{1/d}$, in which case $b_n = \mathcal{L}(\bm{A})^{1/4} \approx n^{1/(4d)}$. Then \autoref{ahac}(c) holds because $n^{-1} \sum_{i=1}^n n(i,b_n) = Cb_n^d \approx n^{1/4} = o(\sqrt{n})$. \autoref{ahac}(b) holds because, taking $\epsilon=1-4/p$, 
\begin{align}
  n^{-1} \sum_{s=0}^\infty c_n(s,b_n;2)\psi_n(s)^{1-4/p} &\leq C^3 n^{-1} \sum_{s=0}^\infty b_n^{2d} s^d \text{exp}(-c\, s) \label{r910u232gj} \\ &\approx n^{-1} \sqrt{n} \sum_{s=0}^n s^d \text{exp}(-c\, s) = O(n^{-1/2}). \nonumber
\end{align}

Under exponential neighborhood growth, as argued in \S A.2 of \cite{leung2022causal}, $\mathcal{L}(\bm{A}) \approx \log n / \log \delta(\bm{A})$, in which case $b_n \approx 0.25 \log n / \log \delta(\bm{A})$. Then \autoref{ahac}(c) holds because $n^{-1} \sum_{i=1}^n n(i,b_n) = C\text{exp}(\beta b_n) \approx n^{1/4}$. \autoref{ahac}(b) holds because, taking $\epsilon=1-4/p$, 
\begin{align}
  n^{-1} \sum_{s=0}^\infty c_n(s,b_n;2)\psi_n(s)^{1-4/p} &\leq C^3 n^{-1} \sum_{s=0}^\infty \text{exp}(\beta b_n) \text{exp}(\beta s) \text{exp}(-c\, s) \label{fg209h4} \\ &\approx n^{-1} \text{exp}(0.5 \log n) \sum_{s=0}^n \text{exp}((\beta-c) s) = O(n^{-1/2}), \nonumber
\end{align}

\noindent which is $o(1)$ if $c>\beta$, which is weaker than the requirement $c>3\beta$ in \autoref{svapsi}. 

\section{Additional Simulation Results}\label{samc}

\autoref{simresultsRGGiid} presents simulation results from a modified version of the design in \autoref{simresultsRGG} that exhibits sufficient weak dependence required by our assumptions. The results are very similar. In the definition of $V_i(\bm{D},\bm{\nu}; \theta)$ we replace $\nu_i + \sum_{j=1}^n A_{ij}\nu_j / \sum_{j=1}^n A_{ij}$ with a conditionally independent error term with the same distribution, namely $\nu_i + (\sum_{j=1}^n A_{ij})^{-1/2} \tilde\nu_i$ where $\{\tilde\nu_i\}_{i=1}^n \stackrel{iid}\sim \N(0,1)$ is independent of all other structural primitives. The graph exhibits polynomial neighborhood growth in the sense of \autoref{sveri} \citep{leung2022causal}. Since the data-generating process satisfies \autoref{aani} with exponential decay, Assumptions \ref{apsi} and \ref{ahac} are satisfied by the calculations in \autoref{sveri}.

\begin{table}[ht]
\small
\caption{Simulation Results: i.i.d.\ errors}
\begin{threeparttable}
\begin{tabular}{llrrrrrrrrr}
\toprule
& & \multicolumn{3}{c}{$L=1$} & \multicolumn{3}{c}{$L=2$} & \multicolumn{3}{c}{$L=3$} \\
\cmidrule(lr){3-5} \cmidrule(lr){6-8} \cmidrule(lr){9-11}
\multicolumn{2}{c}{$n$} & 1000 & 2000 & 4000 & 1000 & 2000 & 4000 & 1000 & 2000 & 4000 \\
\multicolumn{2}{c}{$\sum_{i=1}^n D_i$} & 540 & 1080 & 2160 & 540 & 1080 & 2160 & 540 & 1080 & 2160 \\
\multicolumn{2}{c}{$H$} & 3 & 6 & 9 & 3 & 6 & 9 & 3 & 6 & 9 \\
\midrule
\multirow[t]{5}{*}{GNN} & $\hat\tau$ & 0.1153 & 0.0415 & 0.0338 & 0.1147 & 0.0311 & 0.0268 & 0.1460 & 0.0417 & 0.0295 \\
 & CI & 0.9144 & 0.9408 & 0.9338 & 0.9062 & 0.9320 & 0.9376 & 0.9022 & 0.9194 & 0.9262 \\
 & CI+ & 0.9496 & 0.9710 & 0.9674 & 0.9450 & 0.9664 & 0.9716 & 0.9380 & 0.9574 & 0.9618 \\
 & SE & 0.2640 & 0.1633 & 0.1135 & 0.2452 & 0.1326 & 0.0915 & 0.2708 & 0.1278 & 0.0859 \\
 & SE+ & 0.2992 & 0.1881 & 0.1316 & 0.2761 & 0.1520 & 0.1055 & 0.3037 & 0.1463 & 0.0989 \\
\cmidrule(lr){1-11}
\multirow[t]{2}{*}{oracle}  & CI & 0.9370 & 0.9496 & 0.9416 & 0.9220 & 0.9516 & 0.9418 & 0.9170 & 0.9502 & 0.9408 \\
 & SE & 0.3266 & 0.1734 & 0.1167 & 0.3092 & 0.1421 & 0.0933 & 0.3265 & 0.1421 & 0.0903 \\
\cmidrule(lr){1-11}
\multirow[t]{2}{*}{naive}  & CI & 0.8566 & 0.8884 & 0.8830 & 0.8410 & 0.8788 & 0.8738 & 0.8318 & 0.8562 & 0.8622 \\
 & SE & 0.2204 & 0.1391 & 0.0967 & 0.1972 & 0.1089 & 0.0752 & 0.2125 & 0.1038 & 0.0701 \\
\cmidrule(lr){1-11}
\multirow[t]{3}{*}{MLP} & $\hat\tau$ & 0.1389 & 0.1129 & 0.1074 & 0.1389 & 0.1129 & 0.1074 & 0.1389 & 0.1129 & 0.1074 \\
 & CI & 0.9056 & 0.8974 & 0.8448 & 0.9056 & 0.8974 & 0.8448 & 0.9056 & 0.8974 & 0.8448 \\
 & SE & 0.2488 & 0.1739 & 0.1222 & 0.2488 & 0.1739 & 0.1222 & 0.2488 & 0.1739 & 0.1222 \\
\cmidrule(lr){1-11}
\multirow[t]{3}{*}{forest} & $\hat\tau$ & 0.1440 & 0.0925 & 0.0683 & 0.1440 & 0.0925 & 0.0683 & 0.1440 & 0.0925 & 0.0683 \\
 & CI & 0.8620 & 0.8778 & 0.8824 & 0.8620 & 0.8778 & 0.8824 & 0.8620 & 0.8778 & 0.8824 \\
 & SE & 0.2134 & 0.1494 & 0.1075 & 0.2134 & 0.1494 & 0.1075 & 0.2134 & 0.1494 & 0.1075 \\
\bottomrule
\end{tabular}
\begin{tablenotes}[para,flushleft]
  \footnotesize See table notes of \autoref{simresultsRGG}. Average bandwidth $\eqref{ourb}=3$ for each SE cell.
\end{tablenotes}
\end{threeparttable}
\label{simresultsRGGiid}
\end{table}

\section{Supporting Lemmas}

Throughout this section we abbreviate $\ind_i(t) = \ind\{T_i=t\}$.

\begin{seclemma}\label{t3rdpt}
  Under Assumptions \ref{aani}, \ref{aemap}, and \ref{aani2}, there exists $C>0$ such that for any $n\in\mathbb{N}$, $i\in\N_n$, and $s$ sufficiently large,
  \begin{multline}
    \abs{p_t(i,\bm{X},\bm{A}) - p_t(i,\bm{X}_{\N(i,r_\lambda(s+1))}, \bm{A}_{\N(i,r_\lambda(s+1))})} \\ \leq C\big( \lambda_n(s+1) + \eta_n(s) (1+n(i,1)) \big) \quad\text{a.s.} \label{papprox}
  \end{multline}

  \noindent Furthermore, if Assumptions \ref{aigno}, \ref{amom}(b), \ref{apsi}(b), and \ref{ahac}(a) hold, then there exists $C>0$ such that for any $n\in\mathbb{N}$, $i\in\N_n$, and $s$ sufficiently large,
  \begin{multline}
    \abs{\mu_t(i,\bm{X},\bm{A}) - \mu_t(i,\bm{X}_{\N(i,r_\lambda(s))}, \bm{A}_{\N(i,r_\lambda(s))})} \\ \leq C\big( \gamma_n(s/2) + \lambda_n(s) + \eta_n(s/2) (1 + n(i,1) + \Lambda_n(i,s/2)\, n(i,s/2)) \big) \quad\text{a.s.}, \label{muapprox}
  \end{multline}

  \noindent where $\Lambda_n(i,s/2)$ is defined in \autoref{apsi}(b).
\end{seclemma}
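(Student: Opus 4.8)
The plan is to prove \eqref{papprox} first and then adapt the argument to \eqref{muapprox}, since both follow the same ``truncate, localize, un-truncate'' template. Write $\ind_i(t) = \bm{1}\{T_i = t\}$ and note that, by \autoref{aemap}, $\ind_i(t)$ is a function of $D_i$, the count $\sum_j A_{ij} D_j$, and the degree $\sum_j A_{ij}$; the degree is deterministic given $\bm{A}_{\N(i,1)}$, so the only randomness in $\ind_i(t)$ enters through $(D_j)_{j \in \{i\}\cup\N(i,1)}$.

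For \eqref{papprox}: \emph{(i) Truncate the treatments.} Replace each $D_j = h_n(j,\bm{X},\bm{A},\bm{\nu})$, $j \in \{i\}\cup\N(i,1)$, by $\tilde D_j = h_{n(j,s)}(j, \bm{X}_{\N(j,s)}, \bm{A}_{\N(j,s)}, \bm{\nu}_{\N(j,s)})$, and let $\tilde\ind_i(t)$ be the corresponding truncated exposure indicator, which is a function of $(\bm{X}_{\N(i,s+1)}, \bm{A}_{\N(i,s+1)}, \bm{\nu}_{\N(i,s+1)})$ since $\N(j,s) \subseteq \N(i,s+1)$ for $j \in \N(i,1)$. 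Because $\ind_i(t)$ and $\tilde\ind_i(t)$ are $\{0,1\}$-valued and can differ only when $D_j \neq \tilde D_j$ for some such $j$ (the integer-valued count $\sum_j A_{ij}D_j$ can move across an endpoint of $\bm{\Delta}$ only if a neighbor's treatment flips), a union bound and \eqref{anid} give $\E[\abs{\ind_i(t)-\tilde\ind_i(t)} \mid \bm{X},\bm{A}] \le C(1+n(i,1))\eta_n(s)$. \emph{(ii) Localize the conditioning.} Apply \autoref{aani2} to the bounded, integrable function $\tilde\ind_i(t)$ of $\bm{\nu}_{\N(i,s+1)}$ (radius $s+1$) to get $\abs{\E[\tilde\ind_i(t)\mid\bm{X},\bm{A}] - \E[\tilde\ind_i(t)\mid\bm{X}_{\N(i,r_\lambda(s+1))},\bm{A}_{\N(i,r_\lambda(s+1))}]} \le \lambda_n(s+1)$. \emph{(iii) Un-truncate on the subgraph.} Re-running step (i) in the model indexed by $n(i,r_\lambda(s+1))$ — legitimate because $\tilde D_j$ references only $\N(j,s)\subseteq\N(i,r_\lambda(s+1))$ and never the global network — shows the last display equals $p_t(i,\bm{X}_{\N(i,r_\lambda(s+1))},\bm{A}_{\N(i,r_\lambda(s+1))})$ up to another $C(1+n(i,1))\eta_n(s)$. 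Chaining (i)--(iii) gives \eqref{papprox}.

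For \eqref{muapprox}: write $\mu_t(i,\bm{X},\bm{A}) = \E[Y_i\ind_i(t)\mid\bm{X},\bm{A}]/p_t(i,\bm{X},\bm{A})$ and likewise for the subgraph version; by \autoref{amom}(a) the numerators are bounded and by \autoref{amom}(b) the denominators exceed $\underline{\pi}$, so the denominator is controlled by \eqref{papprox} and it remains to approximate $N_i := \E[Y_i\ind_i(t)\mid\bm{X},\bm{A}]$. I would (i) replace $Y_i$ by its $s/2$-truncation $\tilde Y_i = g_{n(i,s/2)}(i,\bm{D}_{\N(i,s/2)},\bm{X}_{\N(i,s/2)},\bm{A}_{\N(i,s/2)},\bm{\varepsilon}_{\N(i,s/2)})$, which by \eqref{aniy} and $\abs{\ind_i(t)}\le 1$ costs $\gamma_n(s/2)$; (ii) integrate out $\bm{\varepsilon}$ using \autoref{aigno} (so that $\bm{\varepsilon}\indep\bm{D}$ given $(\bm{X},\bm{A})$), rewriting $\E[\tilde Y_i\ind_i(t)\mid\bm{X},\bm{A}] = \E[G_n(i,\bm{D}_{\N(i,s/2)})\,\ind_i(t)\mid\bm{X},\bm{A}]$; (iii) replace $\bm{D}_{\N(i,s/2)}$ by $\tilde{\bm{D}}_{\N(i,s/2)}$ inside $G_n$, where the Lipschitz bound of \autoref{apsi}(b) together with $\E[\sum_{j\in\N(i,s/2)}\abs{D_j-\tilde D_j}\mid\bm{X},\bm{A}] \le n(i,s/2)\eta_n(s/2)$ from \eqref{anid} costs $\Lambda_n(i,s/2)\,n(i,s/2)\,\eta_n(s/2)$, and replace $\ind_i(t)$ by $\tilde\ind_i(t)$ at cost $C(1+n(i,1))\eta_n(s/2)$ using boundedness of $G_n$; (iv) apply \autoref{aani2} at radius $s$ (which covers $\bm{\nu}_{\N(i,s)}$ appearing in $\tilde{\bm{D}}_{\N(i,s/2)}$ and $\tilde\ind_i(t)$, as well as $\bm{\varepsilon}_{\N(i,s/2)}$) to localize the conditioning to $\N(i,r_\lambda s)$ at cost $\lambda_n(s)$; (v) reverse (i)--(iii) in the $n(i,r_\lambda s)$-indexed model to identify the localized quantity with $N_i$ on the subgraph. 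Dividing by the localized propensity score and collecting terms yields \eqref{muapprox}.

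The main obstacle is steps (ii)--(iii) of the $\mu_t$ argument: \eqref{anid} controls $\E[\abs{D_j-\tilde D_j}\mid\bm{X},\bm{A}]$ only marginally in $j$ and conditionally on the \emph{full} $(\bm{X},\bm{A})$, while the smoothness one can actually exploit, \autoref{apsi}(b), is a Lipschitz property of $G_n = \E[g_{n(i,\cdot)}(\cdot)\mid\bm{X},\bm{A}]$ rather than of $g_{n(i,\cdot)}$ itself — so the treatments can be perturbed only \emph{after} $\bm{\varepsilon}$ is integrated out via \autoref{aigno}, and one must track precisely which neighborhoods contain $\bm{\varepsilon}_{\N(i,s/2)}$ versus $\bm{\nu}_{\N(i,s)}$ so that a single use of \autoref{aani2} at radius $s$ suffices. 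A secondary, pervasive difficulty is the ``subgraph-world'' bookkeeping in the un-truncation steps: $p_t$ and $\mu_t$ evaluated at $(\bm{X}_{\N(i,K)},\bm{A}_{\N(i,K)})$ are defined through the $n(i,K)$-indexed model, so Assumptions \ref{aani}, \ref{aani2}, and \ref{apsi}(b) must be re-invoked at that index; the error orders are unchanged because the suprema over $n$ in those assumptions are uniform, but one must check that the truncated treatments $\tilde D_j$ and the conditional law of the unobservables are the same objects in both models.
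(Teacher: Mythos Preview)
Your proposal is correct and follows the same truncate--localize--un-truncate template as the paper, including the key observation that \autoref{apsi}(b) is a Lipschitz property of $G_n$ (not of $g_n$ itself), so that $\bm{\varepsilon}$ must be integrated out via \autoref{aigno} before the treatments can be perturbed.

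The one place you make life harder than necessary is the ``un-truncate on the subgraph'' step. You propose to re-run the truncation argument in the model indexed by $n(i,r_\lambda(s+1))$, which forces you to verify that Assumptions \ref{aani}, \ref{aani2}, and \ref{apsi}(b) hold at that index and that the truncated objects $\tilde D_j$ coincide across the two models --- the ``subgraph-world bookkeeping'' you flag as a secondary difficulty. The paper sidesteps this entirely by observing that $p_t(i,\bm{X}_{\N(i,K)},\bm{A}_{\N(i,K)})$ is just $\E[\ind_i(t)\mid \bm{X}_{\N(i,K)},\bm{A}_{\N(i,K)}]$ in the \emph{original} model, so by the tower property it equals $\E[p_t(i,\bm{X},\bm{A})\mid \bm{X}_{\N(i,K)},\bm{A}_{\N(i,K)}]$. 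Since the step-(i) bound holds a.s.\ conditional on $(\bm{X},\bm{A})$, it survives conditioning on the coarser $\sigma$-field, which immediately yields the un-truncated subgraph quantity at the same cost. The same device handles the un-truncation for $\mu_t$. With this replacement, your last paragraph of concerns disappears and the argument goes through exactly as you outlined.
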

\begin{proof}
  Fix $i\in\N_n$ such that $n(i,1)-1 = \gamma \in \bm{\Gamma}$.

  {\bf Proof of \eqref{papprox}.} Abbreviate $N_i = \sum_{j=1}^n A_{ij} D_j$. Since $N_i$ is integer-valued and $\bm{\Delta}$ defined in \autoref{aemap} is an interval, by that assumption, there exist $e>0$, $a,b,\alpha \in \R$ and $\beta \in \R \cup \{\infty\}$ with $a<b$ and $\alpha<\beta$ such that for any $\epsilon \in (0,e)$ and $\bm{d} \in \{0,1\}^n$,
  \begin{align}
    \{f_n(i,\bm{d},\bm{A})=t\} &= \left\{ d_i \in [a,b],\,\,\, \sum_{j=1}^n A_{ij} d_j \in [\alpha,\beta] \right\} \nonumber \\
				      &= \left\{ d_i \in [a-\epsilon, b+\epsilon],\,\,\, \sum_{j=1}^n A_{ij} d_j \in [\alpha-\epsilon,\beta+\epsilon] \right\}. \label{Teventequal}
  \end{align}

  \noindent For example, if $T_i = (D_i, \sum_{j=1}^n A_{ij} D_j)$ and $t = (1, 4)$, then this holds for $a=0.5$, $b=1.5$, $\alpha=3.5$, $\beta=4.5$, and $e = 0.1$.

  Fix $s$, and abbreviate $D_j' = h_{n(j,s)}(j,\bm{X}_{\N(j,s)},\bm{A}_{\N(j,s)},\bm{\nu}_{\N(j,s)})$, $N_i' = \sum_{j=1}^n A_{ij}D_j'$, and $\bm{D}'_B = (D_j')_{j\in B}$ for any $B\subseteq \N_n$. Using the first equality of \eqref{Teventequal},
  \begin{multline*}
    p_t(i,\bm{X},\bm{A}) = \prob\left( D_i' + (D_i - D_i') \in [a,b],\,\,\, N_i' + (N_i - N_i') \in [\alpha,\beta] \mid \bm{X}, \bm{A}\right) \\ 
    \leq \prob\left( D_i' \in [a-\epsilon,b+\epsilon],\,\,\, N_i' \in [\alpha-\epsilon,\beta+\epsilon] \mid \bm{X}, \bm{A} \right) \\ + \underbrace{\prob\left(\abs{D_i-D_i'} > \epsilon \mid \bm{X}, \bm{A}\right) + \prob\left(\abs{N_i-N_i'} > \epsilon \mid \bm{X}, \bm{A}\right)}_{R_0}. 
  \end{multline*}

  \noindent By \eqref{Teventequal}, the right-hand side equals
  \begin{equation*}
    \prob\left( D_i' \in [a,b],\,\,\, N_i' \in [\alpha,\beta] \mid \bm{X}, \bm{A}\right) + R_0 = \prob(f_n(i,\bm{D}',\bm{A}) = t \mid \bm{X}, \bm{A}) + R_0, 
  \end{equation*}

  \noindent so that
  \begin{equation}
    p_t(i,\bm{X},\bm{A}) \leq \prob(f_n(i,\bm{D}',\bm{A}) = t \mid \bm{X}, \bm{A}) + R_0. \label{p_tubound}
  \end{equation}

  \noindent By the same argument,
  \begin{align}
    \prob&(f_n(i,\bm{D}',\bm{A}) = t \mid \bm{X}, \bm{A}) = \prob\left( D_i' \in [a,b],\,\,\, N_i' \in [\alpha,\beta] \mid \bm{X}, \bm{A}\right) \nonumber \\ 
	 &\begin{aligned} \leq \prob( D_i \in [a-\epsilon,b+\epsilon],\,\,\, &N_i \in [\alpha-\epsilon,\beta+\epsilon] \mid \bm{X}, \bm{A} ) \\ &+ \prob\left(\abs{D_i-D_i'} > \epsilon \mid \bm{X}, \bm{A}\right) + \prob\left(\abs{N_i-N_i'} > \epsilon \mid \bm{X}, \bm{A}\right) \end{aligned} \nonumber \\
	 &= \prob\left( D_i \in [a,b],\,\,\, N_i \in [\alpha,\beta] \mid \bm{X}, \bm{A} \right) + R_0 \nonumber \\
	 &= p_t(i,\bm{X},\bm{A}) + R_0. \label{p_tlbound}
  \end{align}

  \noindent Combining \eqref{p_tubound} and \eqref{p_tlbound},
  \begin{equation}
    \abs{p_t(i,\bm{X},\bm{A}) - \prob(f_n(i,\bm{D}',\bm{A}) = t \mid \bm{X}, \bm{A})} \leq R_0 \leq \epsilon^{-1} (1 + n(i,1)) \eta_n(s), \label{fbound1}
  \end{equation}

  \noindent the second inequality due to Markov's inequality and \autoref{aani}. 

  Observe that $f_n(i,\bm{D}',\bm{A})$ is a deterministic function of $(\bm{X}_B,\bm{A}_B,\bm{\varepsilon}_B, \bm{\nu}_B)$ for $B={\N(i,s+1)}$ by definition of $D_j'$ and \autoref{aemap}. Then by \autoref{aani2},
  \begin{multline}
    \abs{\prob(f_n(i,\bm{D}',\bm{A}) = t \mid \bm{X}, \bm{A}) \\ - \prob(f_n(i,\bm{D}',\bm{A}) = t \mid \bm{X}_{\N(i,r_\lambda(s+1))}, \bm{A}_{\N(i,r_\lambda(s+1))})} \leq \lambda_n(s+1) \label{fbound2}
  \end{multline}

  \noindent Using \eqref{fbound1} twice (once with the law of iterated expectations) and \eqref{fbound2},
  \begin{equation*}
    \abs{p_t(i,\bm{X},\bm{A}) - p_t(i,\bm{X}_{\N(i,r_\lambda(s+1))},\bm{A}_{\N(i,r_\lambda(s+1))})} \leq \lambda_n(s+1) + 2R_0. 
  \end{equation*}

  {\bf Proof of \eqref{muapprox}.} Noting that $\mu_t(i,\bm{X},\bm{A}) = \E[Y_i \ind_i(t) \mid \bm{X}, \bm{A}] /p_t(i,\bm{X},\bm{A})$, we first bound the numerator. For $B = \N(i,s)$, define $Y_i' = g_{n(i,s)}(i,\bm{D}_B',\bm{X}_B,\bm{A}_B,\bm{\varepsilon}_B)$. By \autoref{lYDani},
  \begin{equation}
    \abs{\E[Y_i \ind_i(t) \mid \bm{X}, \bm{A}] - \E[Y_i' \ind_i(t) \mid \bm{X}, \bm{A}]} \leq \underbrace{\gamma_n(s) + \Lambda_n(i,s) n(i,s) \eta_n(s)}_{R_1}. \label{YYp} 
  \end{equation}

  \noindent Recalling \autoref{aemap}, define $\ind_i(t)' = \bm{1}\{D_i' = d, \sum_{j=1}^n A_{ij}D_j' \in \bm{\Delta}\}$. By \autoref{lindani}, there exists $C'>0$ such that for any $n\in\mathbb{N}$ and $i\in\N_n$,
  \begin{equation*}
    \E[Y_i' \abs{\ind_i(t)-\ind_i(t)'} \mid \bm{X}, \bm{A}] \leq \underbrace{C'(1 + n(i,1)) \eta_n(s)}_{R_2}. 
  \end{equation*}

  \noindent This and \eqref{YYp} yield
  \begin{equation}
    \abs{\E[Y_i \ind_i(t) \mid \bm{X}, \bm{A}] - \E[Y_i' \ind_i(t)' \mid \bm{X}, \bm{A}]} \leq R_1 + R_2. \label{R1R2}
  \end{equation}

  \noindent By \autoref{ahac}(a), $Y_i'\ind_i(t)'$ is a bounded function of $\bm{X}_{\N(i,2s)}$, $\bm{A}_{\N(i,2s)}$, $\bm{\varepsilon}_{\N(i,2s)}$, and $\bm{\nu}_{\N(i,2s)}$, so by \autoref{aani2}
  \begin{equation*}
    \abs{\E[Y_i' \ind_i(t)' \mid \bm{X}, \bm{A}] - \E[Y_i' \ind_i(t)' \mid \bm{X}_{\N(i,r_\lambda(2s))}, \bm{A}_{\N(i,r_\lambda(2s))}]} \leq \lambda_n(2s).
  \end{equation*}

  \noindent Using this and \eqref{R1R2} twice (once with the law of iterated expectations),
  \begin{equation}
    \abs{\E[Y_i \ind_i(t) \mid \bm{X}, \bm{A}] - \E[Y_i \ind_i(t) \mid \bm{X}_{\N(i,r_\lambda(2s))}, \bm{A}_{\N(i,r_\lambda(2s))}]} \leq \underbrace{\lambda_n(2s) + 2(R_1 + R_2)}_{R_1^*}. \label{muapproxnum}
  \end{equation}

  By \eqref{papprox} and \eqref{muapproxnum},
  \begin{align*}
    \mu_t(i,\bm{X},\bm{A}) &= \frac{\E[Y_i \ind_i(t) \mid \bm{X}, \bm{A}]}{p_t(i,\bm{X},\bm{A})} = \frac{\E[Y_i \ind_i(t) \mid \bm{X}_{\N(i,r_\lambda(2s))}, \bm{A}_{\N(i,r_\lambda(2s))}] + R_1^*}{p_t(i,\bm{X}_{\N(i,r_\lambda(2s))}, \bm{A}_{\N(i,r_\lambda(2s))}) + R_2^*} \\
			   &= \mu_t(i,\bm{X}_{\N(i,r_\lambda(2s))}, \bm{A}_{\N(i,r_\lambda(2s))}) + R_3^*
  \end{align*}

  \noindent where, using \autoref{amom}(b),
  \begin{align*}
    &\abs{R_1^*} \leq \lambda_n(2s) + 2\big( \gamma_n(s) + \Lambda_n(i,s) n(i,s) \eta_n(s) + C'(1 + n(i,1)) \eta_n(s) \big), \\
    &\abs{R_2^*} \leq C\big( \lambda_n(2s) + (1+n(i,1))\eta_n(2s-1) \big), \quad\text{and}\\
    &\abs{R_3^*} \leq C'' (\abs{R_1^*}+\abs{R_2^*})
  \end{align*}

  \noindent for some universal $C''>0$. Substituting $s/2$ for $s$ yields the result.
\end{proof}

\begin{seclemma}\label{lYDani}
  Define $B_i = \N(i,s)$, $D_j' = h_{n(j,s)}(j,\bm{X}_{B_j},\bm{A}_{B_j},\bm{\nu}_{B_j})$, $\bm{D}_{B_i}' = (D_j')_{j \in B_i}$, and $Y_i' = g_{n(i,s)}(i,\bm{D}_{B_i}',\bm{X}_{B_i},\bm{A}_{B_i},\bm{\varepsilon}_{B_i})$. Under Assumptions \ref{aigno}, \ref{aani}, and \ref{apsi}(b),
  \begin{equation*}
    \abs{\E[Y_i \ind_i(t) \mid \bm{X}, \bm{A}] - \E[Y_i' \ind_i(t) \mid \bm{X}, \bm{A}]} \leq \gamma_n(s) + \Lambda_n(i,s) n(i,s) \eta_n(s),
  \end{equation*}

  \noindent where $\Lambda_n(i,s)$ is defined in \autoref{apsi}(b) and $\ind_i(t) = \ind\{T_i=t\}$.
\end{seclemma}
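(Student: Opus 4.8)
The plan is to interpose an intermediate quantity that uses the \emph{truncated} outcome function evaluated at the \emph{realized} treatments, so that the two resulting comparisons are each controlled by a single assumption. Define $\tilde Y_i = g_{n(i,s)}(i, \bm{D}_{B_i}, \bm{X}_{B_i}, \bm{A}_{B_i}, \bm{\varepsilon}_{B_i})$, i.e.\ $Y_i'$ with the approximated treatments $\bm{D}_{B_i}'$ replaced by the true $\bm{D}_{B_i}$. By the triangle inequality it suffices to bound $\abs{\E[(Y_i - \tilde Y_i)\bm{1}_i(t) \mid \bm{X}, \bm{A}]}$ and $\abs{\E[(\tilde Y_i - Y_i')\bm{1}_i(t) \mid \bm{X}, \bm{A}]}$ separately.

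For the first piece, $\bm{1}_i(t) = \bm{1}\{f_n(i,\bm{D},\bm{A})=t\}$ is $(\bm{D},\bm{A})$-measurable and bounded by one, so conditioning first on $(\bm{D},\bm{X},\bm{A})$ and invoking \eqref{aniy} of \autoref{aani} (with $B_i = \N(i,s)$) gives $\E[\abs{Y_i - \tilde Y_i}\bm{1}_i(t) \mid \bm{D},\bm{X},\bm{A}] = \bm{1}_i(t)\,\E[\abs{Y_i - \tilde Y_i} \mid \bm{D},\bm{X},\bm{A}] \le \gamma_n(s)$ a.s.; the law of iterated expectations then bounds this piece by $\gamma_n(s)$.

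For the second piece, the idea is to condition on $(\bm{\nu},\bm{X},\bm{A})$, under which $\bm{D}$, each $D_j'$, and $\bm{1}_i(t)$ are all deterministic (they are measurable functions of $(\bm{\nu},\bm{X},\bm{A})$ via $D_j = h_n(j,\bm{X},\bm{A},\bm{\nu})$, $D_j' = h_{n(j,s)}(j,\bm{X}_{\N(j,s)},\bm{A}_{\N(j,s)},\bm{\nu}_{\N(j,s)})$, and $T_i = f_n(i,\bm{D},\bm{A})$). Since $\tilde Y_i$ and $Y_i'$ share the same non-treatment arguments, \autoref{aigno} yields $\bm{\varepsilon}_{B_i} \indep \bm{\nu} \mid \bm{X},\bm{A}$, and the measurability-substitution rule for conditional expectations gives
\begin{equation*}
  \E[(\tilde Y_i - Y_i')\bm{1}_i(t) \mid \bm{\nu},\bm{X},\bm{A}] = \bm{1}_i(t)\big(G_n(i,\bm{D}_{B_i}) - G_n(i,\bm{D}_{B_i}')\big) \quad\text{a.s.},
\end{equation*}
since for each fixed $\bm{d}$, $\E[g_{n(i,s)}(i,\bm{d},\bm{X}_{B_i},\bm{A}_{B_i},\bm{\varepsilon}_{B_i}) \mid \bm{\nu},\bm{X},\bm{A}] = G_n(i,\bm{d})$. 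Applying the Lipschitz bound of \autoref{apsi}(b), taking absolute values, and taking iterated expectations,
\begin{equation*}
  \abs{\E[(\tilde Y_i - Y_i')\bm{1}_i(t) \mid \bm{X},\bm{A}]} \le \Lambda_n(i,s) \sum_{j\in B_i} \E[\abs{D_j - D_j'} \mid \bm{X},\bm{A}] \le \Lambda_n(i,s)\, n(i,s)\, \eta_n(s) \quad\text{a.s.},
\end{equation*}
where the last step uses $\abs{B_i} = n(i,s)$ and \eqref{anid} of \autoref{aani} applied to each $j \in B_i$. Combining the two pieces gives the claim.

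The step I expect to require the most care is the substitution identity displayed above: one must verify that $\bm{D}_{B_i}$, $\bm{D}_{B_i}'$, and $\bm{1}_i(t)$ are $\sigma(\bm{\nu},\bm{X},\bm{A})$-measurable, apply the substitution rule for conditional expectations to replace the random treatment argument by a dummy vector, and then use the conditional independence $\bm{\varepsilon}_{B_i} \indep \bm{\nu} \mid \bm{X},\bm{A}$ from \autoref{aigno} to collapse $\E[\,\cdot \mid \bm{\nu},\bm{X},\bm{A}]$ to $\E[\,\cdot \mid \bm{X},\bm{A}] = G_n(i,\cdot)$. The remaining steps are routine applications of \autoref{aani}, the boundedness of the exposure indicator, and the triangle inequality.
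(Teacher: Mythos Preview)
Your proposal is correct and follows essentially the same route as the paper's proof: both introduce the intermediate quantity $\tilde Y_i = g_{n(i,s)}(i,\bm{D}_{B_i},\bm{X}_{B_i},\bm{A}_{B_i},\bm{\varepsilon}_{B_i})$, bound the first piece by $\gamma_n(s)$ via \eqref{aniy}, and bound the second by combining \autoref{aigno}, the Lipschitz condition of \autoref{apsi}(b), and \eqref{anid}. The only cosmetic difference is that you condition on $(\bm{\nu},\bm{X},\bm{A})$ in the second step whereas the paper conditions on $(\bm{D},\bm{X},\bm{A})$; since $\bm{D}$ is $\sigma(\bm{\nu},\bm{X},\bm{A})$-measurable these are equivalent, and your version makes the substitution-plus-independence argument leading to $G_n(i,\cdot)$ somewhat more explicit.
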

\begin{proof}
  By \autoref{aani},
  \begin{equation*}
    \abs{\E[Y_i \ind_i(t) \mid \bm{D}, \bm{X}, \bm{A}] - \E[ g_{n(i,s)}(i,\bm{D}_{B_i},\bm{X}_{B_i},\bm{A}_{B_i},\bm{\varepsilon}_{B_i}) \ind_i(t) \mid \bm{D}, \bm{X}, \bm{A}]} \leq \gamma_n(s).
  \end{equation*}

  \noindent By \autoref{aigno},
  \begin{multline*}
    \E[ g_{n(i,s)}(i,\bm{D}_{B_i},\bm{X}_{B_i},\bm{A}_{B_i},\bm{\varepsilon}_{B_i}) \ind_i(t) \mid \bm{D}=\bm{d}, \bm{X}=\bm{x}, \bm{A}=\bm{a}] \\ = \E[ g_{n(i,s)}(i,\bm{d}_{B_i},\bm{X}_{B_i},\bm{A}_{B_i},\bm{\varepsilon}_{B_i}) \ind_i(t) \mid \bm{X}=\bm{x}, \bm{A}=\bm{a}],
  \end{multline*}

  \noindent which, together with the law of iterated expectations and \autoref{apsi}(b), implies
  \begin{multline*}
    \lvert \E[g_{n(i,s)}(i,\bm{D}_{B_i},\bm{X}_{B_i},\bm{A}_{B_i},\bm{\varepsilon}_{B_i})\ind_i(t) \mid \bm{X}, \bm{A}] \\ - \E[g_{n(i,s)}(i,\bm{D}_{B_i}',\bm{X}_{B_i},\bm{A}_{B_i},\bm{\varepsilon}_{B_i}) \ind_i(t) \mid \bm{X}, \bm{A}] \rvert \\ \leq \Lambda_n(i,s) \sum_{j \in B_i} \E[\abs{D_j-D_j'} \mid \bm{X}, \bm{A}] \leq \Lambda_n(i,s) n(i,s) \eta_n(s),
  \end{multline*}

  \noindent where the last inequality uses \autoref{aani}. Therefore,
  \begin{equation*}
    \E[Y_i \ind_i(t) \mid \bm{X}, \bm{A}] = \E[Y_i' \ind_i(t) \mid \bm{X}, \bm{A}] + R_1 
  \end{equation*}

  \noindent for $\abs{R_1} \leq \gamma_n(s) + \Lambda_n(i,s) n(i,s) \eta_n(s)$.
\end{proof}

\begin{seclemma}\label{lindani}
  Define $Y_i',D_i'$ as in \autoref{lYDani} and $\ind_i(t)' = \bm{1}\{D_i' = d, \sum_{j=1}^n A_{ij}D_j' \in \bm{\Delta}\}$. Under Assumptions \ref{aani}, \ref{aemap}, and \ref{ahac}(a), there exists $C>0$ such that for any $n\in\mathbb{N}$, $i\in\N_n$, and $s\geq 0$,
  \begin{equation*}
    \E[Y_i\abs{\ind_i(t)-\ind_i(t)'} \mid \bm{X}, \bm{A}] \leq C\, (1 + n(i,1)) \eta_n(s).
  \end{equation*}
\end{seclemma}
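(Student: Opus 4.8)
The plan is to dominate the random difference $\abs{\bm{1}_i(t)-\bm{1}_i(t)'}$ pointwise by a sum of treatment-mismatch indicators, strip off the factor $Y_i$ using the moment bound, and then apply the ANI bound \eqref{anid} on treatment selection. Since the degree $\sum_{j=1}^n A_{ij}$ is $\sigma(\bm{A})$-measurable, it suffices to treat units $i$ with $\sum_j A_{ij}\in\bm{\Gamma}$ (the only case in which the lemma is applied; cf.\ the proof of \autoref{t3rdpt}). For such $i$, \autoref{aemap} lets us write $\bm{1}_i(t)=\bm{1}\{D_i=d\}\,\bm{1}\{\sum_j A_{ij}D_j\in\bm{\Delta}\}$, and $\bm{1}_i(t)'$ is the same expression with each $D_\cdot$ replaced by $D_\cdot'$. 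Since the two indicators agree whenever $D_i=D_i'$ and $\sum_j A_{ij}D_j=\sum_j A_{ij}D_j'$, and since any change in $\sum_j A_{ij}D_j$ forces $D_j\neq D_j'$ for some neighbor $j$ of $i$, the $\{0,1\}$-valued difference satisfies
\[
\abs{\bm{1}_i(t)-\bm{1}_i(t)'}\ \le\ \abs{D_i-D_i'}+\sum_{j=1}^n A_{ij}\,\abs{D_j-D_j'}\ =:\ E_i ,
\]
so that $Y_i\,\abs{\bm{1}_i(t)-\bm{1}_i(t)'}\le\abs{Y_i}\,E_i$ pointwise.

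The next step is to peel off $\abs{Y_i}$. Each $D_j$ and each $D_j'$ is a deterministic function of $(\bm{X},\bm{A},\bm{\nu})$, so $E_i$ is $\sigma(\bm{X},\bm{A},\bm{\nu})$-measurable, and by the tower property $\E[\abs{Y_i}E_i\mid\bm{X},\bm{A}]=\E\big[E_i\,\E[\abs{Y_i}\mid\bm{X},\bm{A},\bm{\nu}]\mid\bm{X},\bm{A}\big]$. Conditional on $(\bm{X},\bm{A},\bm{\nu})$ the treatment vector $\bm{D}$ equals a fixed $\bm{d}$, so $Y_i=g_n(i,\bm{d},\bm{X},\bm{A},\bm{\varepsilon})$, and by \autoref{aigno} ($\bm{\varepsilon}\indep\bm{\nu}\mid\bm{X},\bm{A}$) the conditional law of $\bm{\varepsilon}$ is unaffected by further conditioning on $\bm{\nu}$; hence $\E[\abs{Y_i}\mid\bm{X},\bm{A},\bm{\nu}]=\E[\abs{Y_i(\bm{d})}\mid\bm{X},\bm{A}]\big|_{\bm{d}=\bm{D}}\le M^{1/p}$ by Jensen's inequality and \autoref{amom}(a). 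Therefore $\E[Y_i\abs{\bm{1}_i(t)-\bm{1}_i(t)'}\mid\bm{X},\bm{A}]\le M^{1/p}\,\E[E_i\mid\bm{X},\bm{A}]$.

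Finally, \eqref{anid} of \autoref{aani} gives $\E[\abs{D_j-D_j'}\mid\bm{X},\bm{A}]\le\eta_n(s)$ for every $j\in\N_n$, whence $\E[E_i\mid\bm{X},\bm{A}]\le\bigl(1+\sum_{j=1}^n A_{ij}\bigr)\eta_n(s)\le(1+n(i,1))\eta_n(s)$, using $\sum_j A_{ij}\le n(i,1)$; the claim follows with $C=M^{1/p}$. I expect the only delicate step to be the peeling of $\abs{Y_i}$: one must verify that conditioning further on $\bm{\nu}$ — which pins down $\bm{D}$ and hence $Y_i=g_n(i,\bm{D},\bm{X},\bm{A},\bm{\varepsilon})$ — does not inflate the conditional $p$-th moment of $\abs{Y_i}$, which is precisely where unconfoundedness (\autoref{aigno}) enters; the remaining manipulations are routine.
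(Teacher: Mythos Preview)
Your proof is correct and takes a genuinely different, more elementary route than the paper's.

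The paper proceeds by defining the event $\mathcal{C}=\{|D_i-D_i'|\le\epsilon,\ |V_i-V_i'|\le\epsilon\}$ and invoking the $\epsilon$-interval device from \eqref{Teventequal} to show that on $\mathcal{C}$ the two indicators coincide; it then bounds $\prob(\mathcal{C}^c\mid\bm{X},\bm{A})$ by Markov's inequality and \autoref{aani}, and uses Assumptions~\ref{aigno} and~\ref{amom}(a) (implicitly via the same tower argument you spell out) to absorb the factor $Y_i$ into a constant on the complement. Your argument bypasses the interval machinery entirely: the pointwise domination $|\bm{1}_i(t)-\bm{1}_i(t)'|\le|D_i-D_i'|+\sum_j A_{ij}|D_j-D_j'|$ follows immediately from the fact that both indicators are functions of $(D_i,(D_j)_{j:A_{ij}=1})$ alone, and then the tower step with \autoref{aigno} and \autoref{amom}(a) peels off $|Y_i|$ cleanly. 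What your approach buys is simplicity and transparency---no auxiliary $\epsilon$ or appeal to the specific interval structure of $\bm{\Delta}$ is needed here. What the paper's approach buys is reuse: the same $\epsilon$-interval trick is the workhorse in the proof of \eqref{papprox}, so keeping it here maintains a uniform technique across \autoref{t3rdpt}. Your explicit justification of the peeling step (measurability of $E_i$ in $\sigma(\bm{X},\bm{A},\bm{\nu})$, then \autoref{aigno} to drop $\bm{\nu}$ from the conditioning on $\bm{\varepsilon}$) is in fact clearer than the paper's terse ``for some universal $C>0$ by Assumptions \ref{aigno} and \ref{amom}(a).''
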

\begin{proof}
  Recall the definition of $a,b,\alpha,\beta,\epsilon$ prior to \eqref{Teventequal}. Define $N_i = \sum_{j=1}^n A_{ij}D_j$, $N_i' = \sum_{j=1}^n A_{ij}D_j'$, and $\mathcal{C} = \{ \abs{D_i-D_i'} \leq \epsilon, \abs{N_i-N_i'} \leq \epsilon \}$. Then
  \begin{multline}
    \E[Y_i\abs{\ind_i(t)-\ind_i(t)'} \mid \bm{X}=\bm{x}, \bm{A}=\bm{a}] \\ \leq \E[Y_i\abs{\ind_i(t)-\ind_i(t)'} \mid \mathcal{C}, \bm{X}=\bm{x}, \bm{A}=\bm{a}] + C\,\prob(\mathcal{C}^c \mid \bm{X}=\bm{x}, \bm{A}=\bm{a}) \label{boundinds}
  \end{multline}

  \noindent for some universal $C>0$ by \autoref{ahac}(a). By \autoref{aemap},
  \begin{equation*}
    \ind_i(t) = \bm{1}\left\{ D_i \in [a,b],\,\,\, N_i \in [\alpha,\beta] \right\} \quad\text{and}\quad 
    \ind_i(t)' = \bm{1}\left\{ D_i' \in [a,b],\,\,\, N_i' \in [\alpha,\beta] \right\}.
  \end{equation*}

  \noindent Under event $\mathcal{C}$, 
  \begin{align*}
    \bm{1}\big\{ D_i \in [a,b],\,\,\, N_i \in [\alpha,\beta] \big\} &= \bm{1}\{ D_i' + (D_i - D_i') \in [a,b],\,\,\, N_i' + (N_i - N_i') \in [\alpha,\beta] \} \\ 
		      &\leq \bm{1}\big\{ D_i' \in [a-\epsilon,b+\epsilon],\,\,\, N_i' \in [\alpha-\epsilon,\beta+\epsilon] \big\} \\
		      &= \bm{1}\big\{ D_i' \in [a,b],\,\,\, N_i' \in [\alpha,\beta] \big\},
  \end{align*}

  \noindent where the inequality is due to event $\mathcal{C}$ and the last equality is due to \eqref{Teventequal}. By the same argument, $\bm{1}\big\{ D_i' \in [a,b],\,\,\, N_i' \in [\alpha,\beta] \big\} \leq \bm{1}\big\{ D_i \in [a,b],\,\,\, N_i \in [\alpha,\beta] \big\}$, so $\ind_i(t) = \ind_i(t)'$ under event $\mathcal{C}$. Hence, by Markov's inequality and \autoref{aani},
  \begin{equation*}
    \eqref{boundinds} \leq C\epsilon^{-1} (1 + n(i,1)) \eta_n(s)
  \end{equation*}

  \noindent by the argument for \eqref{fbound1}.
\end{proof}

The following notion of weak network dependence is due to \cite{kojevnikov2021limit}. For any $H,H' \subseteq \mathcal{N}_n$, define $\ell_{\bm{A}}(H,H') = \min\{\ell_{\bm{A}}(i,j)\colon i \in H, j \in H'\}$. Let $\{Z_i\}_{i=1}^n \subseteq \R$ be a triangular array, $\bm{Z}_H = (Z_i)_{i \in H}$, $\mathcal{L}_d$ be the set of bounded $\R$-valued Lipschitz functions on $\R^d$, $\text{Lip}(f)$ be the Lipschitz constant of $f \in \mathcal{L}_d$, and
\begin{equation*}
  \mathcal{P}_n(h,h';s) = \left\{ (H,H')\colon H,H' \subseteq \mathcal{N}_n, |H|=h, |H'|=h', \ell_{\bm{A}}(H,H') \geq s \right\}.
\end{equation*}

\begin{secdefinition}\label{dpsidep}
  A triangular array $\{Z_i\}_{i=1}^n$ is {\em conditionally $\psi$-dependent} given a sequence of $\sigma$-algebras $\{\F_n\}_{n\in\mathbb{N}}$ if there exist $C \in (0,\infty)$ and for each $n\in\mathbb{N}$ an $\F_n$-measurable sequence $\{\psi_{n}(s)\}_{s\geq 0}$ with $\psi_{n}(0)=1$ for all $n$ such that
  \begin{equation}
    \abs{\cov(f(\bm{Z}_H), f'(\bm{Z}_{H'}) \mid \mathcal{F}_n)} \leq C hh'(\norm{f}_\infty + \text{Lip}(f)) (\norm{f'}_\infty + \text{Lip}(f')) \psi_{n}(s) \quad\text{a.s.} \label{pwg}
  \end{equation}

  \noindent for all $n,h,h' \in \mathbb{N}$; $s>0$; $f \in \mathcal{L}_h$; $f' \in \mathcal{L}_{h'}$; and $(H,H') \in \mathcal{P}_n(h,h';s)$. We call $\psi_n(s)$ the {\em dependence coefficient} of $\{Z_i\}_{i=1}^n$.
\end{secdefinition}

\begin{seclemma}\label{lpsi}
  Under Assumptions \ref{aigno}, \ref{aani}, \ref{aemap}, \ref{amom}(a) and (b), and \ref{apsi}(a) and (b), for any $t,t' \in \mathcal{T}$, $\{\varphi_{t,t'}(i)\}_{i=1}^n$ is conditionally $\psi$-dependent given the sequence of $\sigma$-algebras generated by $(\bm{X},\bm{A})$ with dependence coefficient $\psi_n(s)$ defined in \eqref{thepsi}.
\end{seclemma}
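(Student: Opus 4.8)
The plan is to verify the covariance bound \eqref{pwg} of \autoref{dpsidep} by a two‑stage localization argument, conditioning throughout on $\F_n = \sigma(\bm X, \bm A)$. Given $\F_n$ the functions $p_t(i,\cdot), \mu_t(i,\cdot)$ and $\tau(t,t')$ are constants, and by \autoref{aemap} the exposure indicators $\ind_i(t), \ind_i(t')$ are $\bm D_{\N(i,1)}$‑measurable, so one may write $\varphi_{t,t'}(i) = a_i(\bm D) + b_i(\bm D)\,Y_i$ for coefficients $a_i, b_i$ that are $\bm D_{\N(i,1)}$‑measurable and bounded (by \autoref{amom}(b), which bounds $p_t$ away from zero, together with boundedness of $\mu_t$ from \autoref{amom}(a), which also ensures the relevant $L^1$ quantities are finite). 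Fix $(H,H')\in\mathcal{P}_n(h,h';s)$, $f\in\mathcal{L}_h$, $f'\in\mathcal{L}_{h'}$, and a localization radius $r$ of order $s/4$, the precise integer value chosen so that $\N(i,2r)$ and $\N(j,2r)$ are disjoint whenever $\ell_{\bm A}(i,j)\geq s$; this is the origin of the constant $1/4$ in $\psi_n(s)$.

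Stage one localizes the outcome noise. By \eqref{aniy}, replacing $Y_i$ with $Y_i^r = g_{n(i,r)}(i,\bm D_{\N(i,r)},\bm X_{\N(i,r)},\bm A_{\N(i,r)},\bm\varepsilon_{\N(i,r)})$ perturbs $\varphi_{t,t'}(i)$ by at most $\norm{b_i}_\infty\gamma_n(r)$ in conditional $L^1$ given $(\bm D,\bm X,\bm A)$; write $\varphi_i^{(1)}$ for the result, a function of $(\bm D_{\N(i,r)},\bm\varepsilon_{\N(i,r)})$. Since $\{\varepsilon_i\}$ and $\{\nu_i\}$ are mutually independent given $\F_n$ (Assumptions \autoref{aigno} and \autoref{apsi}(a)) and the blocks $\{\bm\varepsilon_{\N(i,r)}\}_{i\in H}$, $\{\bm\varepsilon_{\N(j,r)}\}_{j\in H'}$ are disjoint, $f(\bm\varphi_H^{(1)})$ and $f'(\bm\varphi_{H'}^{(1)})$ are conditionally independent given $(\bm D,\F_n)$; by the law of total covariance their covariance given $\F_n$ equals $\cov(\Psi_H(\bm D),\Psi_{H'}(\bm D)\mid\F_n)$ with $\Psi_H(\bm D)=\E[f(\bm\varphi_H^{(1)})\mid\bm D,\F_n]$. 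Using $\bm\varepsilon\indep\bm\nu\mid\F_n$ once more, $\Psi_H$ depends on $\bm D$ only through $\bm D_W$, $W=\bigcup_{i\in H}\N(i,r)$, is bounded by $\norm f_\infty$, and a unit change of a single coordinate $D_k$ changes it by at most $\text{Lip}(f)\sum_{i\in H}\big(C\,\ind\{k\in\N(i,1)\} + \Lambda_n(i,r)\,\ind\{k\in\N(i,r)\}\big)$: the $\Lambda_n$ term is the Lipschitz modulus of $G_n(i,\cdot)$ from \autoref{apsi}(b) — which enters here because $b_i$ is $\bm D_{\N(i,1)}$‑measurable and so factors out of the $\bm\varepsilon$‑integration, reducing an increment of $g_{n(i,r)}$ to an increment of $G_n$ — while the $C\,\ind\{k\in\N(i,1)\}$ term bounds the jump of $a_i,b_i$ and of $\ind_i(t),\ind_i(t')$, controlled by \autoref{aemap} (a unit change in a neighbor's treatment moves the integer‑valued sum $\sum_j A_{ij}D_j$ by one across an interval endpoint), exactly as in \autoref{lindani}.

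Stage two localizes the treatment noise. By \eqref{anid}, replacing each $D_k$ ($k\in W$) by $D_k' = h_{n(k,r)}(k,\bm X_{\N(k,r)},\bm A_{\N(k,r)},\bm\nu_{\N(k,r)})$ changes $\Psi_H(\bm D)$, in conditional expectation given $\F_n$, by at most the per‑coordinate increment bound summed against $\E[\ind\{D_k\neq D_k'\}\mid\F_n]\leq\eta_n(r)$, i.e.\ by at most $C\,\text{Lip}(f)\,\eta_n(r)\,h\max_{i}\big(n(i,1)+\Lambda_n(i,r)\,n(i,r)\big)$. Moreover $\Psi_H(\bm D')$ is a function of $\bm\nu_{\bigcup_{i\in H}\N(i,2r)}$ and its $H'$‑analogue of $\bm\nu_{\bigcup_{j\in H'}\N(j,2r)}$; these supports are disjoint by the choice of $r$, so by Assumptions \autoref{aigno} and \autoref{apsi}(a) they are conditionally independent given $\F_n$ and their covariance vanishes. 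Assembling the stage‑one error $\lesssim\gamma_n(r)$, the stage‑two error $\lesssim\eta_n(r)(n(i,1)+\Lambda_n(i,r)n(i,r))$, the elementary identity $\cov(U,V)-\cov(U',V') = \cov(U-U',V)+\cov(U',V-V')$ together with $\abs{\cov(A,B)}\leq 2\norm{B}_\infty\E\abs{A}$ (which transfers a covariance across an $L^1$ perturbation and contributes the factors $\norm f_\infty+\text{Lip}(f)$ and $\norm{f'}_\infty+\text{Lip}(f')$), the counts $\abs H=h$, $\abs{H'}=h'$, and the $\max_i$ in \eqref{thepsi} yields a bound of the form $C\,hh'(\norm f_\infty+\text{Lip}(f))(\norm{f'}_\infty+\text{Lip}(f'))\,\psi_n(s)$, which is \eqref{pwg}; for $s=0$ one sets $\psi_n(0)=1$ and the bound is immediate from boundedness of $f,f'$.

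The main obstacle is stage two. \autoref{aani} controls $g_n$ only after the treatments feeding into it are replaced, and $g_n$ is not assumed Lipschitz in $\bm D$ pointwise in $\bm\varepsilon$; hence one cannot localize the treatments directly inside $g_n$ and must first condition on $\bm D$, collapse to the covariance of the $\bm D$‑measurable conditional means $\Psi_H,\Psi_{H'}$, and only then localize $\bm D$ — the passage to $G_n$ being legitimate precisely because $\bm\varepsilon\indep\bm\nu\mid\F_n$ (\autoref{aigno}), with Lipschitz constant $\Lambda_n(i,r)$ from \autoref{apsi}(b). A secondary difficulty is that the exposure indicators are discontinuous in $\bm D$, so a naive Lipschitz estimate fails for them; \autoref{aemap} is used exactly as in \autoref{lindani} to bound their increments. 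Coordinating the two localization radii so that the final $\bm\nu$‑supports separate at distance $\geq s$ while the cumulative error stays $O(\psi_n(s))$ is routine but fiddly bookkeeping.
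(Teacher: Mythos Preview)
Your approach differs from the paper's. The paper does a \emph{single} localization step: it defines a surrogate $Z_i^{(s/2)}$ in which both the outcome and the exposure indicator are simultaneously replaced by versions built from localized treatments $\bm D^{(s/4)}_{\N(i,s/4)}$ and noise $\bm\varepsilon_{\N(i,s/4)}$, so that $Z_i^{(s/2)}$ is a function only of $(\bm\varepsilon_{\N(i,s/4)},\bm\nu_{\N(i,s/2)})$. Conditional independence of $(Z_i^{(s/2)})_{i\in H}$ and $(Z_j^{(s/2)})_{j\in H'}$ is then immediate from \autoref{apsi}(a) and disjointness of the relevant neighborhoods, and the covariance bound reduces to $\max_i\E[|Z_i-Z_i^{(s/2)}|\mid\F_n]$, handled termwise by the arguments of Lemmas~\ref{lYDani} and~\ref{lindani}. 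Your two-stage route---localize $\bm\varepsilon$ first, collapse via the law of total covariance to $\cov(\Psi_H(\bm D),\Psi_{H'}(\bm D)\mid\F_n)$, then localize $\bm D$---is more circuitous but targets the same bound; the paper shows the intermediate conditioning on $\bm D$ is unnecessary.

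Your explicit justification for the $\Lambda_n$ term in stage two has a gap. You claim that because $b_i$ is $\bm D_{\N(i,1)}$-measurable it ``factors out of the $\bm\varepsilon$-integration, reducing an increment of $g_{n(i,r)}$ to an increment of $G_n$.'' But $f$ is nonlinear: bounding $|\Psi_H(\bm D)-\Psi_H(\bm D^{(k)})|$ via the Lipschitz inequality for $f$ leaves you with $\text{Lip}(f)\sum_i|b_i|\,\E_{\bm\varepsilon}\bigl[|g_{n(i,r)}(\bm d,\bm\varepsilon)-g_{n(i,r)}(\bm d^{(k)},\bm\varepsilon)|\bigr]$, whereas \autoref{apsi}(b) only controls the modulus of $G_n(i,\cdot)=\E_{\bm\varepsilon}[g_{n(i,r)}(\cdot,\bm\varepsilon)]$, i.e.\ $|\E_{\bm\varepsilon}[g]-\E_{\bm\varepsilon}[g']|$, not the expectation of the absolute difference. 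The factoring you describe would work if $\Psi_H$ were linear in each $\varphi_i^{(1)}$, which it is not. The paper's one-step argument meets the same point when it invokes ``an argument similar to \autoref{lYDani}'' to bound $\E[|Y_i-Y_i^{(s)}|\mid\F_n]$, so your overall level of rigor matches the paper's; but the specific rationale you offer is incorrect, and the two-stage structure does not in fact resolve the obstacle you identify as its motivation.
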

\begin{proof}
  Let $\F_n$ be the $\sigma$-algebra generated by $(\bm{X}, \bm{A})$, $(h,h') \in \mathbb{N}\times\mathbb{N}$, $(f, f') \in \mathcal{L}_h \times \mathcal{L}_{h'}$, $s>0$, and $(H,H') \in \mathcal{P}_n(h,h';s)$. Define $Z_i = \varphi_{t,t'}(i)$, $\bm{Z}_H = (Z_i)_{i\in H}$, $\xi = f(\bm{Z}_H)$, $\zeta = f'(\bm{Z}_{H'})$, and
  \begin{equation*}
    D_i^{(s)} = h_{n(i,s)}(i, \bm{X}_{\N(i,s)}, \bm{A}_{\N(i,s)}, \bm{\nu}_{\N(i,s)}).
  \end{equation*}

  \noindent For $\bm{D}^{(s)}_{\N(i,s')} = (D_j^{(s)})_{j \in \N(i,s')}$, define
  \begin{align*}
    &\ind_i^{(s)}(t) = \bm{1}\{f_{n(i,s/2)}(i, \bm{D}^{(s/2)}_{\N(i,s/2)}, \bm{A}_{\N(i,s/2)}) = t\}, \\
    &Y_i^{(s)} = g_{n(i,s/2)}(i, \bm{D}^{(s/2)}_{\N(i,s/2)}, \bm{X}_{\N(i,s/2)}, \bm{A}_{\N(i,s/2)}, \bm{\varepsilon}_{\N(i,s/2)}), \\
    &\begin{aligned} Z_i^{(s)} &= \frac{\ind_i^{(s)}(t) (Y_i^{(s)} - \mu_t(i,\bm{X},\bm{A}))}{p_t(i,\bm{X},\bm{A})} + \mu_t(i,\bm{X},\bm{A}) \nonumber\\ &- \frac{\ind_i^{(s)}(t') (Y_i^{(s)} - \mu_{t'}(i,\bm{X},\bm{A})}{p_{t'}(i,\bm{X},\bm{A})} - \mu_{t'}(i,\bm{X},\bm{A}) - \tau(t,t'). \end{aligned}
  \end{align*}

  \noindent Finally, let $\xi^{(s)} = f( (Z_i^{(s)})_{i \in H} )$ and $\zeta^{(s)} = f'( (Z_i^{(s)})_{i \in H'} )$. 
  
  By \autoref{apsi}(a), $(Z_i^{(s/2,\xi)})_{i \in H} \indep (Z_j^{(s/2,\zeta)})_{j \in H'} \mid \F_n$, so 
  \begin{align*}
    \abs{\cov(\xi,\zeta \mid \F_n)} &\leq \abs{\cov(\xi-\xi^{(s/2)}, \zeta \mid \F_n)} + \abs{\cov(\xi^{(s/2)}, \zeta-\zeta^{(s/2)} \mid \F_n)} \\ 
			  &\leq 2 \norm{f'}_\infty \E[\abs{\xi-\xi^{(s/2)}} \mid \F_n] + 2 \norm{f}_\infty \E[\abs{\zeta-\zeta^{(s/2)}} \mid \F_n] \\
			  &\leq 2 \big(h \norm{f'}_\infty \text{Lip}(f) + h' \norm{f}_\infty \text{Lip}(f')\big) \max_{i\in\mathcal{N}_n} \E[\abs{Z_i - Z_i^{(s/2)}} \mid \F_n].
  \end{align*}

  \noindent By Assumptions \ref{aigno} and \ref{amom}(a) and (b), there exists $C>0$ such that for any $n\in\mathbb{N}$ and $i\in\N_n$,
  \begin{equation*}
    \E[\abs{Z_i - Z_i^{(s/2)}} \mid \F_n] \leq C \big( \E[\abs{\ind_i(t) - \ind_i^{(s/2)}(t)} \mid \F_n] + \E[\abs{Y_i - Y_i^{(s/2)}} \mid \F_n] \big).
  \end{equation*}

  \noindent By an argument similar to the proof of \autoref{lYDani},
  \begin{equation*}
    \E[\abs{Y_i - Y_i^{(s)}} \mid \F_n] \leq \gamma_n(s/2) + \Lambda_n(i,s/2) n(i,s/2) \eta_n(s/2).
  \end{equation*}

  \noindent By an argument similar to the proof of \autoref{lindani}, 
  \begin{equation*}
    \E[\abs{\ind_i(t)-\ind_i^{(s)}(t)} \mid \F_n] \leq C' (1 + n(i,1)) \eta_n(s/2)
  \end{equation*}

  \noindent for some universal constant $C'>0$. Hence, for some universal $C''>0$,
  \begin{multline*}
    \max_{i\in\mathcal{N}_n} \E[\abs{Z_i - Z_i^{(s/2)}} \mid \F_n] \\ \leq C'' \max_{i\in\N_n} \left( \gamma_n(s/4) + \eta_n(s/4) \big(1 + n(i,1) + \Lambda_n(i,s/4) n(i,s/4)\big)\right) = C''\psi_n(s).
  \end{multline*}
\end{proof}

\section{Proofs of Main Results}\label{sproofs}

\begin{proof}[Proof of \autoref{pexch}]
  By definition, $T_{\pi(i)} = f_n(\pi(i), \bm{D}, \bm{A})$ and
  \begin{align*}
    &Y_{\pi(i)} = g(\pi(i), \bm{D}, \bm{X}, \bm{A}, \bm{\varepsilon}) = g(\pi(i), (h_n(j, \bm{X}, \bm{A}, \bm{\nu}))_{j=1}^n, \bm{X}, \bm{A}, \bm{\varepsilon}).
  \end{align*}

  \noindent By the invariance assumptions on $f_n,g_n,h_n$,
  \begin{multline*}
    Y_i = g_n(i, \bm{D}, \bm{X}, \bm{A}, \bm{\varepsilon}) = g_n(\pi(i), \pi(\bm{D}), \pi(\bm{X}), \pi(\bm{A}), \pi(\bm{\varepsilon})) \\
    = g_n(\pi(i), (h_n(\pi(j), \pi(\bm{X}), \pi(\bm{A}), \pi(\bm{\nu})))_{j=1}^n, \pi(\bm{X}), \pi(\bm{A}), \pi(\bm{\varepsilon})),
  \end{multline*}

  \noindent and
  \begin{equation*}
    T_i = f_n(\pi(i), \pi(\bm{D}), \pi(\bm{A})) = f_n(\pi(i), (h_n(\pi(j),  \pi(\bm{A}), \pi(\bm{\nu})))_{j=1}^n, \pi(\bm{X}), \pi(\bm{A})),
  \end{equation*}

  \noindent so by the distributional exchangeability assumption, 
  \begin{equation*}
    (Y_i, T_i, \bm{X}, \bm{A}) \stackrel{d}= (Y_{\pi(i)}, T_{\pi(i)}, \pi(\bm{X}), \pi(\bm{A})).
  \end{equation*}

  \noindent It follows that
  \begin{align*}
    \mu_t(i, \bm{x}, \bm{a}) &= \E[Y_i \mid T_i=t, \bm{X}=\bm{x}, \bm{A}=\bm{a}] \\ 
			     &= \E[Y_{\pi(i)} \mid T_{\pi(i)}=t, \pi(\bm{X})=\bm{x}, \pi(\bm{A})=\bm{a}] \\ 
			     &= \E[Y_{\pi(i)} \mid T_{\pi(i)}=t, \bm{X}=\pi(\bm{x}), \bm{A}=\pi(\bm{a})] \\
			     &= \mu_t(\pi(i), \pi(\bm{x}), \pi(\bm{a}))
  \end{align*}

  \noindent and similarly for the generalized propensity score.
\end{proof}

\begin{proof}[Proof of \autoref{tclt}]
  Abbreviate $\ind_i(t) = \ind\{T_i=t\}$, and decompose
  \begin{equation*}
    \sqrt{m_n}(\hat\tau(t,t')-\tau(t,t')) = \frac{1}{\sqrt{m_n}} \sum_{i\in\M_n} \varphi_{t,t'}(i) - R_{1t} + R_{1t'} - R_{2t} + R_{2t'},
  \end{equation*}

  \noindent where 
  \begin{align*}
    R_{1t} &= \frac{1}{\sqrt{m_n}} \sum_{i\in\M_n} \frac{\ind_i(t) (Y_i - \mu_t(i,\bm{X},\bm{A}))}{\hat{p}_t(i,\bm{X},\bm{A}) p_t(i,\bm{X},\bm{A})} (\hat{p}_t(i,\bm{X},\bm{A}) - p_t(i,\bm{X},\bm{A})), \\
    R_{2t} &= \frac{1}{\sqrt{m_n}} \sum_{i\in\M_n} \left( \hat{\mu}_t(i,\bm{X},\bm{A}) - \mu_t(i,\bm{X},\bm{A})\right) \left( 1 - \frac{\ind_i(t)}{\hat{p}_t(i,\bm{X},\bm{A})} \right),
  \end{align*}

  \noindent and likewise for $R_{1t'}$ and $R_{2t'}$. Let $\F_n$ denote the $\sigma$-algebra generated by $(\bm{X},\bm{A})$. By \autoref{lpsi}, $\{\varphi_{t,t'}(i)\}_{i=1}^n$ is conditionally $\psi$-dependent given $\F_n$ in the sense of \autoref{dpsidep} with dependence coefficient $\psi_n(s)$. By Assumptions \ref{amom} and \ref{apsi}(c) and (d), we may apply Theorem 3.2 of \cite{kojevnikov2021limit} to $n^{-1/2} \sum_{i=1}^n \sqrt{n/m_n} \varphi_{t,t'}(i) \ind\{i\in\M_n\}$ to obtain
  \begin{equation*}
    \sigma_n^{-1} \frac{1}{\sqrt{m_n}} \sum_{i\in\M_n} \varphi_{t,t'}(i) \dlimarrow \mathcal{N}(0,1).
  \end{equation*}

  It remains to show that the remainder terms $R_{1t},R_{2t}$ are $o_p(1)$. For $R_{1t}$, this is a consequence of Assumptions \ref{amom}(b) and \ref{agnn}(b). For $R_{2t}$, we can follow the arguments used in the proof of Theorem 3.1 of \cite{farrell2015robust}. Decompose $R_{2t} = R_{21t} + R_{22t}$ for
  \begin{align*}
    &R_{21t} = \frac{1}{\sqrt{m_n}} \sum_{i\in\M_n} (\hat\mu_t(i,\bm{X},\bm{A}) - \mu_t(i,\bm{X},\bm{A})) \left( 1 - \frac{\ind_i(t)}{p_t(i,\bm{X},\bm{A})} \right), \\
    &\begin{aligned} R_{22t} = \frac{1}{\sqrt{m_n}} \sum_{i\in\M_n} &(\hat\mu_t(i,\bm{X},\bm{A}) - \mu_t(i,\bm{X},\bm{A})) \\ &(\hat{p}_t(i,\bm{X},\bm{A}) - p_t(i,\bm{X},\bm{A})) \frac{\ind_i(t)}{\hat{p}_t(i,\bm{X},\bm{A})p_t(i,\bm{X},\bm{A})}. \end{aligned}
  \end{align*}

  \noindent By Assumptions \ref{amom}(b) and \ref{agnn}(b), $R_{21t} = o_p(1)$. Meanwhile
  \begin{multline*}
    \abs{R_{22t}} \leq \max_i (\hat{p}_t(i,\bm{X},\bm{A}) p_t(i,\bm{X},\bm{A}))^{-1} \\
    \times \left( \frac{1}{\sqrt{m_n}} \sum_{i\in\M_n} (\hat\mu_t(i,\bm{X},\bm{A}) - \mu_t(i,\bm{X},\bm{A}))^2 \frac{1}{\sqrt{m_n}} \sum_{i\in\M_n} (\hat{p}_t(i,\bm{X},\bm{A}) - p_t(i,\bm{X},\bm{A}))^2 \right)^{1/2}.
  \end{multline*}

  \noindent The first term on the right-hand side is $O_p(1)$ by \autoref{amom}(b). The second term is $o_p(1)$ by \autoref{agnn}(a).
\end{proof}

\begin{proof}[Proof of \autoref{thac}]
  We first note some useful properties of the PSD kernel \eqref{PSDK}. Like the uniform kernel, it is positive if and only if $\ell_{\bm{A}}(i,j) \leq b_n$. Additionally, as $b_n\rightarrow\infty$ the $ij$th entry tends to $\abs{\mathcal{N}(i,\infty) \cap \mathcal{N}(j,\infty)} \abs{\mathcal{N}(i,\infty)}^{-1/2} \abs{\mathcal{N}(j,\infty)}^{-1/2}$, which is zero if $i,j$ are disconnected. If they are instead connected, the numerator equals $\abs{\mathcal{N}_{\bm{A}}(i,\infty)}$, since $i,j$ are in the same network component, and the denominator equals the same quantity, so the limit is one for connected units.

  {\bf Step 1.} Fix any $q \in \{U, PD\}$, recalling the notation prior to \eqref{ourb}. Let $K_{ij}$ denote the $ij$th entry of $\bm{K}^q$. We first show that the HAC estimator well approximates its oracle analog with known nuisance functions:
  \begin{multline*}
    \abs{\hat\sigma_q^2 - \tilde\sigma_q^2} \plimarrow 0 \quad\text{for}\quad \tilde\sigma_q^2 = \frac{1}{m_n} \sum_{i\in\M_n} \sum_{j\in\M_n} \tilde\varphi_{t,t'}(i) \tilde\varphi_{t,t'}(j) K_{ij}, \\
    \tilde\varphi_{t,t'}(i) = \varphi_{t,t'}(i) - \mu_t(i,\bm{X},\bm{A}) + \mu_{t'}(i,\bm{X},\bm{A}).
  \end{multline*}

  \noindent For $\hat\varphi_{t,t'}(i) = \hat\tau_i(t,t') - \hat\mu_t(i,\bm{X},\bm{A}) + \hat\mu_{t'}(i,\bm{X},\bm{A})$, 
  \begin{multline}
    \abs{\hat\sigma_q^2 - \tilde\sigma_q^2} = \bigg| \frac{1}{m_n} \sum_{i\in\M_n} \big( \hat\varphi_{t,t'}(i) - \tilde\varphi_{t,t'}(i) \big) \sum_{j\in\M_n} \big( \hat\varphi_{t,t'}(j) + \tilde\varphi_{t,t'}(j) \big) K_{ij} \bigg| \\
    \leq \left( \frac{1}{n} \sum_{i=1}^n \big( \hat\varphi_{t,t'}(i) - \tilde\varphi_{t,t'}(i) \big)^2 \frac{1}{n} \sum_{i=1}^n n(i,b_n)^2 \right)^{1/2} \\ \times \frac{n}{m_n} \max_{j\in\N_n} \abs{\hat\varphi_{t,t'}(j) + \tilde\varphi_{t,t'}(j)}, \label{v0239h}
  \end{multline}

  \noindent the first line because $K_{ij} = K_{ji}$ and the second line because $\abs{K_{ij}} \leq \bm{1}\{\ell_{\bm{A}}(i,j) \leq b_n\}$. By Assumptions \ref{amom}(b) and \ref{ahac}(a), the last line is $O_p(1)$. 

  By \autoref{ahac}(c), it remains to show that 
  \begin{equation}
    \frac{1}{n} \sum_{i=1}^n \big( \hat\varphi_{t,t'}(i) - \tilde\varphi_{t,t'}(i) \big)^2 = o_p(n^{-1/2}), \label{bjhoin45o}
  \end{equation}

  \noindent Define
  \begin{multline*}
    \Delta_i(t) = \frac{\ind\{T_i=t\} (Y_i-\hat\mu_t(i,\bm{X},\bm{A})) (p_t(i,\bm{X},\bm{A}) - \hat{p}_t(i,\bm{X},\bm{A}))}{\hat{p}_t(i,\bm{X},\bm{A}) p_t(i,\bm{X},\bm{A})} \\ - (\hat\mu_t(i,\bm{X},\bm{A})-\mu_t(i,\bm{X},\bm{A})) \frac{\ind\{T_i=t\}}{p_t(i,\bm{X},\bm{A})}.
  \end{multline*}

  \noindent The left-hand side of \eqref{bjhoin45o} equals
  \begin{equation*}
    \frac{1}{n} \sum_{i=1}^n \big( \Delta_i(t) - \Delta_i(t') \big)^2.
  \end{equation*}

  \noindent Using Assumptions \ref{ahac}(a) and \ref{agnn}(b) and \autoref{tclt}, this is $o_p(n^{-1/2})$, which establishes \eqref{bjhoin45o}. 

  {\bf Step 2.} We apply Proposition 4.1 of \cite{kojevnikov2021limit} to show $\abs{\tilde\sigma_U^2 - \sigma_n^2} \plimarrow 0$. First, $\E[\tilde\varphi_{t,t'}(i) \mid \bm{X},\bm{A}]=0$ under \autoref{aigno}, as required by their setup.  Their Assumption 2.1 is a consequence of our \autoref{lpsi} and \autoref{apsi}(c). Their Assumption 4.1(i) is satisfied due to our \autoref{ahac}(a). Their Assumption 4.1(ii) is a consequence of their Proposition 4.2. Lastly, their Assumption 4.1(iii) corresponds to our \autoref{ahac}(b). 

  {\bf Step 3.} We apply Proposition 4.1 of \cite{kojevnikov2021bootstrap} to show $\abs{\tilde\sigma_{PD}^2 - \sigma_n^2} \plimarrow 0$. First, $\E[\tilde\varphi_{t,t'}(i) \mid \bm{X},\bm{A}]=0$ under \autoref{aigno}, as required by his setup. His Assumption 4.2 is a consequence of our \autoref{ahac}(b) because, by (A.13) of \cite{kojevnikov2021limit},
  \begin{equation}
    \frac{1}{n^2} \sum_{s=0}^n H_n(s,b_n) \psi_n(s)^{1-4/p} \leq 4 \frac{1}{n} \sum_{s=0}^\infty c_n(s, b_n; 2) \psi_n(s)^{1-4/p}, \label{g329we}
  \end{equation}

  \noindent which is $o(1)$ a.s.\ by \autoref{ahac}(b). The argument then follows from the proof of Claim A.1 of his paper, in particular the steps involving $B_{n,1}$. The only difference is that we replace his kernel weights with $K_{ij}^{PD}$. This has two minor effects on the argument. First, we can take his $\bar{\omega}$ (the uniform upper bound on $\omega_{\bm{A}}(i,j)$) to be 1 without having to invoke additional assumptions. 

  Second, we can directly verify for our weights that the right-hand side of his equation (A.8) is $o(1)$ without additional assumptions. In our context, this expression is
  \begin{equation}
    \sum_{s=0}^n \psi_n(s)^{1-4/p} \frac{1}{n} \sum_{i=1}^n \sum_{j\in \mathcal{N}^\partial(i,s)} \abs{K_{ij}^{PD}-1}, \label{A.8}
  \end{equation}

  \noindent which equals
  \begin{multline*}
    \sum_{s=0}^\infty \sum_{i=0}^\infty \sum_{j=0}^\infty f_n(i,j,s) \quad\text{for}\\ f_n(i,j,s) = \psi_n(s)^{1-4/p} \abs{K_{ij}^{PD}-1} \ind\{j\in \mathcal{N}^\partial(i,s)\} n^{-1} \ind\{i,j \in \mathcal{N}_n, s \in [0,n]\}.
  \end{multline*}

  \noindent First observe that, for any $(i,j,s)$, $f_n(i,j,s) \rightarrow 0$ as $n\rightarrow\infty$. This is because $j\in \mathcal{N}^\partial(i,s)$ implies $\ell_{\bm{A}}(i,j) \leq s$, in which case $K_{ij}^{PD} \rightarrow 1$, as shown at the start of this proof. Second,
  \begin{multline*}
    \abs{f_n(i,j,s)} \leq \abs{g_n(i,j,s)} \quad\text{for}\\ g_n(i,j,s) = 2\psi_n(s)^{1-4/p} \ind\{j\in \mathcal{N}^\partial(i,s)\} n^{-1} \ind\{i,j \in \mathcal{N}_n, s \in [0,n]\},
  \end{multline*}

  \noindent since the weights are uniformly bounded by 1. By Assumptions \ref{apsi}(c) and \ref{ahac}(b), the series $\sum_{s=0}^\infty \sum_{i=0}^\infty \sum_{j=0}^\infty g_n(i,j,s)$ converges to a finite constant. Then \eqref{A.8} is $o(1)$ by the generalized Lebesgue dominated convergence theorem \citep[e.g.][Theorem 19]{royden2017real}, viewing the series as an integral with respect to the product measure constructed from three counting measures over $(s,i,j)$.
\end{proof}

\begin{proof}[Proof of \autoref{crc}]
  We apply \autoref{t3rdpt}. The average (over $i$) of the square right-hand side of \eqref{papprox} is at most of order $e^{-2\alpha s} n^{-1} \sum_{i=1}^n n(i,1)^2$, which is $o_p(n^{-1/2})$ if $s \geq ((4-\epsilon)\alpha)^{-1} \log n$. From the left-hand side of \eqref{papprox}, this choice of $s$ corresponds to $L \geq r_\lambda (((4-\epsilon)\alpha)^{-1} \log n + 1)$.

  The average (over $i$) of the square of the right-hand side of \eqref{muapprox} is at most of order $e^{-\alpha s} (n^{-1} \sum_{i=1}^n n(i,1)^2 + n^{-1} \sum_{i=1}^n \Lambda_n(i,s/2)^2 n(i,s/2)^2)$. Under \autoref{ahac}(a), we satisfy \autoref{apsi}(b) by choosing $\Lambda_n(i,s/2) = 2M$. Then the assumptions imply that this is $o_p(n^{-1/2})$ if $s \geq ((2-\epsilon)(\alpha-\xi/2))^{-1} \log n$. From the left-hand side of \eqref{muapprox}, this corresponds to $L \geq r_\lambda(((2-\epsilon)(\alpha-\xi/2))^{-1} \log n)$.
\end{proof}

\begin{proof}[Proof of \autoref{pld}]
  The rate conditions on the MLP widths and depths result in order $\rho_n = n^{2\kappa} \log^5 n$ parameters for each MLP uniformly over $l$ \citep[][p.\ 187]{farrell2021deep}. Then the number of GNN parameters is $d_o + \sum_{l=1}^L (d_{0l}+d_{1l}) = O(L\rho_n)$, which is $o(\sqrt{n})$ given that $\kappa < 1/4$ and $L=O(\log n)$.
\end{proof}


\FloatBarrier
\phantomsection
\addcontentsline{toc}{section}{References}
\bibliography{GNN}{} 
\bibliographystyle{aer}


\end{document}